\definecolor{mygreen}{RGB}{20,140,80}
\definecolor{linkcolor}{RGB}{100,0,0}
\definecolor{mylightgray}{RGB}{230,230,230}
\definecolor{verylightgray}{RGB}{245,245,245}
\algnewcommand{\IIf}[1]{\State\algorithmicif\ #1\ \algorithmicthen}
\algnewcommand{\EndIIf}{\unskip\ \algorithmicend\ \algorithmicif}
\newcounter{myalgctr}
\newtcolorbox{OuterBox}[1][]{%
    breakable,
    enhanced,
    frame hidden,
    interior hidden,
    left=-5pt,
    right=-5pt,
    top=-5pt,
    float=p,
    boxsep=0pt,
    arc=0pt
#1}%
\newtcolorbox{InnerBox}[1][]{%
    enforce breakable,
    enhanced,
    colback=gray,
    colframe=white,
#1}%
\newenvironment{tbox}{
\vspace{0.2cm}
\begin{tcolorbox}[
                  enhanced,
                  boxsep=2pt,
                  left=1pt,
                  right=1pt,
                  top=4pt,
                  boxrule=1pt,
                  arc=0pt,
                  colback=white,
                  colframe=black,
	              breakable
                  ]
}{
\end{tcolorbox}
}
\newcommand{\tboxhrule}[0]{\vspace{0.1cm} {\color{black} \hrule} \vspace{0.2cm}}
\newenvironment{titledtbox}[1]{\begin{tbox}#1 \tboxhrule}{\end{tbox}}
\newenvironment{tboxalg}[1]{\refstepcounter{myalgctr}\begin{titledtbox}{\textbf{Algorithm \themyalgctr.} #1}}{\end{titledtbox}}
\newcommand\vldbdoi{XX.XX/XXX.XX}
\newcommand\vldbpages{XXX-XXX}
\newcommand\vldbvolume{16}
\newcommand\vldbissue{1}
\newcommand\vldbyear{2022}
\newcommand\vldbauthors{\authors}
\newcommand\vldbtitle{\shorttitle}
\newcommand\vldbavailabilityurl{https://github.com/cmuparlay/PIM-tree}
\newcommand\vldbpagestyle{plain}
\begin{document}

 \newcommand{\yan}[1]{{\color{magenta} {\bf Yan:} #1}}

\newcommand{\hongbo}[1]{}
\newcommand{\charlie}[1]{}
\newcommand{\laxman}[1]{}
\newcommand{\guy}[1]{}
\newcommand{\guyup}[1]{}
\newcommand{\phil}[1]{}
\newcommand{\yiwei}[1]{}

\definecolor{forestgreen}{rgb}{0.13, 0.55, 0.13}
\newcommand{\forestgreen}[1]{{\color{forestgreen}{#1}}}

\ifdefined\confversion
\newcommand{\revision}[1]{{\color{forestgreen} #1}}
\newcommand{\revisionweak}[1]{{\color{teal} #1}}
\else
\newcommand{\revision}[1]{{#1}}
\newcommand{\revisionweak}[1]{{#1}}
\fi

\newcommand{\conffulldifferent}[2]{\ifdefined\confversion#1\else{#2}\fi}

\newcommand{\confversiononly}[1]{\conffulldifferent{#1}{}}
\newcommand{\fullversiononly}[1]{\conffulldifferent{}{#1}}

\newcommand{\pimtree}{PIM-tree}
\newcommand{\shadowtree}{Shadow tree\xspace}
\newcommand{\pimbalance}{PIM-balance\xspace}
\newcommand{\pimbalanced}{PIM-balanced\xspace}
\newcommand{\pimimbalance}{PIM-imbalance\xspace}
\newcommand{\pimimbalanced}{PIM-imbalanced\xspace}

\newcommand{\naive}{na\"ive search}

\newcommand{\mb}[1]{{\mbox{\textit{#1}}}}
\newcommand{\smb}[1]{{\scriptsize \mbox{\textit{#1}}}}
\newcommand{\mf}[1]{{\mbox{\sc{#1}}}}
\newcommand{\whp}{\textit{whp}\xspace}

\newcommand{\note}[1]{{\textcolor{red}{Note: #1}}}
\newcommand{\polylog}[1]{\mbox{polylog($#1$)}}

\newcommand{\GET}{\textsc{Get}\xspace}
\newcommand{\UPDATE}{\textsc{Update}\xspace}
\newcommand{\UPSERT}{\textsc{Upsert}\xspace}
\newcommand{\INSERT}{\textsc{Insert}\xspace}
\newcommand{\DELETE}{\textsc{Delete}\xspace}
\newcommand{\SUCCESSOR}{\textsc{Successor}\xspace}
\newcommand{\PREDECESSOR}{\textsc{Predecessor}\xspace}
\newcommand{\SCAN}{\textsc{Scan}\xspace}
\newcommand{\RANGEQUERY}{\textsc{RangeQuery}\xspace}
\newcommand{\RANGEUPDATE}{\textsc{RangeUpdate}\xspace}
\newcommand{\RANGEOPERATION}{\textsc{RangeOperation}\xspace}
\newcommand{\BatchGET}{\textsc{BatchGet}\xspace}
\newcommand{\Function}{\textsc{Func}\xspace}

\newcommand{\skiplist}{skip list\xspace}
\newcommand{\skiplists}{skip lists\xspace}
\newcommand{\SkipList}{Skip List\xspace}
\newcommand{\SkipLists}{Skip Lists\xspace}
\newcommand{\bplustree}{B+-tree\xspace}
\newcommand{\bplustrees}{B+-trees\xspace}
\newcommand{\bst}{binary search tree\xspace}
\newcommand{\abtree}{(a,b)-tree\xspace}

\newcommand{\stoa}{state-of-the-art\xspace}

\newcommand{\addr}{\mb{addr}\xspace}

\newcommand{\LEFT}{\mb{left}\xspace}
\newcommand{\RIGHT}{\mb{right}\xspace}
\newcommand{\UP}{\mb{up}\xspace}
\newcommand{\DOWN}{\mb{down}\xspace}
\newcommand{\LOCALLEFT}{\mb{local-left}\xspace}
\newcommand{\LOCALRIGHT}{\mb{local-right}\xspace}
\newcommand{\NEXTLEAF}{\mb{next-leaf}\xspace}

\newcommand{\defn}[1]{\textit{\textbf{#1}}}
\newcommand{\mytitle}[1]{\smallskip{\bf #1.}}
\newcommand{\myparagraph}[1]{\smallskip\noindent {\bf #1}}
\newcommand{\mysubsection}[1]{\smallskip\noindent {\bf \large #1}}
\newcommand{\mystep}[1]{\smallskip\noindent {\bf #1}}

\newcommand{\tasksend}{\textsf{TaskSend}\xspace}
\newcommand{\remoteread}{\textsf{RemoteRead}\xspace}
\newcommand{\remotewrite}{\textsf{RemoteWrite}\xspace}

\newcommand{\floor}[1]{\left\lfloor #1 \right\rfloor}

\newcommand{\batch}{S}
\newcommand{\numCores}{32\xspace}
\newcommand{\numDPUs}{2048\xspace}
\newcommand{\numDPUsAll}{2048\xspace}
\newcommand{\numFanout}{16\xspace}
\newcommand{\batchSize}{1 million\xspace}
\newcommand{\scanBatchSize}{10 thousand\xspace}
\newcommand{\dataSize}{500 million\xspace}
\newcommand{\testSize}{100 million\xspace}
\newcommand{\scanTestSize}{1 million\xspace}

\newcommand{\PushPullSearch}{Push-Pull search\xspace}
\newcommand{\PushPull}{Push-Pull\xspace}
\newcommand{\Push}{Push\xspace}
\newcommand{\Pull}{Pull\xspace}
\newcommand{\PullOnly}{Pull-Only\xspace}
\newcommand{\PushOnly}{Push-Only\xspace}

\newcommand{\hide}[1]{}

\newcommand{\pimload}{{\sc PIM\_Load}\xspace}
\newcommand{\pimlaunch}{{\sc PIM\_Launch}\xspace}
\newcommand{\pimstatus}{{\sc PIM\_Status}\xspace}
\newcommand{\pimbroadcast}{{\sc PIM\_Broadcast}\xspace}
\newcommand{\pimscatter}{{\sc PIM\_Scatter}\xspace}
\newcommand{\pimgather}{{\sc PIM\_Gather}\xspace}

\newcommand{\SearchRequired}{{\sc SearchRequired}\xspace}
\newcommand{\FetchAll}{{\sc FetchAll}\xspace}

\newcommand{\githubLink}{\url{https://github.com/cmuparlay/PIM-tree}}
\newcommand{\arxivFullLink}{XXXX.XXXXX}

\confversiononly{
\title{\vspace{-0.19cm}PIM-tree: A Skew-resistant Index for Processing-in-Memory}
}
\fullversiononly{
\title{PIM-tree: A Skew-resistant Index for Processing-in-Memory}
}

\author{Hongbo Kang}
\email{khb20@mails.tsinghua.edu.cn}
\affiliation{%
  \institution{Tsinghua University}
}

\author{Yiwei Zhao}
\email{yiweiz3@andrew.cmu.edu}
\affiliation{
  \institution{Carnegie Mellon University}
}

\author{Guy E. Blelloch}
\email{guyb@cs.cmu.edu}
\affiliation{%
  \institution{Carnegie Mellon University}
}

\author{Laxman Dhulipala}
\email{laxman@umd.edu}
\affiliation{%
\institution{University of Maryland}
}

\author{Yan Gu}
\email{ygu@cs.ucr.edu}
\affiliation{%
    \institution{UC Riverside}
}

\author{Charles	McGuffey}
\email{cmcguffey@reed.edu}
\affiliation{%
  \institution{Reed College}
}

\author{Phillip B. Gibbons}
\email{gibbons@cs.cmu.edu}
\affiliation{%
  \institution{Carnegie Mellon University}
}

\begin{abstract}

The performance of today's in-memory indexes is bottlenecked by the
memory latency/bandwidth wall.  Processing-in-memory (PIM) is an
emerging approach that potentially mitigates this bottleneck, by
enabling low-latency memory access whose aggregate memory bandwidth
scales with the number of PIM nodes.  There is an inherent tension,
however, between minimizing inter-node communication and achieving
load balance in PIM systems, in the presence of workload skew.  This
paper presents \textit{PIM-tree}, an ordered index for PIM systems
that achieves both low communication and high load balance, regardless
of the degree of skew in the data and the queries.  Our skew-resistant
index is based on a novel division of labor between the multi-core
host CPU and the PIM nodes, which leverages the strengths of each.  We
introduce \textit{push-pull search}, which dynamically decides whether
to push queries to a PIM-tree node (CPU $\rightarrow$ PIM-node) or
pull the node's keys back to the CPU (PIM-node $\rightarrow$ CPU)
based on workload skew.  Combined with other PIM-friendly
optimizations (\textit{shadow subtrees} and \textit{chunked \skiplists}), our PIM-tree provides high-throughput, (guaranteed) low
communication, and (guaranteed) high load balance, for batches of
point queries, updates, and range scans.  

We implement the PIM-tree structure, in addition to prior proposed PIM
indexes, on the latest PIM system from UPMEM, with \numCores CPU cores
and \numDPUs PIM nodes.  On workloads with \dataSize keys and batches
of \batchSize queries, the throughput using PIM-trees is up to
$69.7\times$ and $59.1\times$ higher than the two best prior PIM-based methods.
As far as we know these are the first implementations of an ordered
index on a real PIM system.
\end{abstract}

\maketitle

\confversiononly{
\title{PIM-tree: A Skew-resistant Index for Processing-in-Memory}
}

\pagestyle{\vldbpagestyle}
\confversiononly{
\vspace{-.2cm}
\begingroup\small\noindent\raggedright\textbf{PVLDB Reference Format:}\\
\vldbauthors. \vldbtitle. PVLDB, \vldbvolume(\vldbissue): \vldbpages, \vldbyear.\\
\href{https://doi.org/\vldbdoi}{doi:\vldbdoi}
\endgroup
\begingroup
\renewcommand\thefootnote{}\footnote{\noindent
This work is licensed under the Creative Commons BY-NC-ND 4.0 International License. Visit \url{https://creativecommons.org/licenses/by-nc-nd/4.0/} to view a copy of this license. For any use beyond those covered by this license, obtain permission by emailing \href{mailto:info@vldb.org}{info@vldb.org}. Copyright is held by the owner/author(s). Publication rights licensed to the VLDB Endowment. \\
\raggedright Proceedings of the VLDB Endowment, Vol. \vldbvolume, No. \vldbissue ISSN 2150-8097. \\
\href{https://doi.org/\vldbdoi}{doi:\vldbdoi} \\
}\addtocounter{footnote}{-1}\endgroup

\ifdefempty{\vldbavailabilityurl}{}{
\vspace{.05cm}
\begingroup\small\noindent\raggedright\textbf{PVLDB Artifact Availability:}\\
\vspace{-.05cm}
The source code, data, and/or other artifacts have been made available at \url{\vldbavailabilityurl}.
\endgroup
}
}

\section{Introduction}
\label{sec:introduction}

The mismatch between CPU speed and memory speed (a.k.a.~the ``memory wall'') makes memory accesses the
dominant cost in today's data-intensive applications.
Traditional architectures use multi-level caches to reduce data
movement between CPUs and memory, but if the application exhibits limited locality,
most data accesses are still serviced by the memory.
This excessive data movement incurs significant energy cost,
and performance is bottlenecked by
high memory latency and/or limited memory bandwidth.


Processing-in-Memory (PIM)~\cite{DBLP:journals/corr/abs-2012-03112,
  upmem}, a.k.a.~near-data-processing, is emerging as a key technique
for reducing costly data movement.  By integrating computing resources
in memory modules, PIM enables data-intensive computation to be
executed in the PIM-enabled memory modules, rather than moving all
data to the CPU to process.  Recent studies have shown that, for
programs with high data-intensity and low cache-locality, PIM provides
significant advantages in increasing performance and reducing power
consumption by reducing data
movement~\cite{DBLP:conf/hpca/GiannoulaVPKFG021,gomez2021benchmarking}.
Although proposals for processing-in-memory/processing-in-storage date
back to at least 1970~\cite{Stone1970}, including forays by the
database community in active disks~\cite{DBLP:conf/vldb/RiedelGF98},
PIM is emerging today as a key technology thanks to advances in
3D-stacked memories~\cite{jeddeloh2012hybriddram,lee2014hbm} and the recent availability of commercial PIM
system prototypes~\cite{upmem}.
Typical applications exploiting state-of-the-art PIM architectures include
neural networks~\cite{angizi20188cmppim,li2019p3m,wang2020gnnpim,kim2021zpim},
graph processing
\confversiononly{~\cite{ahn2015tesseract, zhang2018graphp,huang2020heteropim},}
\fullversiononly{~\cite{ahn2015tesseract, zhang2018graphp,zhuo2019graphq,song2018graphr,nai2017graphpim,huang2020heteropim,newton2020pimgraphscc},}
databases~\cite{boroumand2016lazypim,boroumand2019conda},
sparse matrix multiplication~\cite{mutlu2022sparsep,xie2021spacea},
\confversiononly{and genome analysis~\cite{angizi2020pimassemble,zhang2021pimquantifier}.}
\fullversiononly{genome analysis~\cite{angizi2020pimassemble,zhang2021pimquantifier}
and security~\cite{gu2016leveraging,arafin2020pimsecurity,gupta2021invited}.}

PIM systems are typically organized as a host (multicore) CPU that
pushes compute tasks to a set of $P$ \textit{PIM modules}
(compute-enhanced memory modules), and collects the results.  Thus,
cost is incurred for moving both task descriptors and data---the sum
of these costs is the \textit{communication cost} between the CPU and
the PIM modules.  The host CPU can be any commodity multicore
processor, and is typically more powerful than the wimpy CPUs within
the PIM modules.  Thus, an interesting feature of a PIM system is the
opportunity to use both sets of resources (CPU side and PIM side) in
service of applications.

In this paper, we focus on designing a PIM-friendly ordered index for
in-memory data.  Ordered indexes (e.g., B-trees~\cite{comer1979ubiquitous}) are one of the
backbone components of databases/data stores, supporting efficient
search queries, range scans, insertions, and deletions.  Prior
works targeting PIM~\cite{liu2017concurrent,choe2019ndp} proposed ordered indexes based on
\textit{range partitioning}: the key space is partitioned into $P$
subranges of equal numbers of keys, and each of the $P$ PIM
modules stores one subrange.
Each PIM module maintains a local index over the keys in its subrange,
and the host CPU maintains the top portion of the index down to the
$P$ roots of the local indexes.  This approach works well for data and
queries with uniformly random keys---the setting studied by these
works---but it suffers from load imbalance under data or query skew.
In more realistic workloads, batches of queries/updates may
concentrate on the data in a small subset of the partitions,
overwhelming those PIM modules, while the rest are idle.  In the extreme,
only one PIM module is active processing queries and the rest are
idle, fully serializing an entire batch of queries on a single
(wimpy) processor.
The approach also suffers the cost of all data movements required to
keep partitions (roughly) balanced in size. In a recent 
paper~\cite{kang2021processing}, we designed a PIM-friendly \skiplist
that asymptotically overcomes this load-imbalance problem (details in
\Cref{sec:pim_balanced_index}), but the solution is not
practical (up to $69.7\times$ slower than the ordered index
we present in this paper).

To address the above challenges with query and data skew, we present
the \textit{PIM-tree}, a practical ordered index for PIM that
achieves both low communication cost and high load balance, regardless
of the degree of skew in data and queries.  Our skew-resistant
index is based on a novel division of labor between the host CPU and
PIM nodes, which leverages the strengths of each.  Moreover, it
combines aspects of both a \bplustree and a \skiplist to achieve its
goals.  We focus on achieving \textit{high-throughput}, processing
\textit{batches} of queries at a time in a bulk-synchronous
fashion. The PIM-tree supports a wide range of batch-parallel
operations, including point queries (\GET, \PREDECESSOR),
updates (\INSERT, \DELETE), and range \SCAN.

We introduce \textit{push-pull search}, which dynamically decides,
based on workload skew, whether (i) to \textit{push} queries from the CPU to a
PIM-tree node residing on a PIM module or (ii) to \textit{pull} the tree-node's
keys back to the CPU.  Combined with other PIM-friendly
optimizations---\textit{shadow subtrees} and \textit{chunked \skiplists}---our PIM-tree provides high-throughput, (guaranteed) low
communication costs, and (guaranteed) high load balance, for batches
of point queries, updates, and range scans.
\revision{For
example, each point query and update is
answered using only $O(\log_B \log_B{P})$ expected communication cost, 
where $B$ is the expected fanout of a PIM-tree node and $P$ is the number
of PIM modules, independent of the number of keys $n$ in the data structure, or
the data skew. Note that it would take over $10^{19}$
PIM modules for $\log_B \log_B{P}$ to exceed $1$, under our selection of $B = 16$; hence, the communication cost is constant in practice.
}

We implement the PIM-tree on the latest PIM system from
UPMEM~\cite{upmem}, with 32 CPU cores and \numDPUs PIM modules.
\fullversiononly{Our codes can be found at \githubLink.}
\revision{
We choose four \stoa ordered indexes as competitors, including
two PIM-friendly approaches~\cite{liu2017concurrent,kang2021processing}
implemented by ourselves, and two
traditional approaches\cite{DBLP:conf/ppopp/BronsonCCO10,brown2017thesis}
implemented in the SetBench benchmark suite~\cite{DBLP:conf/usenix/Arbel-Raviv0018}.
On workloads with \dataSize keys and batches of \batchSize queries,
the \pimtree{} achieves
(i) up to $59.1\times$ higher throughput than the range-partitioned
solution~\cite{liu2017concurrent},
(ii) up to $69.7\times$ higher throughput than the prior skew-resistant solution~\cite{kang2021processing},
and (iii) comparable throughput in all cases regardless of skew and as
low as $0.3\times$ less communication than two \stoa non-PIM indexes~\cite{brown2017thesis,DBLP:conf/ppopp/BronsonCCO10}.
}

The main contributions of the paper are:
\confversiononly{\vspace{-0.05in}}
\begin{itemize}[leftmargin=*]
\item
  We design the \pimtree{}, a high-throughput skew-resistant PIM data
  structure that efficiently supports a wide range of batch-parallel
  point queries, updates, and scans, even under highly-skewed workloads.
  It causes nearly constant data movement (communication cost) for a
  point query or update, and linear data movement
  for scans.  Key ideas include push-pull search and shadow subtrees.
\item
  We implement and evaluate the PIM-tree on a commercial PIM system
  prototype, demonstrating significant performance improvements at
  modest skew, and performance gains that increase linearly with larger
  skew. As far as we know these are the first implementations of an
  ordered index on a real PIM system.
\end{itemize}

\section{Background}

\subsection{PIM System Architecture \revision{and Model}}

\revision{\myparagraph{The Processing-in-Memory Model.}
We use the \emph{Processing-in- Memory Model (PIM Model)} (first described in \cite{kang2021processing})
as an abstraction of generic PIM systems.}
It is comprised of a host CPU front end
(\textit{CPU side}) and a collection of \revision{$P$} \textit{PIM modules}
(\textit{PIM side}). The CPU side is a standard multicore
processor, with an on-chip cache \revision{of $M$ words}.
Each PIM module is comprised of a DRAM memory bank (\textit{local PIM memory}) with an
on-bank processor (\textit{PIM processor}) and a local memory
\revision{of $\Theta(n/P)$ words (where $n$ denotes the problem size)}.
The PIM processor is
simple but general-purpose (e.g., a single in-order core capable of
running C code). The CPU host can send code to the PIM modules, launch
the code, and detect when the code completes. It can also
send data to and receive data from PIM memory. The model assumes there is no direct
PIM-to-PIM communication, although we could take advantage of such
communication on PIM systems supporting it.


\revision{
    As the PIM model combines a shared-memory side (CPU and its cache) and
    a distributed side (PIM modules), algorithms are analyzed using
    both shared-memory metrics (work, depth) and distributed metrics (local work, communication time).
    On the CPU side, the model accounts for \defn{CPU work} (total work summed over all cores)
    and \defn{CPU depth} (all work on the critical path).
    On the PIM side, the model accounts for
    \defn{PIM time}, which is the maximum local work on any one PIM core,
    and \defn{IO time}, which is the maximum number of messages to/from
    any one PIM module.\footnote{There is no separate accounting needed for messages to/from the CPU side
    because any well-balanced system should provide bandwidth out of the host CPU that matches bandwidth into the PIM modules (and vice-versa).}
    Programs execute in bulk-synchronous rounds~\cite{valiant1990bridging},
    and the overall complexity metrics of an algorithm is the sum of the complexity metrics of each round.
    We focus on IO time and IO rounds in this paper.
}

\myparagraph{\revision{Programming Interface.}}
For concreteness, we assume the following programming interface for our generic PIM system,
although our techniques would also work with other interfaces.
Programs consist of two parts: a \textit{host program} executed on the host CPU,
and a \textit{PIM program} executed on PIM modules.
The host program has additional functions (discussed below) to
communicate with the PIM side, including functions to invoke PIM programs on PIM modules
and to transfer data to/from PIM modules. The
PIM program is a traditional program (no additional functions) that is
invoked in all PIM processors when launched by the host program.  It
executes using the module's local memory, with no visibility into the
CPU side or other PIM modules. The specific functions are (named MPI-style~\cite{gropp1999using}):

\begin{itemize}[leftmargin=*]
    \item \textbf{PIM\_Load(}PIM\_Program\_Binary\textbf{)}: loads a binary file to the PIM modules.
    \item \textbf{PIM\_Launch()}: launches the loaded PIM
      program on all PIMs.
    \item \textbf{PIM\_Status()}: checks whether the PIM program has
      finished on all PIMs.
    \item \textbf{PIM\_Broadcast(}src, length, PIM\_Local\_Address\textbf{)}: copies a fixed length buffer to the same local memory address in each PIM module.
    \item \textbf{PIM\_Scatter(}srcs[], length[], PIM\_Local\_Address\textbf{)}: similar to \pimbroadcast, but with a distinct buffer and length for each PIM module.
    \item \textbf{PIM\_Gather(}dsts[], length[], PIM\_Local\_Address\textbf{)}: the reverse of \pimscatter, reading into the buffer array dsts[].
\end{itemize}

\confversiononly{\vspace{-0.2in}}
\begin{tboxalg}{Batch-parallel Execution($O$: batch of operations)}\label{alg:pattern}\vspace{-0.05in}
Repeat the following steps until done processing $O$:
	\begin{enumerate}[ref={\arabic*}, topsep=1pt,itemsep=0ex,partopsep=0ex,parsep=1ex, leftmargin=*]
        \item Prepare a buffer of tasks for each PIM module.
        \item Scatter the task buffers to the local memory of each PIM module using either \pimscatter or \pimbroadcast.
        \item Launch PIM programs using \pimlaunch, to run their tasks and fill their reply buffers. Wait until all the tasks finish (\pimstatus).
        \item Gather reply buffers from the PIM local memories using \pimgather.
    \end{enumerate}
\end{tboxalg}
\confversiononly{\vspace{-0.1in}}

Based on this interface, our PIM-friendly ordered index processes
\revision{batches of operations in \textit{bulk-synchronous rounds}}, like in
\cite{sewall2011palm}, using the steps in Algorithm~\ref{alg:pattern}.

As discussed in \Cref{sec:pipelining}, when implementing our
PIM-friendly programs, we use pipelining to overlap the above steps,
e.g., overlapping step~1 at the CPU and step~3 at the PIM modules.

\begin{figure}[t]
    \centering
    \includegraphics[width=0.9\linewidth]{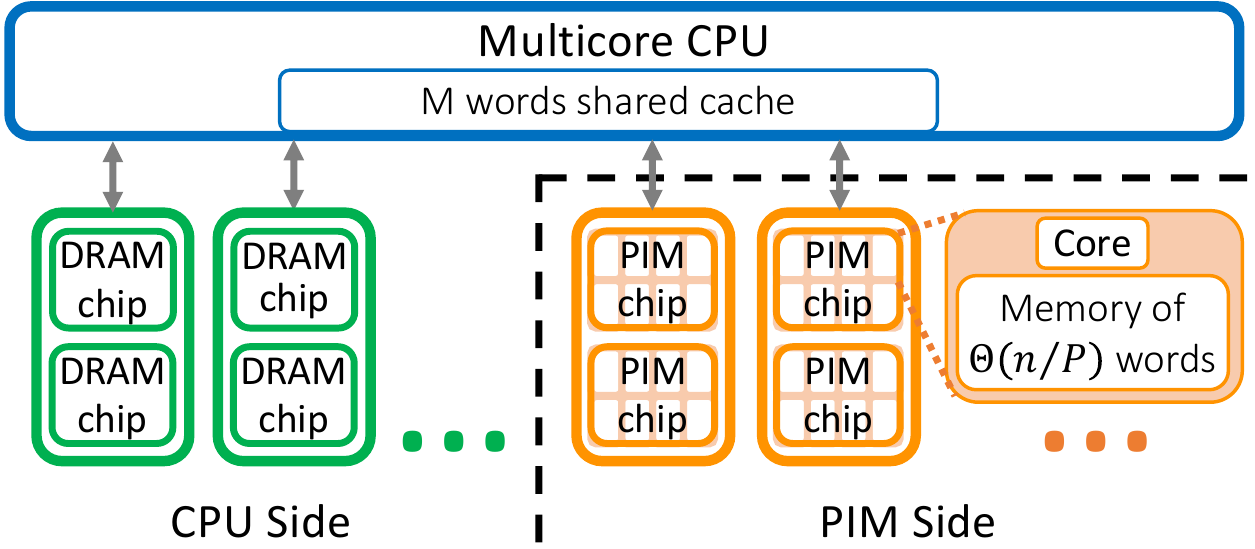}
    \vspace{0.1in}
    \caption{The architecture for the UPMEM PIM system, a specific example of
    our generic PIM system architecture.
    PIM modules are packed into memory DIMMs connected to the host CPU via normal memory channels.
    \revision{The CPU side also includes traditional DRAM modules,
    which are not part of the PIM model.}}
\label{fig:pim_architecture}
\end{figure}

\myparagraph{A Concrete Example: UPMEM.} We evaluate our techniques on
the latest PIM system from UPMEM~\cite{upmem}.  UPMEM's architecture
(\Cref{fig:pim_architecture}) is one way to instantiate
the PIM model.
Its PIM modules are plug-and-play DRAM DIMM replacements, and therefore can
be configured with various ratios of traditional DRAM
memory to PIM-equipped ones (current maximum available
configuration has 2560 PIM modules).
The CPU has access to both the
main memory (traditional DRAM) and all the PIM memory, but each PIM processor only has
access to its local memory, because PIM modules are physically
separated in different memory chips.
\revision{
Each PIM module
has up to $628$ MB/s local DRAM bandwidth, so a machine with $2560$
PIM modules can provide up to $1.6$ TB/s aggregate bandwidth~\cite{gomez2021benchmarking}.
}
To move data \textit{between}
PIM modules, the CPU reads from the origin and writes to the target.
UPMEM's SDK supports the programming interface functions listed above,
but with the restriction that the scatter/gather functions must
transmit same length buffers to/from all PIM modules (i.e., the
buffers are \textit{padded} out to equal lengths).

\revision{UPMEM's main memory (a component not in the PIM model)
enables running programs
with CPU-side memory footprints over $M$ words, but these additional memory accesses
bring another type
of communication not existing in the PIM model: \textit{CPU-DRAM communication}.
Thus it is important to write programs with good cache efficiency.
Our solution in the PIM-tree is to use only a small amount of CPU-side memory:
$\Theta(S) < M$ words for a batch of $S$ operations.}

\subsection{Load Balance Preliminaries}
\label{sec:load_balance}

A key challenge for PIM systems is to keep load balance among the PIM modules,
which we define as follows:

\begin{definition} 
  A program achieves \defn{load balance} if the \textit{work}
  (unit-time instructions) performed by each PIM program is $O(W/P)$
  and the \textit{communication} (data sent/received) by each PIM
  module is $O(C/P)$, where $W$ and $C$ are the sums of the work and
  communication, respectively, across all $P$ PIM modules.  For
  programs with multiple bulk-synchronous rounds, the program achieves
  load balance if each round achieves load balance.
\end{definition}


The challenge in achieving load balance is that the PIM module with
the maximum work or communication must be bounded.  Note that
randomization does not directly lead to load balance, e.g.,
randomly scattering $P$ tasks of equal work and communication to $P$ PIM
modules fails to achieve load balance. This is because one of the PIM
modules receives $\Theta(\log{P}/\log\log{P})$ tasks with high
probability (\whp)\footnote{We use $O(f(n))$ with high probability (\whp{}) (in $n$) to mean $O(cf(n))$ with probability at least $1-n^{-c}$ for $c \geq 1$.} in $P$~\cite{berenbrink2008weighted}, causing the work and communication at that
module to be a factor of $\Theta(\log{P}/\log\log{P})$ higher than balanced.

\revision{
We use balls-into-bins lemmas to prove load balance, where
a bin is a PIM module and a ball with weight $w$ corresponds
to a task with $w$ work or $w$ communication.
    We will use the following:


\fullversiononly{
    \begin{lemma}[\cite{sanders1996competitive}]
        \label{lemma:balls_into_bins}
        Placing $m = \Omega(P\log{P})$ balls into $P$
        bins uniformly randomly yields $O(m/P)$ balls in each bin \whp.
    \end{lemma}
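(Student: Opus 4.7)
The plan is to prove this via a Chernoff bound plus union bound, which is the standard approach for balls-into-bins concentration.

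First, I would set up random variables. For each bin $i \in [P]$, let $X_i$ denote the number of balls that land in bin $i$. Since each of the $m$ balls is placed independently and uniformly, $X_i$ is a sum of $m$ independent Bernoulli random variables each with success probability $1/P$, so $\mu := \mathbb{E}[X_i] = m/P$. Because $m = \Omega(P \log P)$, we have $\mu = \Omega(\log P)$, which is exactly the regime where Chernoff-style concentration becomes polynomially strong in $P$.

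Next, I would apply a multiplicative Chernoff bound: for any constant $\delta > 0$,
\[
\Pr[X_i \geq (1+\delta)\mu] \;\leq\; \exp\!\left(-\tfrac{\delta^2 \mu}{3}\right).
\]
Choosing $\delta$ a sufficiently large constant (or simply setting the target to $c \cdot m/P$ for a large enough constant $c$) and using $\mu \geq \gamma \log P$ for the hidden constant $\gamma$ in the $\Omega(P\log P)$ assumption, this probability is at most $P^{-(c+1)}$ for any desired $c \geq 1$, provided $\gamma$ is chosen appropriately. A union bound over the $P$ bins then gives
\[
\Pr\!\left[\max_{i \in [P]} X_i \geq c \cdot m/P\right] \;\leq\; P \cdot P^{-(c+1)} \;=\; P^{-c},
\]
which is the desired high-probability statement.

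The proof is essentially routine, so there is no real obstacle; the only subtlety worth stating carefully is matching the hidden constant in the $\Omega(P\log P)$ hypothesis to the hidden constant in the ``with high probability'' conclusion, i.e., making explicit that a stronger hypothesis ($m \geq \gamma P \log P$ for larger $\gamma$) yields a stronger tail (failure probability $P^{-c}$ for larger $c$), which is standard and can be stated in one line. I would conclude the proof by noting that $O(m/P)$ balls per bin holds \whp{} in $P$.
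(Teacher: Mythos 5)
Your proof is correct and is the standard argument: each bin's load is Binomial$(m,1/P)$ with mean $\mu=m/P=\Omega(\log P)$, a multiplicative Chernoff bound makes the tail polynomially small in $P$, and a union bound over the $P$ bins finishes. The paper itself does not prove this lemma — it is stated with a citation to Sanders — so there is nothing to compare against beyond noting that your Chernoff-plus-union-bound derivation is exactly the textbook route one would take. One small refinement on the point you flag: under the paper's convention for \whp{} (probability $\geq 1-P^{-c}$ while allowing the bound to degrade to $O(c\cdot m/P)$), you do not actually need to strengthen the hidden constant $\gamma$ in the hypothesis $m\geq\gamma P\log P$; instead you take the deviation parameter $\delta$ (equivalently, the constant inside the $O(m/P)$) proportional to $c/\gamma$, so any fixed $\gamma>0$ suffices and the slack is absorbed into the $O(\cdot)$ rather than into the assumption on $m$.
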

}

\begin{lemma}[\cite{kang2021processing, sanders1996competitive}]
    Placing weighted balls with total weight $W = \sum{w_i}$ and
    each $w_i < W / (P \log P)$ into $P$ bins uniformly randomly
    yields $O(W/P)$ weight in each bin \whp.
\label{lemma:weightbalanced}
\end{lemma}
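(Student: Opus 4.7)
The plan is to prove this bin-by-bin via a weighted multiplicative Chernoff bound and then close with a union bound over the $P$ bins. First I would fix an arbitrary bin $j$ and let $X_i$ be the indicator that ball $i$ lands in bin $j$, so that the $X_i$ are mutually independent with $\Pr[X_i = 1] = 1/P$. The total weight in bin $j$ is $Y_j = \sum_i w_i X_i$, with expectation $E[Y_j] = W/P$. The goal is to show $\Pr[Y_j = \Omega(W/P) \cdot c] \leq P^{-(c'+1)}$ for suitable constants, so that the union bound over $P$ bins leaves a whp-in-$P$ conclusion.

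Next, I would normalize so that I can apply a $[0,1]$-valued Chernoff bound. Setting $w_{\max} = \max_i w_i$, define $Z_i = (w_i/w_{\max}) X_i \in [0,1]$. The hypothesis $w_{\max} < W/(P \log P)$ is exactly what is needed to make the expectation $\mu_Z = E[\sum_i Z_i] = W/(P\, w_{\max})$ satisfy $\mu_Z > \log P$. Applying the standard multiplicative Chernoff bound $\Pr[\sum_i Z_i > (1+\delta)\mu_Z] \leq (e^{\delta}/(1+\delta)^{1+\delta})^{\mu_Z}$ with a sufficiently large constant $\delta$ yields $\Pr[Y_j > c \cdot W/P] \leq P^{-(c'+1)}$ for any desired constant $c'$, by taking $c$ correspondingly large.

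Finally, a union bound over the $P$ choices of $j$ gives that every bin simultaneously holds weight $O(W/P)$ with probability at least $1 - P^{-c'}$, which is the desired \whp statement in $P$. I would also remark that $P = \Omega(1)$ is assumed implicitly for the $\log P$ term to make sense, which matches the $P$ in the model.

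The main obstacle is purely conceptual rather than computational: the proof hinges on the fact that the per-ball weight cap $w_{\max} < W/(P \log P)$ forces $\mu_Z$ to exceed $\log P$, which is precisely the slack the Chernoff tail needs to beat the $P$-way union bound. Without this cap, even a single ball of weight $\Omega(W/P)$ could land in a bin and break the bound, so this is the step where the hypothesis must be used crucially and cannot be weakened.
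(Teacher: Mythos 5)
Your proof is correct: normalizing by $w_{\max}$, using the cap $w_i < W/(P\log P)$ to force $\mu_Z > \log P$, applying the multiplicative Chernoff bound for independent $[0,1]$-valued variables with a constant (or $c$-dependent) $\delta$, and closing with a union bound over the $P$ bins is a complete and standard argument, and you correctly identify the weight cap as the hypothesis that lets the tail beat the union bound. Note that the paper itself gives no proof of this lemma---it imports it by citation from \cite{kang2021processing, sanders1996competitive}---and your Chernoff-plus-union-bound derivation is essentially the argument used in those sources, so there is nothing further to reconcile.
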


\fullversiononly{
\newcommand{\wbibt}{
    Placing $m = \Omega(P\log{P})$
    weighted balls with weights following a geometric distribution of expectation $\mu$
    into $P$ bins uniformly randomly yields $O(\mu m/P)$ weight in each bin \whp.
}

\begin{lemma}
    \wbibt
\label{lemma:geometric_balanced}
\end{lemma}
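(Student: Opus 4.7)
The plan is a two-step conditioning argument followed by a union bound. First, I would apply Lemma~\ref{lemma:balls_into_bins} to conclude that with high probability every bin $i$ receives $k_i = O(m/P)$ balls. Since the weights are independent of the ball-to-bin assignments, conditioning on any particular assignment leaves the total weight in bin $i$ as a sum of $k_i$ i.i.d.\ geometric random variables with mean $\mu$, i.e., a negative binomial random variable with mean $k_i \mu = O(\mu m / P)$. Note that a direct appeal to Lemma~\ref{lemma:weightbalanced} does not work here: geometric variables are unbounded, and even after truncation the per-ball bound $w_i < W/(P\log P)$ can fail when $m$ is close to the lower end $\Theta(P\log P)$, so a more direct concentration argument is required.

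Next, I would establish a Chernoff-style upper tail bound on each of these negative binomial sums. The cleanest way to avoid a sub-exponential Chernoff bound is to exploit the duality between the negative binomial and the binomial: the sum of $k$ i.i.d.\ $\mathrm{Geom}(1/\mu)$ variables exceeds $T$ if and only if a $\mathrm{Binomial}(T, 1/\mu)$ yields fewer than $k$ successes. Choosing $T = 2 k_i \mu$, the expected number of successes is $2 k_i$, so the standard multiplicative Chernoff bound for bounded variables gives failure probability $\exp(-\Omega(k_i))$. Because $m = \Omega(P \log P)$ forces $k_i = \Omega(\log P)$ whp (again by Chernoff on the binomial count), this tail probability is $P^{-\Omega(1)}$, and a union bound over the $P$ bins then shows that whp every bin has weight at most $2 k_i \mu = O(\mu m / P)$.

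The main obstacle is the concentration bound for a sum of geometrics, since they are unbounded (sub-exponential) and the textbook Chernoff bound does not apply directly; the binomial-duality reduction above is the key technical trick that sidesteps this and lets us reuse a standard bounded-variable Chernoff argument. A minor technical point is to be careful with the conditional argument: because the weights and the assignments are independent, conditioning on the high-probability event that all bin counts are $O(m/P)$ truly reduces the weight in each bin to a clean sum of i.i.d.\ geometrics, so the reduction is valid and a single union bound can absorb both failure events (small counts and heavy weights) into a single $P^{-\Omega(1)}$ error probability.
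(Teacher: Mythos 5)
Your proposal is correct and takes essentially the same route as the paper: bound the number of balls per bin by $O(m/P)$ \whp{} via \Cref{lemma:balls_into_bins}, then apply a concentration bound for the negative-binomial (sum-of-geometrics) weight in each bin and union-bound over bins. The only difference is that the paper simply cites \Cref{lemma:sum_of_geometric} for that geometric-sum tail bound, whereas you rederive it via the negative-binomial/binomial duality and a standard Chernoff bound, which is a valid self-contained substitute.
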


We prove \Cref{lemma:geometric_balanced} using the following lemma:

\begin{lemma}[\cite{brown2011wasted}]
    \label{lemma:sum_of_geometric}
    Let $Y(n, p)$ be a negative binomially distributed random
    variable that arises as the sum of n independent identically geometrically distributed
    random variables with expectation p.

    Then $E[Y(n, p)]$ = np, and for $k > 1$, $Pr[Y(n, p) > knp] \le \exp(\frac{-kn(1-1/k)^2}{2})$
\end{lemma}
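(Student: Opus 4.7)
The plan is to handle the two claims separately. The expectation statement is immediate from linearity: if $Y = \sum_{i=1}^n X_i$ with the $X_i$ i.i.d.\ geometric of mean $p$, then $E[Y] = \sum_{i=1}^n E[X_i] = np$. The tail inequality is the substantive part, which I would prove by the standard exponential-moment (Chernoff) argument.

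Concretely, for any $s > 0$ in the domain of convergence of the moment generating function $M_X(s) = E[e^{sX_1}]$, applying Markov's inequality to $e^{sY}$ yields
\[
\Pr[Y > knp] \;\le\; e^{-s \cdot knp}\, (M_X(s))^n .
\]
I would substitute the closed form of the geometric MGF (a rational function of $e^s$), take logarithms, and optimize in $s$. The natural choice is the exponential-family tilt that shifts the mean to the target $kp$; this turns the exponent into $-n$ times the Cram\'er (large-deviation) rate function of the geometric law at the deviation level $k$. Because the Cram\'er rate function is scale-free in the mean, the final exponent will depend on $k$ and $n$ only, not on $p$, which matches the form of the stated bound.

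The main obstacle is not the Chernoff machinery itself but showing that the resulting rate function is bounded below by $k(1-1/k)^2/2$, so that the tail bound has the clean form $\exp(-kn(1-1/k)^2/2)$. The cleanest route is to compare the geometric rate function to the exponential rate function $k - 1 - \ln k$ (its continuous analog), and then invoke the elementary inequality $k - 1 - \ln k \ge (k-1)^2/(2k)$ for all $k \ge 1$, verifiable by differentiating the difference, noting it vanishes at $k=1$ and is nondecreasing on $[1,\infty)$. The delicate point is making the comparison precise for small $p$, where the geometric and exponential tails differ most; this can be handled either by a direct convexity argument on $-\ln(1-u)$ applied to the Taylor expansion $p\ln(1+u/p) = u - u^2/(2p) + O(u^3/p^2)$, or by substituting the optimal tilt back into the Chernoff bound and simplifying algebraically to expose the $(1-1/k)^2$ factor.
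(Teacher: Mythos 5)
The paper does not actually prove this lemma---it is imported verbatim from the cited reference~\cite{brown2011wasted}---so there is no in-paper argument to compare against. Your outline is the standard Chernoff/Cram\'er route and is viable, but as you yourself note, the entire content of the bound sits in the step you defer: showing that the Cram\'er rate function of the geometric law at deviation level $k$ is at least $k(1-1/k)^2/2=(k-1)^2/(2k)$ uniformly in the mean. Writing $q$ for the success probability, the optimal tilt gives $I = \frac{k}{q}\ln\frac{k-q}{k(1-q)} - \ln\frac{k-q}{1-q}$, which tends to the exponential rate function $k-1-\ln k$ only as $q\to 0$; to finish your argument you must show $I$ is minimized in that limit (i.e., monotone in $q$), and that is a genuine calculation you have not done, not a routine simplification. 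The inequality $k-1-\ln k\ge (k-1)^2/(2k)$ itself is fine and checks out by differentiation.

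There is a cleaner route that recovers the stated constant exactly and avoids the rate-function analysis entirely: use the duality between the negative binomial and the binomial. If each geometric counts trials up to and including the first success in Bernoulli$(1/p)$ trials, then $\{Y(n,p) > m\}$ is precisely the event that the first $m$ trials contain fewer than $n$ successes, so
\[
\Pr[Y(n,p) > knp] \;=\; \Pr\bigl[\mathrm{Bin}(knp,\, 1/p) < n\bigr]
\;=\; \Pr\bigl[\mathrm{Bin}(knp,\, 1/p) < (1-\delta)\cdot kn\bigr],\qquad \delta = 1-\tfrac1k,
\]
and the standard multiplicative Chernoff lower-tail bound $\Pr[\mathrm{Bin}(m,q)\le(1-\delta)mq]\le\exp(-mq\delta^2/2)$ with $mq=kn$ gives $\exp(-kn(1-1/k)^2/2)$ on the nose. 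This is almost certainly how the cited source obtains the bound, since the constant is exactly the binomial one. I would recommend either completing the monotonicity argument in your version or switching to the duality proof, which is two lines.
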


\begin{proof}[Proof of \Cref{lemma:geometric_balanced}]
    Each bin gets $O(m/P)$ balls \whp, according to \Cref{lemma:balls_into_bins}.
    Then the sum of the weights of the balls in each bin is $O(\mu m/P)$ \whp
    according to \Cref{lemma:sum_of_geometric}.
\end{proof}
}
}

\subsection{Prior Work on Indexes for PIM}

There are several prior works for indexes on PIM systems. Two prior works~\cite{liu2017concurrent,choe2019ndp} proposed
PIM-friendly \skiplists. Their \skiplists are based on \textit{range partitioning}: they partition the \skiplist by disjoint key ranges and
maintain each part locally on one PIM module.
As discussed in \Cref{sec:introduction}, such range partitioning can suffer from severe load imbalance
under data and query skew.

The load imbalance problem of range-partitioned ordered indexes is also studied in traditional distributed settings.
Ziegler et al.~\cite{ziegler2019designing} discussed other choices for tree-based ordered indexes in order to avoid load imbalance, including:
(i) partitioning by the hash value of keys, (ii) fine-grained partitioning that randomly distributes all index nodes, and
(iii) a hybrid method that does fine-grained partitioning in leaves, and range partitioning for internal nodes.
They also experimentally evaluated their approaches on an 8 machine cluster.
However, each of these choices has its own problem in the case of a PIM system with thousands of PIM modules:
(i) partitioning by hash makes range operations costly, because they must be processed by
all PIM modules, (ii) fine-grained partitioning causes too much communication because all accesses will be non-local, and
(iii) the hybrid method suffers from the load balance problem in its range partitioned part.

\begin{figure}[t]
    \centering
    \includegraphics[width=\linewidth]{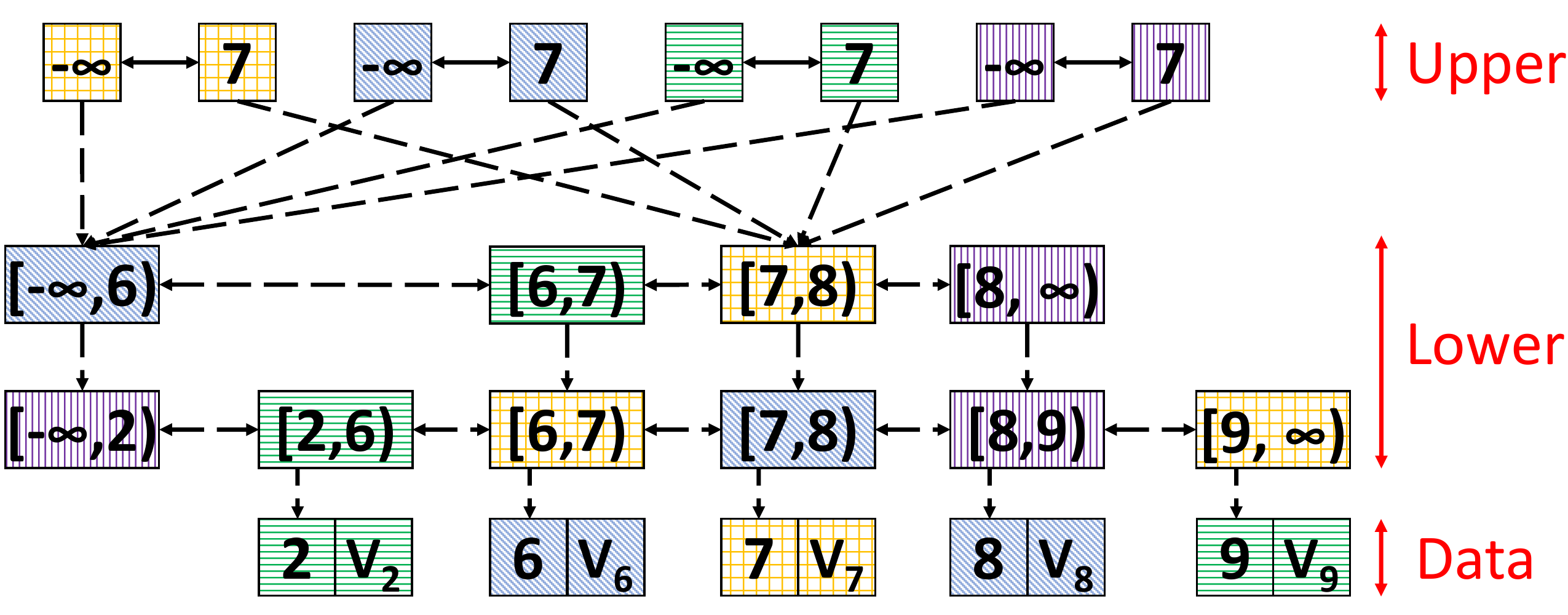}
    \vspace{0.1in}
    \caption{PIM-balanced \skiplist~\cite{kang2021processing}
 with the upper part replicated on a 4-PIM system. Nodes on different
    PIM modules are different colors. PIM pointers are dashed lines.
    The lower part is $\log{P}$ levels.
    }
\label{fig:replicated_upper_part}
\end{figure}

\subsection{Prior Work: PIM-balanced \SkipList}
\label{sec:pim_balanced_index}

\revision{In a recent paper~\cite{kang2021processing}, we presented
the first provably load-balanced batch-parallel \skiplist index,
the \defn{PIM-balanced \skiplist}, under adversary-controlled
workloads on the PIM model.}  A key insight was to
leverage the CPU side to solve the load balance problem.


The \textbf{PIM-balanced \skiplist} horizontally splits the \skiplist
into two parts, an \textit{upper part} and a \textit{lower part},
replicating the upper part in all PIM modules and distributing lower
part nodes randomly to PIMs.  This is shown in
\Cref{fig:replicated_upper_part}, where nodes in different PIM modules
have different colors, and the replicated upper part is explicitly drawn as four copies.
For a system with $P$ PIM-modules, the lower part is $\log P$ levels.
We can afford to replicate (only) the top part because
(i) it is small relative to the rest of the \skiplist and (ii) it is updated relatively
infrequently (recall that an inserted key reaches a height $h$ in a \skiplist with probability
$1/2^h$).

Queries are executed by pointer chasing in the ``tree'' of \skiplist
nodes. The batched queries are first evenly divided and sent to all PIM modules, each progressing through the
upper part locally.
Then the \skiplist goes through the lower part by sending the
query to the host PIM module of each lower part node on the search
path one-by-one, until reaching a leaf. We call
this the \defn{Push} method, because queries are sent (``pushed'') to
PIM modules to execute.

Executing a batch of parallel queries using only \textit{Push} can
cause severe imbalance, despite the lower part nodes being randomly
distributed.  For skewed workloads, many queries may share a common
node on their search path, meaning that they are all sent to the host
PIM module of that node, causing a load imbalance. These nodes are called
\textit{contention points}.  An example is when multiple non-duplicate
\PREDECESSOR queries
return the same key, with \textit{all} nodes on the search path to
that key being contention points.

The \textbf{PIM-balanced \skiplist}~\cite{kang2021processing} solves this problem by avoiding
contention points, based on a key observation: once the search
paths of keys $l$ and $r$ share a lower part node $v$, searching any key
$u \in [l, r]$ will also reach node $v$. Thus the search for $u$ can start directly from
the LCA (lowest common ancestor) of these two paths. We call this the
\defn{Jump-Push} method.  \textit{Jump-Push} search has a
preprocessing stage to record search paths. It is a multi-round sample
search starting with one sample: In each round, it doubles the sample
size and uses the search paths recorded in previous rounds to decide
start nodes of sample queries in this round.  This approach
limits the contention on each node, avoiding load imbalance.

However, the preprocessing cost is high.  For $P$ PIM modules and a
batch of $P \log^2 P$ operations, it takes $O(\log P)$ sampling rounds,
each of which takes $O(\log P)$ steps of inter-module pointer chasing
to search the lower part.  The main stage, in contrast, takes only
$O(\log P)$ steps. Moreover, the preprocessing stage requires
recording entire search paths---another overhead for the CPU
side.

Our new ordered index (PIM-tree) uses some of the same ideas as
this work, but includes key new ideas to make it simpler, and more
efficient both theoretically and practically.

\section{\pimtree{} Design}
\label{sec:method}

\myparagraph{Overview.}
The \pimtree{} is a batch-parallel skew-resistant ordered index designed for PIM systems.
It supports fundamental key-value operations, including \GET{(key)},
\UPDATE{(key, value)}, \PREDECESSOR{(key)}, \INSERT{(key, value)}, \DELETE{(key)},
and \SCAN{(Lkey, Rkey)}. It executes operations in same-type atomic
batches in parallel, similar to \cite{sewall2011palm}.
As such, the data structure avoids conflicts caused by operations of
different types. We design it starting from the structure discussed in \S\ref{sec:pim_balanced_index}.

In this section, we describe \pimtree{}'s design by studying the
impact of our techniques/optimizations on the \PREDECESSOR{} operation.
We review the basic data structure design discussed in
\S\ref{sec:pim_balanced_index} in detail, then introduce our three key
techniques/optimizations, and finally analyze the resulting
communication cost and load balance. Later in \S\ref{sec:pim_tree}, we
describe the design of \pimtree{}'s other operations.

\myparagraph{Notations.}
We refer to the {number of PIM modules} as ${P}$, the {total
number of elements in the index} as ${n}$,
the {batch size} as ${\batch}$, and {the expected fanout of PIM-tree nodes}
as ${B}$.

\myparagraph{Key Ideas.}
\label{sec:key_ideas}
We observe that the two components of the PIM architecture, the CPU side and the PIM side,
prefer different workloads.
The distributed PIM side prefers uniformly random workloads and
suffers from the load imbalance caused by highly skewed ones.
Meanwhile, the shared CPU side prefers skewed workloads since we can explore spatial and temporal locality in these workloads that leads to better cache efficiency.
This difference shows the complementary nature of shared-memory and
distributed computing, and their coexistence in the PIM architecture
motivates our hybrid method:
we design a dynamic labor division strategy that automatically
switches between the CPU side and the PIM side to use the more ideal
platform.
This strategy, called \textit{\PushPullSearch}, is the core technique of the PIM-tree.

With load balance achieved by \PushPullSearch, we further propose two other optimizations, called \textit{shadow subtrees} and \textit{chunking},
to reduce communication.
These optimizations are
motivated by two basic ideas respectively: caching remote accesses at PIM modules to
build local shortcuts (thereby eliminating communication), and blocking nodes into chunks (for better locality).  We will show
how these ``traditional'' techniques combine
with the \PushPullSearch optimization to bring an asymptotic reduction
of communication from $O(\log{P})$ to $O(\log_B\log_B{P})$ ($B$ is the expected fanout of chunked nodes),
and a throughput increase of up to $69.7\times$, compared with the
PIM-friendly skip list of \S\ref{sec:pim_balanced_index}.

\subsection{Basic Structure}\label{sec:basic_structure}

The \textit{PIM-balanced \skiplist} is a distributed \skiplist
horizontally divided into three parts, the \textit{upper part}, the
\textit{lower part}, and the \textit{data nodes}
(\Cref{fig:replicated_upper_part}).  Data nodes are key-value pairs
randomly distributed to PIM modules to support hash-based lookup in
one round and $O(1)$ communication. Every upper part node is
replicated across all PIM modules, and every lower part node is stored
in a single random PIM module.  The ID of the PIM module hosting a
lower part node is called the node's \defn{PIM ID}.  Remote pointers,
called \defn{PIM pointers}, are comprised of (PIM ID, address) pairs.
In \Cref{fig:replicated_upper_part}, PIM pointers are represented by
dashed arrows, while traditional (i.e., intra-PIM) pointers are
represented by solid arrows.  To save communication during search,
each lower part node stores the key of the next \skiplist node at the
same level (called a \defn{right-key}).  There is a node at the lowest level for each key, and the probability of a node
joining the next higher level in the \skiplist is set to $1/2$.

The upper part is replicated to enable local executions for queries on
PIM modules, but the replication brings an overhead of $P$ to both
space complexity and update costs. To mitigate this overhead,
the lower part height is set to be $H_{\text{low}} = \log P$, so that only a
${1/2}^{\log P} = 1/P$
fraction of the keys reach the upper part. By replicating the upper
part, the number of remote accesses needed for a \PREDECESSOR query is
reduced from $O(\log n)$ to $O(\log P)$.

In the following sections, we call the upper part \defn{L3} and the lower
part \defn{L2}.  After applying the Shadow Subtree optimization
(\S\ref{sec:shadow_subtree}), we will further divide the lower part
horizontally into two parts, called \defn{L2} and \defn{L1}.




\subsection{\PushPull{} Search}
\label{sec:push_pull_search}



\PushPull{} search is our proposed search method that guarantees load
balance even under skewed workloads.
In the \defn{\Push} method, the CPU sends the query to
the host PIM module of the next node along the search path, the PIM
module runs the query, then the CPU fetches the result; in the
\defn{\Pull} method, the CPU retrieves the next node along the path
back to the CPU side, running the query itself.
\defn{\PushPull{}}
search chooses between \Push and \Pull by counting the number of
queries to each node: when the number of queries to a node exceeds a
specific threshold, denoted as ${K}$ in the following
sections, we \Pull that node, otherwise we \Push the query.


In further detail, \PushPull{} search performs multi-round pointer
chasing over the basic structure mentioned in
\S\ref{sec:basic_structure} in three stages, where the CPU records the next
pointer for each query as an array of PIM pointers throughout the process.
\begin{enumerate}[label=(\arabic*),topsep=0pt,itemsep=0pt,parsep=0pt,leftmargin=15pt]
  \item \underline{Traverse L3 using the replicated upper parts.} The CPU
  evenly distributes queries to PIM modules.  Each PIM module runs its
  queries using its local copy of L3, until reaching a pointer to an L2
  node. The CPU retrieves these pointers (using \pimgather).

  \item \underline{Traverse L2 using contention-aware \PushPull.} The CPU
  performs multiple \PushPull{} rounds.  In each round, the CPU counts
  the number of queries to each L2 node. If there are more than $K$
  queries to a node, choose \Pull by sending a task to the PIM-side to
  retrieve that node to the CPU-side, then partition the queries (in parallel) based
  on the PIM IDs in the retrieved node's pointers on the CPU-side.
  Otherwise, choose \Push to send a Query task to the PIM and
  retrieve the next pointer for the query.
  \item When the search reaches a data node, return the data.
\end{enumerate}

We can record the addresses of all nodes on the pointer-chasing path
for a query on the CPU side to get the \defn{search trace} for each
query. Note that these traces are used when performing updates (in
\S\ref{sec:pim_tree}).
For the basic structure mentioned in \S\ref{sec:basic_structure}, we
choose $K = 1$, as it minimizes communication for constant size
nodes.

\revision{
\myparagraph{Discussion.}
The most interesting part of \PushPull search is that it is based on
integrating two fundamental methods from distributed and shared-memory
computing to achieve provable load balance with low cost
(see \S\ref{sec:proof_for_tree_search} for analysis). We
observe that the \Push method is a distributed computing technique, as
it uses the CPU as a router and always runs queries on PIM
modules. Meanwhile, the \Pull method is a shared memory technique,
treating the PIM modules as standard memory modules and
running the queries on the CPU.
As discussed in \S\ref{sec:key_ideas}, combining such fundamental methods
works because of the complementary nature of
the CPU side and the PIM side in the load balance issue:
contention-causing (thus PIM-unfriendly) workloads
are meanwhile CPU-friendly workloads.

As a solution only to the load balance issue, \PushPullSearch
provides no asymptotic improvements in worst-case bounds compared
with \PushOnly or \PullOnly methods.
Such improvements are
provided by our optimizations, \textit{shadow subtrees} and
\textit{chunking}, which we describe next.

}

\vspace{-0.15cm}
\subsection{Shadow Subtrees}\label{sec:shadow_subtree}


Shadow subtrees are auxiliary data structures in L2 that act as
shortcuts to reduce communication from $O(\log{P})$ to
$O(\log\log{P})$ for each query\revision{, while ensuring that the
space complexity is still $O(n)$}.
The shadow subtree optimization is
based on the idea of the search tree defined by a \skiplist, which is
an imaginary tree generated by merging all possible search paths of a
\skiplist. It contains all nodes and all edges of the \skiplist,
except some horizontal edges. The \defn{shadow subtree} of each node
is a shadow copy of its search subtree stored together with this
node. By using shadow subtrees, a PIM module can run queries locally
through L2.  Although shadow subtrees and replicating the top of the tree
both involve copying nodes across different PIM modules with the
purpose of reducing communication, they are actually quite different.
When replicating the top, a single tree is copied $P$ times across
the modules.  In the shadow subtree, every ancestor of a node has a
copy of that node as part of its shadow subtree (in our case just the
ancestors in L2).

\revision{
Building shadow subtrees
    on all ($O(n)$) L2 nodes would require $O(n\log{P})$ space.
    Instead, to maintain $O(n)$ space, we build them only on a small proportion of L2 nodes.
In particular, we divide L2 into two layers, denoting the upper
levels to be the new L2 and the lower levels to be L1. We build shadow
subtrees only on the new L2.
We set the height of L1 to be $H_{\text{L1}} = \log\log{P}$, so
 only $(1/\log{P})$-fraction of nodes ($O(n/\log{P})$ nodes) are in the new L2,
and the space complexity summing over all shadow subtrees is $O(n)$.
Thus, the PIM-tree now has three layers: L3 under full replication, L2 with shadow subtrees,
and L1 under random distribution without any replication. Each layer
requires $O(n)$ space, so the total space complexity is $O(n)$.
}
This is shown in
\Cref{fig:shadow_subtree}. We refer to original tree nodes and pointers
to them as \defn{physical} nodes (pointers), and mark them in black.
Shadow-tree nodes and pointers to them are referred to as
\defn{shadow} nodes and pointers, and are marked in red.

\begin{figure}[t]
    \centering
    \includegraphics[width=0.9\linewidth]{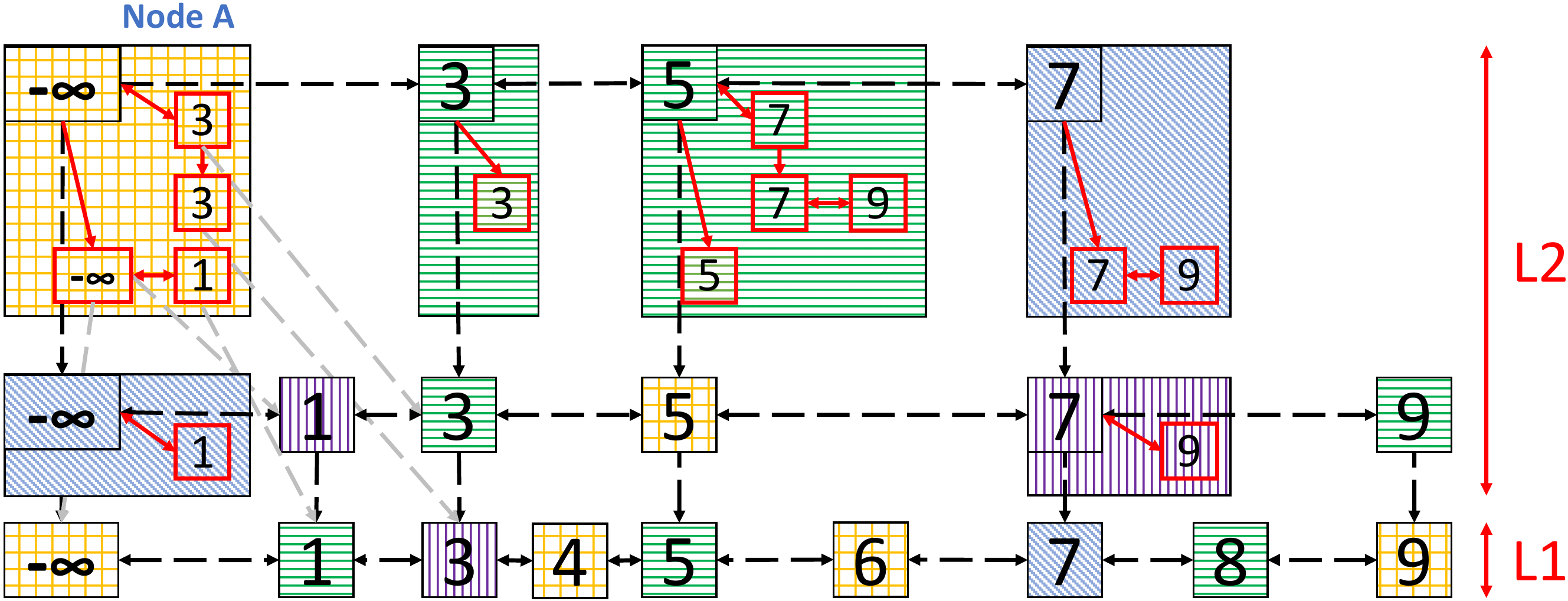}
    \caption{The structure of L2 and L1 after introducing shadow subtrees.
    Shadow nodes and shadow pointers are marked in red.
    Note that blue $1$ does not have a shadow tree node for $3$ because node $3$ is not in its search subtree.
    Right-keys are omitted.
    We also omit pointers from shadow nodes to physical nodes except for node A. The L1 part (L2 part) is $\log\log P$ levels ($\log P - \log\log P$ levels, respectively).
    }
\label{fig:shadow_subtree}

    \centering
    \includegraphics[width=0.9\linewidth]{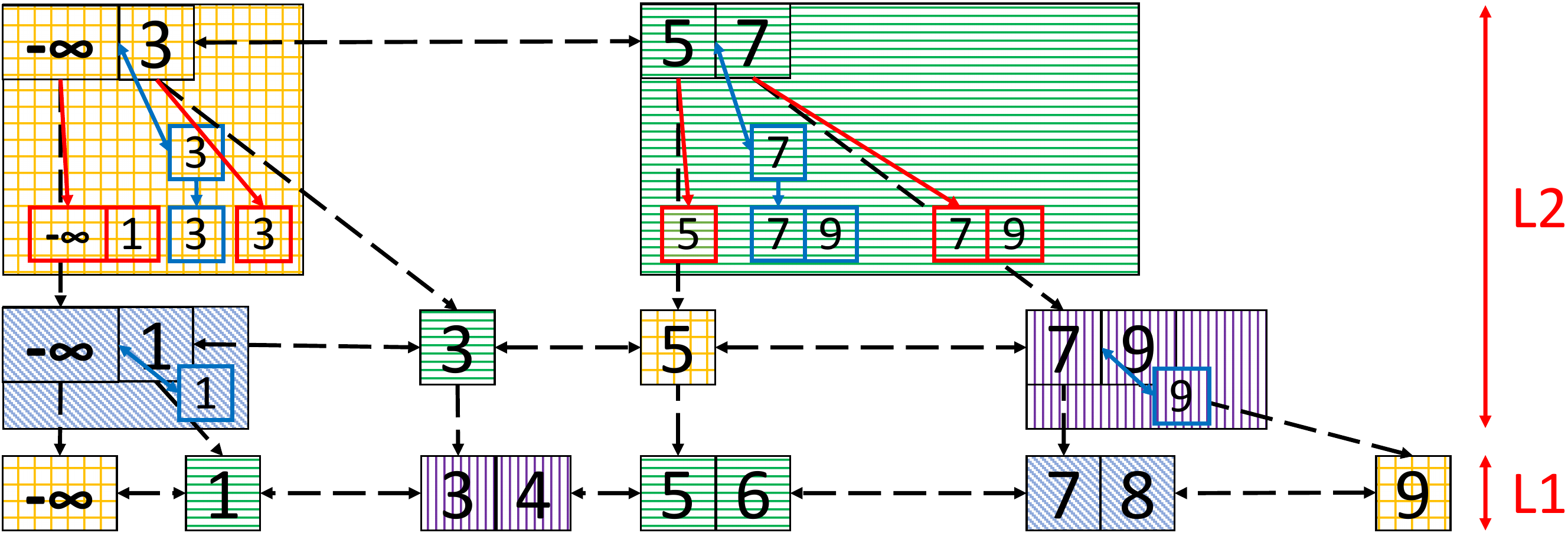}
    \caption{\revision{The intermediate state of the chunking transformation.
    We merge non-pivot
    nodes (nodes whose keys do not go to upper levels) to their
    left-side neighbors.
    Redundant shadow subtrees after merging are marked in blue, and will be removed
    in \Cref{fig:pim_tree}.
    All physical pointers from shadow nodes are omitted.}
    }
\label{fig:shadow_subtree_transformation}

    \centering
  \vspace{0.1em}
    \includegraphics[width=0.9\linewidth]{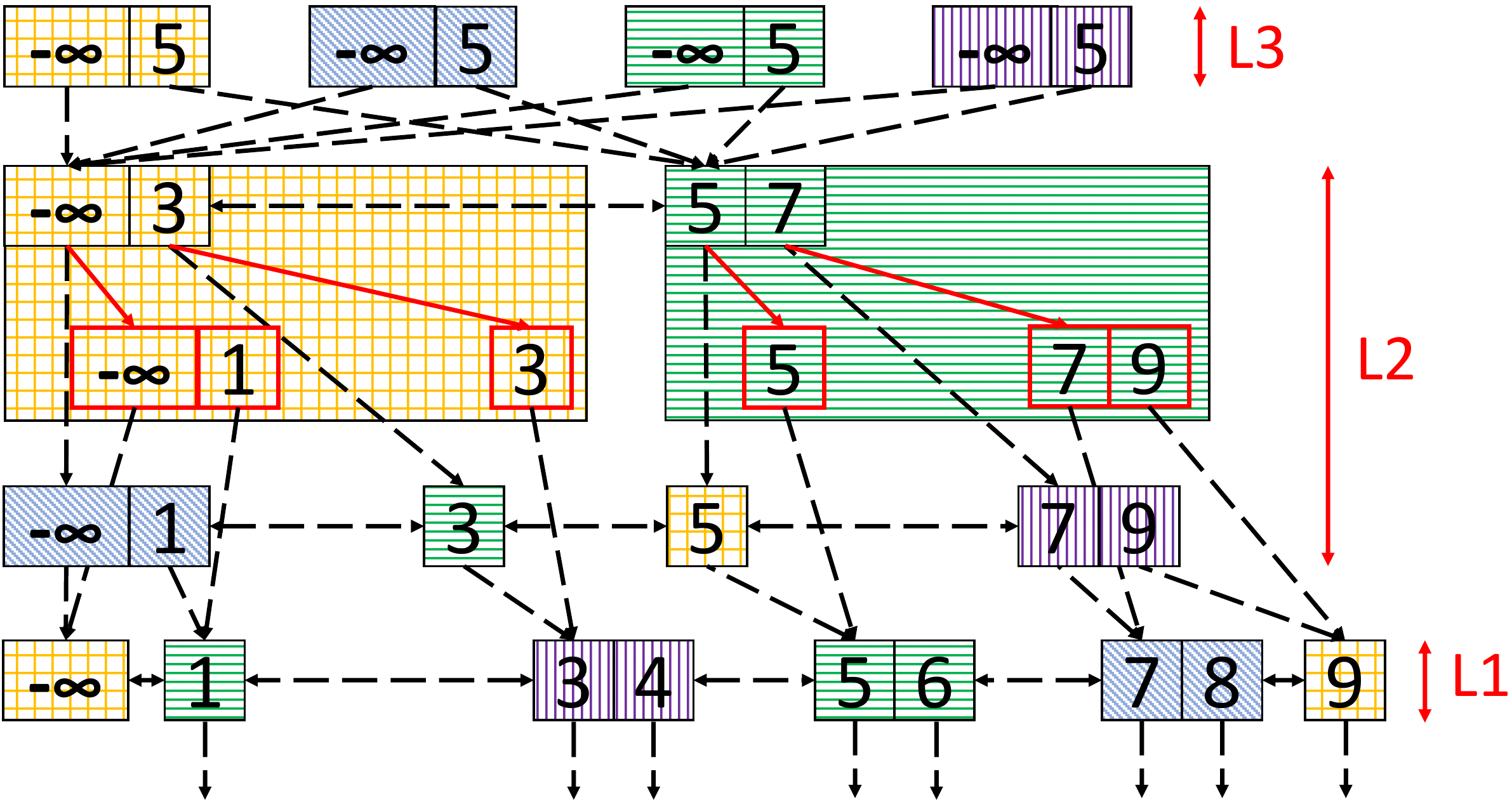}
    \caption{The actual structure of the \pimtree{} \revision{with redundant shadow
    subtrees removed from \Cref{fig:shadow_subtree_transformation}}.
    We no longer need right-keys after chunking. Data nodes are omitted.}
\label{fig:pim_tree}
   \vspace{-0.15in}
\end{figure}

\vspace{-0.05cm}
\myparagraph{Accelerating \PREDECESSOR{} using Shadow Subtrees.}
Shadow subtrees strengthen the \Push side of \PushPull{}
search: a single \Push round can send a query through the whole L2,
rather than going forward by just one level, by running the query on
the shadow subtree.
Therefore the search process takes only $O(\log\log{P})$ rounds for
uniform-random workloads: one \Push through L3, one \Push through L2,
and $O(H_{\text{L1}}) = O(\log\log{P})$ \PushPull{} rounds for L1.
However, for skewed workloads, we cannot simply perform a single \Push
round through L2, because multiple queries may still be pushed to the
contention points in L2 and cause load imbalance.  We solve this problem
again by \Pull, by introducing a multi-round \Pull process to
eliminate contention points.

In further detail, \PushPull{} search in L2 has two stages: we first
perform up to $O(H_{\text{L2}})$
\Pull rounds for nodes with $\ge K$ queries
until no such node exists, where $H_{\text{L2}} = \log{P} - \log \log{P}$ denotes the new L2's height,
then execute one ``Push'' round to send
all queries through L2.
We set the threshold $K = H_{\text{L2}}$ instead of $1$ since ``Push'' is
now more powerful and we tend to use it more.
Both stages take $O(1)$ balanced communication per query\conffulldifferent{.
\revision{This is partly proved in Lemma~\ref{lemma:push_only_proof},
and fully proved in the full paper~\cite{pimtree2022full}.}
}{, as we will prove in Lemmas~\ref{lemma:pull_only_proof}
and~\ref{lemma:push_only_proof}.
}

In practice, we use another optimization to reduce the number of
\Pull rounds.  Note that although contention points are the only source
of load imbalance, we may reach a reasonable level of load balance
before eliminating all contention points.
Therefore, to avoid unnecessary \Pull{} rounds, before starting a
\Pull round, we measure the load balance across PIMs by counting the
number of queries that will be sent to each PIM module; if the one
with the most is below $3\times$ the average load, we stop the \Pull
round and start to \Push.

\myparagraph{Replication and Space/Imbalance Tradeoffs.}
Compared with full replication used for L3 and range partitioning
(which performs no replication) used in related works,
shadow subtree is a novel scheme 
that supports queries with
$O(1)$ communication by improving the locality of a distributed ordered
index.
Specifically, shadow subtree is a selective
replication approach that lies between these two prior schemes.  If we
replicate nodes not only to their L2 ancestors, but to all PIM
modules, we obtain full replication. On the other hand, if we only
keep the shadow subtrees of the L2 roots, we obtain the range
partitioned scheme.

The cost and the skew-resistance of shadow subtrees also lies between
those of the other two schemes. \Cref{table:replications} shows
the bounds when applying different schemes to a \skiplist with height
$\log{P}$ and size $P$ in expectation.
In the full replication scheme, we can run queries with perfect load
balance, but it brings an overhead factor $P$ to both space complexity
and update costs. On the other hand, for the range partitioned scheme,
each query can only be executed by a single PIM module.  We can still
do \PushPull{} to avoid contention with a threshold of $K = P$: we
will choose to simply \Pull{} the whole tree when the number of
queries exceeds the size of the whole part. There can be load
imbalance, as some part gets up to $P$ queries and others get none. For
this approach there is no overhead on space complexity or for updates.
Lastly, using shadow subtrees, the overhead factor is
$O(\log{P})$ as each node is replicated in all its L2 search tree
ancestors, and the maximum query number is $\log{P}$ according to our
choice of \PushPull{} threshold $K = H_{\text{L2}}$.

Shadow subtrees therefore yield a balanced compromise between the
two schemes, providing a sweet spot for both overhead and
skew-resistance.



\begin{table}
    \begin{tabular}{lccc}
        \toprule
        Scheme    & Overhead factor & Maximum query number\\
        \midrule
        Full Replication & $P$   &  Perfect Balance\\
        Range Partitioned   & 1 & $P$ \\
        Shadow Subtrees   & $O(\log{P})$ & $\log{P}$ \\
    \bottomrule
    \end{tabular}
    \caption{Comparison between three types of replication schemes \revision{that run queries with $O(1)$ communication}. The larger the overhead factor, the more space it takes and the
    slower updates will be. The larger the maximum query number is, the more imbalanced the execution will be under skewed workloads.}
    \label{table:replications}
    \vspace{-0.05in}
\end{table}

\vspace{-0.05cm}
\subsection{Chunked \SkipList}\label{sec:chunked}



Chunking or ``blocking'' is a classic idea widely used in
locality-aware data structures, e.g., B-trees and \bplustrees.
To improve locality, we apply a similar chunking approach to improve
the access granularity of the PIM computation,
while decreasing the tree height.
As chunking increases the access granulariy, each PIM processor
obtains larger local memory bandwidth, therefore better performance.
The effect of access granularity in PIM
is discussed in detail in \cite{gomez2021benchmarking}.

We apply chunking to all layers of the \pimtree{}.  In L3, we replace
the multi-thread \skiplist with a batch-parallel multi-threaded \bplustree~\cite{sewall2011palm}.
\revision{In L2 and L1, we chunk the nodes in our \skiplist to obtain a \defn{chunked \skiplist}.
We first merge horizontal non-pivot nodes (whose keys do not go to
upper levels) into a single chunk, then remove redundant shadow subtrees.
Applying this two step process on \Cref{fig:shadow_subtree} first gives
\Cref{fig:shadow_subtree_transformation} as an intermediate state,
and finally the PIM-tree in \Cref{fig:pim_tree}.

The result with shadow subtrees looks similar to a \bplustree.
The difference is that while the \bplustree sends nodes to upper
levels on overflow of lower level nodes, the chunked
\skiplist uses random heights generated during \INSERT, so
the fanout holds in expectation.}
We decrease the probability of reaching the next level in the \skiplist
from $1/2$ to $1/B$, so that the expected fanout is $B$.  We
choose the same chunking factor $B$ in L3, L2 and L1 for simplicity,
but different factors could be used in each part.
\hongbo{grammar?}\laxman{looks good to me.}
As discussed in \S\ref{sec:insert},
we use a chunked
\skiplist instead of a classical \bplustree in L2 to make batch-parallel
distributed \INSERT and \DELETE simpler and more efficient.
We use a \bplustree in L3 because the structure is not distributed,
making batch-parallel \INSERT and \DELETE easier.




Chunking reduces tree height at all levels, which improves
multiple aspects of our design. We denote the new L2 (L1) height as
$H_{\text{L2}}'$ ($H_{\text{L1}}'$, respectively).
The L2 part of the search path to
each node is reduced from $O(H_{\text{L2}}) = O(\log{P})$ to
$H_{\text{L2}}'=\log_B{P} - \log_B\log_B{P}$, as there are no longer horizontal pointer-chasing processes.
Therefore, the space and replication overhead of shadow subtree from
$O(H_{\text{L2}})$ to $H_{\text{L2}}'$, as each node is only replicated in its L2
ancestors.
Furthermore, lower overhead enables us to reduce $H_{\text{L1}}'$ to
$\log_B \log_{B}{P}$.
\revision{
$H_{\text{L1}}'$ is effectively $1$ in practice,
because with our choice of $B=16$ it will
take over $10^{19}$ PIM modules for $H_{\text{L1}}'$  to exceed $1$.
}

Therefore, in practice, the height of L2 is reduced to $2$ levels, and
L1 reduced to $1$ level.  The probability for a key to reach L3 is $1/4096 <
1/P$, and the probability of reaching L2 is $1/16 < \log_{16}{P}$.


\myparagraph{Implementing \PREDECESSOR{} after Chunking.}
Chunking brings only one modification to the search process: changing
the \PushPull{} threshold $K$ from $H_{\text{L2}}$ to $B\cdot
H_{\text{L2}}'$, because we now ``Pull'' chunks with expected size
$O(B)$ instead of $O(1)$.
The detailed algorithm is explained in \S\ref{sec:proof_for_tree_search}.

Chunking also improves the communication costs of \PREDECESSOR. First,
as the height of L1 is reduced from $\log\log{P}$ to $\log_B \log_B{P}$,
each query now causes only $O(\log_B \log_B{P})$ communication in L1.
Second, chunking reduces the maximum possible number of Pull-only
rounds in L2 from $O(H_{\text{L2}}) = O(\log{P})$ to exactly
$H_{\text{L2}}'$, which is $\log_B{P} - \log_B \log_B{P}$.
This helps reduce the number of communication rounds under skewed workloads.

\vspace{-0.05cm}
\subsection{\PREDECESSOR Algorithm and Bounds}
\label{sec:proof_for_tree_search}

Next, we describe the complete algorithm for \PREDECESSOR{}, and
discuss its cost complexity.
We provide proofs for the communication cost and load balance of
\PREDECESSOR queries.
\revision{
For simplicity throughout the paper, our cost analyses assume that hash
functions provide uniform random maps to PIM modules, so that the lemmas
in \S\ref{sec:load_balance} can be applied.
}
Algorithm~\ref{alg:pred} summarizes the search process.
\confversiononly{\vspace{-0.25in}}

\begin{tboxalg}{\PREDECESSOR($Q$: batch of query keys)}\label{alg:pred}
	\begin{enumerate}[ref={\arabic*}, topsep=0pt,itemsep=0ex,partopsep=0ex,parsep=1ex, leftmargin=*]

    \item \Push queries from $Q$ evenly to PIM modules, and traverse
    L3.\label{pred:one}

    \item While the number of queries that will be sent to each PIM
    module for L2 is not balanced (i.e., the busiest PIM module gets
    more than $3 \times$ the average load), do the
    following:\label{pred:pull}
    \begin{enumerate}[ref={\arabic*}, topsep=0pt,itemsep=0ex,partopsep=0ex,parsep=1ex, leftmargin=*]
      \item \Pull all nodes with more than $K = B\cdot H_{\text{L2}}'$
        queries back to the CPU.
      \item Use these nodes to progress the
      pointer-chasing process of these queries by one step.

    \end{enumerate}

    \item \Push each query to the PIM module holding its search node, and traverse
    L2 using the shadow subtrees.\label{pred:shadow}

    \item Perform $H_{\text{L1}}'$ \PushPull rounds with $K = B$ to traverse L1, and
    retrieve the data nodes. \label{pred:pushpull}

	\end{enumerate}
\end{tboxalg}

\revision{
    \hongbo{update}
    We demonstrate here a mini step-by-step example of a \PREDECESSOR batch with
    four queries on the \pimtree{} in Figure \ref{fig:pim_tree} (note that
real batches
    should have more queries on this tree to achieve load-balance).
    The queries request the \PREDECESSOR{}s of keys 1, 3, 4 and 7.
    \pimtree{} first evenly distributes one query for each of the four PIM
    modules to search through L3, returning three queries falling onto
    the L2 node $[-\infty,3]$ and one falling onto node $[5,7]$.
    The context of node $[-\infty,3]$ will be pulled to the CPU from the
    yellow-masked PIM module due to its large contention,
    and the pointer-chasing searching of keys 1, 3 and 4 over L2 will be
    executed on the CPU side.
    After that, query 1, query (3, 4), and query 7 will be pushed to the PIM
    module containing
    the blue-masked node $[-\infty, 1]$, green-masked node $[3]$ and green-masked
    node $[5, 7]$ respectively
    on the local shadow subtrees to search through L2.
    Finally, all queries will be carried out in a similar \PushPull way
    to return the results from L1 and data nodes.
}

\confversiononly{\vspace{-0.05in}}
\begin{theorem}\label{thm:pred}
    A batch of \PREDECESSOR queries can be executed
    in $O(\log_B{P})$ communication
    rounds, with a cost of $O(\log_B\log_B{P})$ communication for each
    operation in total \whp.
    The execution is load balanced if the batch
    size $\batch = \Omega(P\log{P}\cdot B\cdot H_{\text{L2}}') =
    \Omega(P\log{P}\cdot B\cdot \log_B{P})$.
    \revision{The CPU-side memory footprint is $O(S)$.}
\end{theorem}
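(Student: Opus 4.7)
The plan is to bound the four quantities (rounds, per-query communication, load balance threshold, CPU-side memory) by walking through the four steps of Algorithm~2 in order, charging each query's costs across the phases, and invoking the weighted balls-into-bins lemma (Lemma~\ref{lemma:weightbalanced}) at every round where work must be distributed across PIM modules.

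First I would handle Steps~\ref{pred:one} and~\ref{pred:shadow} together, since both involve a single round of pushing a buffer of $S$ queries evenly across the $P$ modules. Step~\ref{pred:one} is immediate because L3 is fully replicated: distributing $S$ queries evenly gives $\Theta(S/P)$ work per module and $O(1)$ communication per query, and the result is a pointer into L2 for each query. Step~\ref{pred:shadow} (the single Push through L2 using shadow subtrees) is the one round where contention could ruin balance if it came first, so the key is to argue that after Step~\ref{pred:pull} no L2 node receives more than $K = B\cdot H_{\text{L2}}'$ queries, which makes its weight $O(B\cdot H_{\text{L2}}')$. Applying Lemma~\ref{lemma:weightbalanced} to the resulting weighted load requires total weight $\Omega(P\log P)$ times the max weight, giving the load-balance condition $S = \Omega(P\log P \cdot B\cdot H_{\text{L2}}')$ stated in the theorem.

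Next I would analyze Step~\ref{pred:pull}, the Pull-only phase, which is the main obstacle. Two things have to be shown: (i) the phase terminates in at most $H_{\text{L2}}'$ rounds, and (ii) the cost per round is load balanced and amortizes to $O(1)$ per query over the phase. For (i), I would argue by induction that after $i$ Pull rounds, every surviving query's frontier is at most $H_{\text{L2}}' - i$ levels from the bottom of L2, since a Pulled node advances each of its queries by one level on the CPU side via its shadow subtree pointers. For (ii), only nodes with $\ge K = B\cdot H_{\text{L2}}'$ queries are pulled; pulling a chunk of expected size $O(B)$ amortizes to $O(1/H_{\text{L2}}')$ communication per query touching it, and summing across the at most $H_{\text{L2}}'$ rounds gives $O(1)$ per query. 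Load balance of each Pull round follows by applying Lemma~\ref{lemma:weightbalanced} to the chunks being pulled: a chunk's weight is $\Theta(B)$ and there are $O(S/K)$ such chunks, and these are spread uniformly at random over the $P$ PIM IDs, so the same batch-size condition $S = \Omega(P\log P \cdot B\cdot H_{\text{L2}}')$ suffices.

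Finally I would handle Step~\ref{pred:pushpull}, the L1 Push-Pull, by observing that L1 has depth $H_{\text{L1}}' = \log_B\log_B P$ and each round with threshold $K=B$ is an exact repeat of the L2 analysis at smaller scale: in each of the $H_{\text{L1}}'$ rounds, either a query is Pushed (and its destination module is uniform random since L1 nodes are randomly hashed to PIM IDs) or it is part of a Pulled chunk of expected size $O(B)$. Summing, each query incurs $O(H_{\text{L1}}') = O(\log_B\log_B P)$ communication in L1 and $O(1)$ in L2 and L3, for a total of $O(\log_B\log_B P)$ per query w.h.p. The total round count is $O(H_{\text{L2}}') + O(H_{\text{L1}}') = O(\log_B P)$. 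For the CPU-side memory footprint, the CPU only stores, for each query, a constant-size frontier pointer (and, per round, at most $O(S/K)$ pulled chunks of expected size $O(B)$, which sums to $O(S)$ across the phase); pipelining between rounds does not change this bound. The main obstacle I anticipate is the amortization in Step~\ref{pred:pull}: carefully showing that the per-round ``chunks pulled'' inequality combined with the induction on frontier depth yields both termination in $H_{\text{L2}}'$ rounds and $O(1)$ per-query communication, while simultaneously satisfying the preconditions of Lemma~\ref{lemma:weightbalanced} in every round.
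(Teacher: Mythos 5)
Your proposal follows essentially the same route as the paper's proof: a stage-by-stage analysis of Algorithm~\ref{alg:pred}, with the Push rounds balanced via weighted balls-into-bins, the Pull-only phase terminating within $H_{\text{L2}}'$ rounds (bounded by the L2 height) with per-query cost amortized against the threshold $K = B\cdot H_{\text{L2}}'$, the L1 traversal handled by $H_{\text{L1}}'$ \PushPull{} rounds with $K=B$, and the same accounting for the batch-size condition and the $O(S)$ CPU-side footprint. The one technical caveat is in your Pull analysis: you treat each pulled chunk as having weight $\Theta(B)$ and invoke Lemma~\ref{lemma:weightbalanced}, but chunk sizes are geometric random variables with mean $B$ (not deterministically bounded), so that lemma's weight-cap hypothesis is not directly satisfied; the paper handles this with a dedicated geometric balls-into-bins bound (Lemma~\ref{lemma:geometric_balanced}), first balancing the number of pulled chunks per module and then concentrating the sum of geometric sizes within each module. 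This is a routine repair that does not change your decomposition or conclusions.
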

\confversiononly{\vspace{-0.05in}}

\conffulldifferent{
\revision{
We provide part of the proof here, and give the full proof with all
details in the full paper~\cite{pimtree2022full}.
The key challenge is to prove the communication bounds and load balance,
and we do this by proving separately for each stage of Algorithm~\ref{alg:pred}.
We take Lemma~\ref{lemma:push_only_proof}, the proof for the L2 \Push stage
(stage 3) as an example.
\vspace{-0.05in}

\begin{lemma}[\Push round for L2]\label{lemma:push_only_proof}
    \Push using the shadow subtrees (stage~\ref{pred:shadow})
    takes $1$ round, $O(1)$ communication \whp for each query, and is load balanced.
\end{lemma}
\vspace{-0.15in}

\begin{proof}
    In this stage we send each query as a task to the corresponding PIM
    module, incurring $O(1)$ communication per query and 1 round
    overall.
    For load balance, we analyze it as a weighted balls-into-bins
    game, where we take the target nodes as balls, the numbers of
    queries on the target nodes as weights, and PIMs as bins. The weight
    limit is $K = B\cdot H_{\text{L2}}'$ by assumption,
    as each node gets at most $K$ queries,
    and the weight sum is at most $\batch$. Applying
    \Cref{lemma:weightbalanced}, each PIM module incurs $O(\batch/P)$
    communication.
\end{proof}
\vspace{-0.05in}
}
}{
The following lemma is useful for bounding the communication of
queries that do not incur contention. We use \Cref{lemma:push,lemma:pull}
to prove \Cref{lemma:pull_only_proof,lemma:push_only_proof,lemma:push_pull_proof},
then combine them to prove \Cref{thm:pred}.

\begin{lemma}[Push]\label{lemma:push}
    For batch size $\batch = \Omega(K\cdot P\log{P})$,
    using \Push for queries whose target nodes have fewer than $K$
    queries incurs $O(\batch/P)$ communication per PIM module \whp.
\end{lemma}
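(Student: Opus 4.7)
The plan is to prove this as a direct application of the weighted balls-into-bins bound in \Cref{lemma:weightbalanced}. The key observation is that in the \Push method each query is routed to the PIM module hosting its target node, so the per-module communication is determined by how queries aggregate at target nodes, and by how those nodes are distributed across PIM modules via their random PIM IDs.

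First, I would set up the balls-into-bins model: treat each target node as a single \emph{ball} whose weight $w_i$ equals the number of queries routed to that node, and treat the $P$ PIM modules as the \emph{bins}. Since nodes are assigned to PIM modules uniformly at random by hashing, the balls land in bins uniformly at random. By the lemma's hypothesis every target node receives strictly fewer than $K$ queries, so $w_i < K$ for every ball, and the total weight $W = \sum_i w_i$ is at most $\batch$.

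Second, I would verify the premise of \Cref{lemma:weightbalanced}, namely $w_i < W/(P\log P)$. Using $\batch = \Omega(K\cdot P\log P)$, we have $K = O(\batch/(P\log P))$, so $w_i < K \le W/(P\log P)$ up to constants (after absorbing the constant into the $\Omega$). Applying \Cref{lemma:weightbalanced} then yields $O(W/P) = O(\batch/P)$ total weight per bin \whp, which is exactly the per-module communication incurred by \Push.

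The main (minor) obstacle is being careful with the interplay between the bound $w_i < K$ and the requirement $w_i < W/(P\log P)$: the statement works only because the batch-size assumption $\batch = \Omega(K\cdot P\log P)$ is precisely what makes $K$ small relative to $W/(P\log P)$. I would state this inequality explicitly when invoking \Cref{lemma:weightbalanced}, since it is the sole place the batch-size hypothesis is used. No other subtleties arise: the one communication round per query is immediate from the definition of \Push, and the randomness comes entirely from the uniform hashing of node IDs to PIM modules assumed in \S\ref{sec:proof_for_tree_search}.
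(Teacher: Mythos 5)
Your proposal is correct and follows essentially the same route as the paper's own proof: a weighted balls-into-bins argument with target nodes as balls (weight = number of queries), PIM modules as bins, weight limit $K$, total weight at most $\batch$, and an application of Lemma~\ref{lemma:weightbalanced}. If anything, you are slightly more explicit than the paper in spelling out how the batch-size hypothesis $\batch = \Omega(K\cdot P\log{P})$ is what makes the per-ball weight small enough relative to $\batch/(P\log P)$ for the lemma to apply.
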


\begin{proof}
    We analyze the communication using a weighted balls-into-bins
    game, where we take the target nodes as balls, the numbers of
    queries on the target nodes as weights, and PIMs as bins. The weight
    limit is $K$ by assumption, as each node gets at most $K$ queries,
    and the weight sum is at most $\batch$. Applying
    \Cref{lemma:weightbalanced}, each PIM module incurs $O(\batch/P)$
    communication.
\end{proof}

\begin{lemma}[Pull]\label{lemma:pull}
    For batch size $\batch = \Omega(K\cdot P\log{P})$, and nodes with
    geometrically-distributed chunk sizes, with an expected chunk size
    of $B$, using \Pull to fetch nodes with more than $K$ queries
    incurs $O(B\cdot \frac{\batch}{KP})$ communication for each PIM module \whp.
\end{lemma}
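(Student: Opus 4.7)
The plan is to cast \Pull as a weighted balls-into-bins game, with the pulled nodes as balls, their chunk sizes (the number of bytes moved per pull) as weights, and the $P$ PIM modules as bins. The uniform random placement of L2 and L1 nodes onto PIM modules makes the assignment of balls to bins independent and uniform, and the chunked-\skiplist{} construction (each key survives to the next level independently with probability $1/B$) makes chunk sizes independent geometric random variables with expectation $B$. So the lemma should follow by invoking the geometric-weight balls-into-bins bound, \Cref{lemma:geometric_balanced}.

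First I would count the balls: each pulled node contributes strictly more than $K$ queries out of a total of at most $\batch$ queries, so the number of pulled nodes satisfies $N \le \batch/K$. To avoid the degenerate case in which $N$ is too small for concentration, I would pad the set up to $N' := \lceil \batch/K\rceil$ dummy nodes whose weights are independent geometrics of expectation $B$; padding only inflates per-bin weight, so any upper bound I derive remains valid. The hypothesis $\batch = \Omega(K P \log P)$ then gives $N' = \Omega(P \log P)$, satisfying the precondition of \Cref{lemma:geometric_balanced}.

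Next I would instantiate that lemma with $m = N'$ and $\mu = B$, obtaining a per-bin weight of $O(\mu m / P) = O(B \cdot \batch/(K P))$ \whp, which is exactly the claimed per-PIM-module communication bound.

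The main obstacle is the concentration under unbounded (geometric) weights, since the bounded-weight \Cref{lemma:weightbalanced} does not apply directly here. This is precisely what \Cref{lemma:geometric_balanced} is designed to handle: it combines the unweighted balls-into-bins bound (\Cref{lemma:balls_into_bins}) to conclude that only $O(N'/P)$ pulled nodes land on each PIM module \whp, with a negative-binomial Chernoff tail~\cite{brown2011wasted} for the sum of that many independent geometrics. Because $N'/P = \Omega(\log P)$, the tail is polynomially small in $P$, giving the overall \whp guarantee.
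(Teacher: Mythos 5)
Your proof takes essentially the same route as the paper's: cast the pulled nodes as weighted balls with independent geometric weights of mean $B$, use $\batch = \Omega(K\cdot P\log P)$ to meet the ball-count precondition, and apply \Cref{lemma:geometric_balanced} with $\mu = B$ to get $O(B\cdot\frac{\batch}{KP})$ communication per PIM module \whp. Your padding of the ball set up to $\lceil \batch/K\rceil$ dummy balls is a minor refinement of a step the paper simply asserts (that the number of balls is $\Omega(P\log P)$), but otherwise the argument is identical.
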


\begin{proof}
    We \Pull no more than $\batch/K$ nodes in each round. The amount of
    communication caused by using \Pull to fetch each node is equal to
    its node size, which is geometrically distributed, and is $B$ in
    expectation. Using weighted balls-into-bins again, we treat the
    balls as target nodes, communication on each target node as
    weights, and PIMs as bins.  As $\batch = \Omega(K\cdot P\log{P})$,
    we have that the number of balls is $\Omega(P\log{P})$. Thus,
    applying \Cref{lemma:geometric_balanced} with $\mu = B$, each PIM
    module incurs $O(B \cdot \frac{\batch}{KP})$ communication \whp.
\end{proof}

\begin{lemma}[\PullOnly rounds for L2]\label{lemma:pull_only_proof}
    The multi-round \Pull stages for L2 (the loop in
    Algorithm~\ref{alg:pred}) are load balanced. Overall,
    the loop incurs $O(1)$ communication and PIM work \whp per
    operation, and finishes in $H_{\text{L2}}' = \log_B{P} - \log_B\log_B{P}$
    rounds.
\end{lemma}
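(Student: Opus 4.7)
The plan is to bound the three quantities (round count, per-round load balance, and total per-operation communication/PIM work) separately, then combine them via a union bound over the rounds.

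First, I would establish the round count. Each iteration of the loop in Algorithm~\ref{alg:pred} pulls the contended L2 nodes back and uses them to advance every query currently stalled at such a node by exactly one level down L2. Queries that reach the bottom of L2 no longer lie on any L2 node and thus do not appear in subsequent rounds. Since L2 has height $H_{\text{L2}}'$, every query advances through L2 in at most $H_{\text{L2}}'$ Pull rounds, at which point no L2 node carries any queries at all and the balance condition is trivially satisfied, forcing the loop to terminate. This is worst-case deterministic, independent of skew.

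Next, I would analyze a single Pull round. At the start of a round, the total number of queries still within L2 is at most $\batch$, and every pulled node carries strictly more than $K = B\cdot H_{\text{L2}}'$ queries, so the number of nodes pulled in that round is at most $\batch/K$. Since $\batch = \Omega(K \cdot P\log P)$ by the hypothesis of Theorem~\ref{thm:pred}, the number of pulled nodes is $\Omega(P\log P)$, which is exactly the regime Lemma~\ref{lemma:pull} applies to; invoking it gives $O(B \cdot \batch/(KP))$ communication per PIM module whp in a single round. The same quantity bounds the per-module PIM work, since preparing each outgoing chunk is linear in its size.

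Summing over at most $H_{\text{L2}}'$ rounds, the per-module communication is
\[
O\!\left(H_{\text{L2}}' \cdot \frac{B\cdot \batch}{KP}\right) \;=\; O\!\left(\frac{\batch}{P}\right),
\]
using $K = B\cdot H_{\text{L2}}'$. Dividing by the per-module share $\batch/P$ of operations gives $O(1)$ communication (and analogously $O(1)$ PIM work) per operation, and the per-round $O(\batch/P)$ bound is load balanced by definition. A union bound over the $H_{\text{L2}}'$ rounds preserves the whp guarantee, since $H_{\text{L2}}'$ is only polylogarithmic in $P$.

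The main obstacle I foresee is a subtle point in the round-count argument: I need to be careful that pulled chunks really let the CPU advance every stalled query down one L2 level in that round, independent of which key the query searches for. This is essentially guaranteed by the shadow-subtree construction together with the fact that the pulled node is itself the chunk at the current level of the search path, so the CPU has locally all information needed to route each query to its next-level target. Once this is spelled out, the remainder is a clean application of Lemma~\ref{lemma:pull} plus the telescoping bound above.
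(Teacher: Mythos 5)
Your proposal is correct and follows essentially the same route as the paper: bound the round count deterministically by the L2 height $H_{\text{L2}}'$, apply Lemma~\ref{lemma:pull} with $K = B\cdot H_{\text{L2}}'$ to get $O\!\left(\frac{\batch}{H_{\text{L2}}'\cdot P}\right)$ per-module communication per round, and sum over rounds to obtain $O(\batch/P)$ total, i.e., $O(1)$ per operation. Your extra care about the one-level-per-round advancement and the union bound over rounds only makes explicit what the paper's terser proof leaves implicit.
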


\begin{proof}
    Because the number of rounds is upper bounded by the L2 height, the loop
    must finish in $H_{\text{L2}}'$ rounds.
    To prove load balance and bound communication and work, we show that the CPU-PIM
    communication per PIM is $O(\frac{\batch}{H_{\text{L2}}' \cdot P})$ in
    each of the rounds.
    This result follows by applying \Cref{lemma:pull} with $K = B\cdot H_{\text{L2}}'$.
\end{proof}

\begin{lemma}[\Push round for L2]\label{lemma:push_only_proof}
    \Push using the shadow subtrees (stage~\ref{pred:shadow})
    takes $1$ round, $O(1)$ communication \whp for each query, and is load balanced.
\end{lemma}

\begin{proof}
    In this stage we send each query as a task to the corresponding PIM
    module, incurring $O(1)$ communication per query and 1 round
    overall.  The load balance follows by applying \Cref{lemma:pull}
    with $K = B\cdot H_{\text{L2}}'$.
\end{proof}

\begin{lemma}[\PushPull rounds]\label{lemma:push_pull_proof}
    For chunk nodes with geometric-distributed sizes of expectation $B$,
    the \PushPullSearch rounds (stage~\ref{pred:pushpull}) with batch
    size $\batch = B\cdot P\log{P}$ and threshold $K = B$ incur $O(H_{\text{L1}}')$
    communication and PIM work \whp for each query, and achieve load
    balance. The data nodes can be treated as chunk nodes with $B = 1$.
\end{lemma}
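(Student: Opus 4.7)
The plan is to bound each of the $H_{\text{L1}}'$ \PushPull rounds of stage~\ref{pred:pushpull} independently and sum the per-round costs. Within a single round, I partition the active queries into two groups according to the \PushPull decision at that round's L1 level: queries whose target L1 node currently holds fewer than $K = B$ queries are handled by \Push, while queries whose target node exceeds this threshold trigger a \Pull of that node to the CPU.

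For the \Push subgroup I apply \Cref{lemma:push} with $K = B$. The batch-size hypothesis $\batch = B\cdot P\log P$ meets the precondition $\batch = \Omega(K\cdot P\log P)$ exactly, yielding $O(\batch/P)$ CPU--PIM communication per PIM. For the \Pull subgroup I apply \Cref{lemma:pull} with the same $K = B$ and expected chunk size $B$; this gives $O(B \cdot \batch/(KP)) = O(\batch/P)$ per PIM as well. Adding the two bounds, each round incurs $O(\batch/P)$ communication and work per PIM, i.e.\ $O(1)$ amortized per query per round, and is therefore load-balanced. Summing over the $H_{\text{L1}}'$ rounds yields the claimed $O(H_{\text{L1}}')$ per-query total. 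For the final hop into the data layer, each data node is a key--value pair randomly assigned to a PIM, so it can be viewed as a degenerate chunk of expected size $B = 1$; the same argument with threshold $K = 1$ adds one more round and an $O(1)$ term, which is absorbed into the bound.

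The main obstacle I anticipate is ensuring the batch-size precondition of \Cref{lemma:push,lemma:pull} continues to hold at every round, not just the first. This should be fine: each query stays active through all $H_{\text{L1}}'$ rounds (it descends exactly one L1 level per round), so the effective ball count per round never drops below $\batch$. A secondary subtlety is the \Pull concentration, since the pulled mass is a sum of geometrically distributed chunk sizes; but this is precisely the regime handled by the weighted-geometric balls-in-bins argument underlying \Cref{lemma:pull}, so no new machinery is needed beyond invoking the already-proven lemmas.
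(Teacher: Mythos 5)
Your proposal is correct and matches the paper's argument: the paper's proof likewise applies \Cref{lemma:push} and \Cref{lemma:pull} with $K = B$ in each of the $H_{\text{L1}}'$ rounds (with data nodes as chunks of expected size $1$), and your write-up simply spells out the per-round split into \Push and \Pull subgroups and the batch-size precondition in more detail.
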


\begin{proof}
    As \PushPull is
    a combination of both \Push and \Pull, this can be proved by apply $K = B$ to
    both \Cref{lemma:push} and \Cref{lemma:pull}, for each of the $H_{\text{L1}}'$ rounds.
\end{proof}

\begin{proof}[Proof for Theorem~\ref{thm:pred}]
    \revision{The CPU-side memory footprint is $O(S)$ because the only
    datum needed for all steps is an array of $S$ (key, address) pairs,
    taking $O(S)$ space, and
    auxiliary memory used in each step (e.g. reading/writing the task buffer
    mentioned in Algorithm~\ref{alg:pattern}) is also $O(S)$.}

    We prove the communication bounds and load balance separately for
    each stage of Algorithm~\ref{alg:pred}:

    It's obvious that stage~\ref{pred:one} incurs $O(1)$ communication, one
    communication round. It's load balanced
    because L3 is replicated across the PIMs, and we evenly distribute the
    queries to the PIMs.

    Bounds for the Pull-only rounds and
    Push-only rounds in L2, and lastly the \PushPull{} rounds for L1
    and data nodes are proven seperately in Lemmas~\ref{lemma:pull_only_proof},
    \ref{lemma:push_only_proof} and~\ref{lemma:push_pull_proof}, respectively.
\end{proof}
}

\section{PIM-tree: Other Operations}
\label{sec:pim_tree}

Having described the design of the PIM-tree data structure in
\S\ref{sec:method}, using the \PREDECESSOR operation as the running
example, \conffulldifferent{
  we now briefly introduce how other PIM-tree operations are implemented. Please
  refer to the full paper~\cite{pimtree2022full} for more detail.
}{
  we now describe how other PIM-tree operations are implemented.
}

\subsection{\GET and \UPDATE using Hashing}\label{sec:get_and_update}

\GET and \UPDATE are operations with a given key. These operations also
do not modify the structure of the data structure. Therefore, we solve
these in one round and $O(1)$ communication per operation through a
hash-based approach by first (i) using a fixed hash function to map
keys to PIM modules, and (ii) building a local hash table on each PIM
module to map keys to the local memory addresses of their data nodes.

Because the data nodes are distributed by a hash function, we achieve
good load balance even for skewed workloads, assuming that there are
not duplicate operations to the same key. If such redundant operations
exist, we can solve this by preprocessing to combine operations on
the CPU-side, using a user-defined combinining mechanism.
In practice, we use a linear-probing hash table on the PIM-side, but
other hash-table variants could also be used.

\fullversiononly{
Algorithm~\ref{alg:get} summarizes the steps for \GET, and \Cref{thm:get} shows
the bounds; the steps for
\UPDATE are identical, except that each data value is updated in place.

\begin{tboxalg}{\GET($Q$: batch of query keys)}\label{alg:get}
	\begin{enumerate}[ref={\arabic*}, topsep=0pt,itemsep=0ex,partopsep=0ex,parsep=1ex, leftmargin=*]
    \item Host program finds the PIM module for each operation's key
    using the hash function.
    \item Each PIM module uses its local hash table to find the data node, and return the data value.
\end{enumerate}
\end{tboxalg}

\begin{theorem}
\label{thm:get}
    A batch of \GET (or \UPDATE) operations is executed in one communication
    round, incurring $O(1)$ communication and $O(1)$ expected
    PIM-work per operation.
    The execution is load balanced if the batch size $S = \Omega(P \log P)$.
    It requires $O(S)$ shared-side memory.
\end{theorem}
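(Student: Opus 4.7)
The plan is to verify each of the five claims in the theorem statement separately, since they decompose cleanly along the two steps of Algorithm~\ref{alg:get}. The one-round bound is immediate: step~1 is purely CPU-side bookkeeping (applying a hash function to each key to choose a target PIM module), and step~2 is a single bulk-synchronous round where the host scatters each (operation, key) record to its target module and then gathers a single response per operation. No intermediate synchronization is needed, so the whole batch completes in one round of the execution pattern from Algorithm~\ref{alg:pattern}.

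For the per-operation communication bound, I would note that each operation contributes a constant-size request (a key, and for \UPDATE a value) on the outbound scatter, and a constant-size reply on the inbound gather; hence $O(1)$ communication per operation. For the per-operation PIM-work bound, each PIM module services each assigned request with one linear-probing hash-table lookup on its local table; assuming a standard load factor and uniform hashing, the expected number of probes per lookup is $O(1)$, giving $O(1)$ expected PIM-work per operation.

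The main quantitative claim is load balance, and this is where the key argument sits. By the assumption that the hash function maps keys uniformly at random to the $P$ modules, the assignment of the $S$ operations to modules is an instance of placing $S$ balls uniformly into $P$ bins. Invoking Lemma~\ref{lemma:balls_into_bins} with $m = S = \Omega(P\log P)$ gives that every module receives $O(S/P)$ operations whp, which directly bounds both the per-module communication (at most $O(1)$ per operation) and the per-module expected PIM-work (at most $O(1)$ expected per operation, which concentrates to $O(S/P)$ whp over $\Omega(\log P)$ operations per bin by a Chernoff bound on sums of the bounded lookup costs). I would remark that the duplicate-key caveat from \S\ref{sec:get_and_update} matters here: before step~1, the host de-duplicates/combines operations on identical keys so that the balls-into-bins assumption (independent hash targets per ball) is preserved; otherwise a skewed key distribution would violate the uniform-random placement hypothesis.

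Finally, the $O(S)$ shared-side memory bound follows because the host only maintains the input array of $S$ operations, the $P$ scatter buffers whose total length is $S$ (since each operation contributes one constant-size record), and the analogous gather buffers of total length $O(S)$; all auxiliary structures used in step~1 (a per-module count, a prefix-sum, a permutation of indices) are $O(S + P) = O(S)$ given $S = \Omega(P\log P)$. The main obstacle I anticipate is the load-balance argument under duplicate keys, which is why explicitly appealing to the CPU-side combining step and then citing the uniform balls-into-bins lemma is the cleanest route; absent that step, one would instead need a weighted balls-into-bins argument via Lemma~\ref{lemma:weightbalanced} applied to the distinct keys, which requires a side assumption that no single key has more than $S/(P\log P)$ operations in the batch.
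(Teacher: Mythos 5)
Your proposal is correct and follows essentially the same route as the paper's proof: constant-size request/reply tasks give the one-round and $O(1)$-communication bounds, the local linear-probing hash table gives $O(1)$ expected PIM-work, and load balance follows from the uniform balls-into-bins lemma with $S = \Omega(P\log P)$. The extra details you add (CPU-side combining of duplicate keys, and the $O(S)$ host-memory accounting) match what the paper states in the surrounding text rather than in the proof itself, and do not change the argument.
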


\begin{proof}
    It takes one communication round, $O(1)$ communication and
    $O(1)$ expected PIM work, because
    \GET (\UPDATE) operations are packed into \textit{constant-size} \GET (\UPDATE) tasks,
    then sent to PIM modules and executed with \textit{constant expected PIM-work} by the
    local hash table of each PIM module.

    Since each \GET (\UPDATE)  incurs constant communication and
    PIM-work, \Cref{lemma:balls_into_bins} proves load
    balance for $S = \Omega(P\log{P})$.
\end{proof}
}

\revisionweak{

\subsection{\INSERT}
\label{sec:insert}

\begin{figure}[t]
    \centering
    \includegraphics[width=1\linewidth]{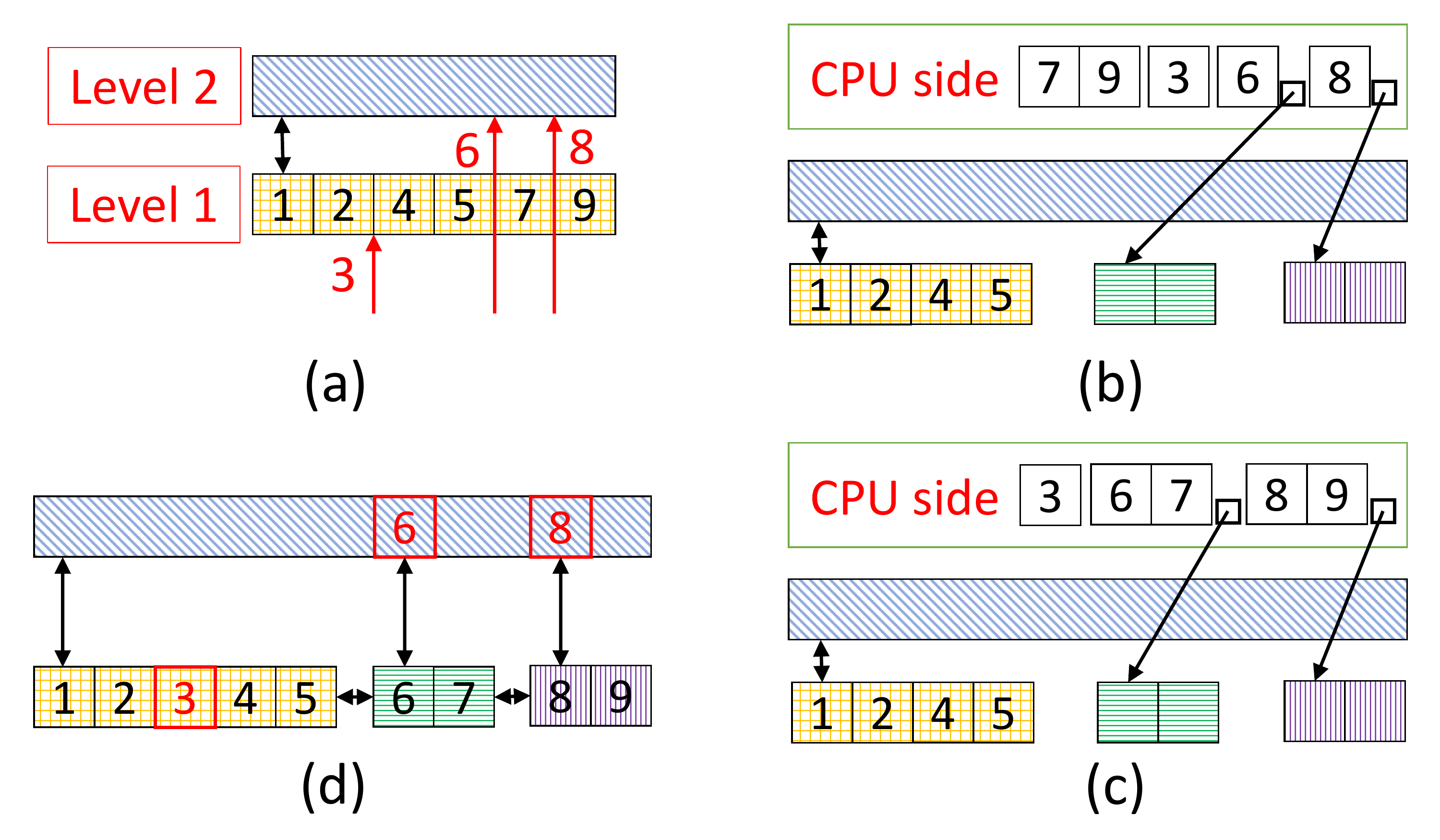}
    \caption{The process to insert keys $3$, $6$, and $8$ into L2 of the PIM-tree. Insertion $3$ has height $1$, and
    insertion $6$ and $8$ both have height $2$. These heights are generated beforehand by coin tossing
    with probability $1/B$. The height of the yellow node is $1$, and that of the blue node is $2$.
    As a result, key $3$ is inserted into the yellow node, and keys $6$ and $8$ split the yellow node.
    }
\label{fig:insert}
    \vspace{-0.07in}
\end{figure}

An \INSERT{(key,value)} operation inserts the key into
nodes on its search path, and the primary challenge is
to avoid contention and conflicts when multiple \INSERT{}s in a batch go
to the same node(s). We solve this by preprocessing:
we perform searches in parallel and record the trace of
each search, and use the traces to detect and handle
contention points. Our algorithm has three stages:
(1)~perform searches
to record the search trace, (2) modify the physical skip list
based on the search trace, and (3) update the shadow subtrees.

\conffulldifferent{
  \myparagraph{Update Physical Skip List.}
  After getting the search traces of each \INSERT,
  we \INSERT into these nodes according to random heights we generate
  prior to updating the PIM-tree.
  \Cref{fig:insert} is an example of our contention solving strategy.
  According to their pre-generated heights, the insertion of $3$ is into the yellow node,
  and the insertions of $6$ and $8$ will split that node. The insertion
  takes three steps: in (b) we fetch
  the right-side part of the node to the CPU side and generate empty new nodes in
  random PIM modules; in (c) we derive the correct element to be inserted to each
  node in CPU; finally in (d) we insert them.

  \hongbo{plural?}
  Choosing \skiplist{}s instead of \bplustrees as the basis of L1 and L2
  helps reduce the number of rounds, since
  we can insert to all nodes in parallel,
  rather than level-by-level bottom up from the leaves like the \bplustree.
  Insert to L3 is also executed in parallel with that of L2,
  performed by broadcasting \INSERT{}s reaching L3 to all PIMs.

  \myparagraph{Update Shadow Subtrees.}
  To maintain the invariant that shadow subtrees are copies of the
  search subtrees, we update the shadow subtrees
  after updating the physical skip list. There are three types of updates:
  (1) \textbf{build} the new shadow subtree for a new node, (2) \textbf{insert}
  a new node into the shadow subtrees of its ancestors, and (3) \textbf{trim} a shadow
  subtree after a node split.

  Our shadow subtree updating technique is straightforward. For build,
  we pull the L2 search tree and send it to
  the new node. For insert and trim, we observe that
  only shadow subtrees of nodes on the search trace need updating, so we
  send the newly-inserted node to all these nodes.

  \myparagraph{Discussion: Load Balance in \INSERT.}
  There is a load balance issue in our shadow subtree update algorithm:
  To keep shadow subtrees up to date, an L2 node
  may need updates of size $O(P/\log_B{P})$. For example, a new L2
  root needs to build its shadow subtree of expected $O(P/\log_B{P})$ nodes
  (given $H'_{L2} = \log_B{P} - \log_B\log_B{P}$).
  This contention factor $O(P/\log_B{P})$ will grow faster than the factor
  $K = B\cdot\log_B{P}$ of \PREDECESSOR as $P$ grows.
  This contention has minor effect at present, but we propose an algorithm
  to address this problem (not implemented at present).

  The solution is not to keep all shadow subtrees up to date, but instead
  to mark some nodes as \textit{unfinished}, update their shadow subtrees gradually in future
  rounds, and avoid using a shadow subtree until it is up to date. This helps
  smooth out the imbalance. \PREDECESSOR bounds
  still hold when the number of such nodes is below a threshold, and we achieve this
  by additional update rounds, which are load balanced when the number of such nodes is high.
}{
  \myparagraph{Record Search Trace.}
  Recording the search trace is simple when there are no shadow
  subtrees, as the addresses of all of the searched nodes correspond to
  their actual (non-replicated) locations. Specifically, we can obtain a
  search trace per-query by recording the PIM pointers used in the
  search process in each Push-Pull step. However, the recording becomes
  challenging when using shadow subtrees, because they enable the search
  to run in the shadow copy (using local pointers), but we still need
  the true physical node addresses to modify the skip list.  To solve this
  problem, we also store the physical pointers in shadow nodes.  Note
  that for each red shadow pointer in \Cref{fig:pim_tree}, there is a
  black physical pointer referencing the physical origin of the shadow
  node.  When the search uses a shadow subtree, the PIM module records
  the corresponding physical pointer, and therefore records the
  correct search trace.

  \myparagraph{Update Physical Skip List.}
  After obtaining the search traces,
  we \INSERT into these nodes according to random heights we generate
  prior to updating the PIM-tree.
  According to the height of each insertion and each layer, all
  insertions go to L1, $1/\log_B{P}$ goes to L2, and $1/P$ goes to L3.  Insertions
  to different layers are applied differently. Insertions to L3 are
  broadcasted to all PIM modules and applied to their L3 copy, which is
  a local \bplustree. For L1 and L2, insertions are applied to nodes on
  the search trace according to their heights. \Cref{fig:insert} is an
  example of insertions with contention,
  where we insert key $3$ into the node as well as splitting the node with
  key $6$ and $8$, according to the pre-generated height.
  The insertion
  takes three steps: in (b) we fetch
  the right-side part of the node to the CPU side and generate empty new nodes in
  random PIM modules; in (c) we derive the correct element to be inserted to each
  node in CPU; finally in (d) we insert them.

  The splitting policy of PIM-tree enables it to insert into different
  levels and different layers at once, rather than level-by-level bottom up
  from the leaves like the \bplustree. Complicated update processes (broadcast in L3,
  fetch and calculation in L2) over a small proportion of data can also be executed
  in parallel with the simple update processes (direct insert in L1) for all data,
  because they don't need results from lower layers.
  In practice, we insert to all levels in $2$
  rounds by (1) initializing new nodes and fetching the right-side
  parts in the first round, then (2) inserting into existing nodes and
  new nodes in the second round.

  Each node receives at most $O(B)$ insertions in these stages, therefore we
  avoid load imbalance. Although the number of concurrent insertions to
  a single shared node isn't limited, these insertions split this node with
  probability $1/B$, and only insertions with keys less than the
  minimum-keyed split will actually be applied to this node, which are $O(B)$
  insertions.

  \myparagraph{Update Shadow Subtrees.}
  To maintain the invariant that shadow subtrees are copies of the
  search subtrees, we need to update shadow subtrees
  after performing insertions to the physical skip
  list. There are three types of updates:
  (1) \textbf{build} the new shadow subtree for a new node, (2) \textbf{insert}
  a new node into the shadow subtrees of its ancestors, and (3) \textbf{trim} a shadow
  subtree after a node split.

  Our shadow subtree updating technique is straightforward. For build,
  we pull the L2 search tree and send it to
  the new node. For insert and trim, we observe that
  only shadow subtrees of nodes on the search trace need updating, so we
  send the newly-inserted node to all these nodes.


  \myparagraph{Algorithm.}
  Algorithm~\ref{alg:insert} summarizes the steps for \INSERT.

  \begin{tboxalg}{\INSERT($K$: batch of keys, $V$: batch of values)}\label{alg:insert}
    \begin{enumerate}[ref={\arabic*}, topsep=0pt,itemsep=0ex,partopsep=0ex,parsep=1ex, leftmargin=*]
      \item Generate a height for each \INSERT according to the geometric distribution with probability $1/B$.
      \item Run \PREDECESSOR on the keys to obtain/record search traces.
      \item Distribute new empty nodes to PIM modules, and record the address of these nodes in the CPU side.
      \item For each node that is split, send the minimum splitting key to fetch the right-side part.
      \item On the CPU side, compute the contents of each new node, and
      the new horizontal pointers.
      \item Apply all insertions to L1 and L2 nodes for both old and new nodes, and build horizontal pointers.
      \item Broadcast insertions that reach L3 to all PIM modules.
      \item Update shadow subtrees in existing L2 nodes.
      \item Build shadow subtrees for newly inserted L2 nodes.
  \end{enumerate}
  \end{tboxalg}
  Note that steps 3--4 and 6--8 are both done using a single communication round.

  \myparagraph{Optimizations.}
  We don't need the whole search path for most insertions.
  First, we only record the search
  trace in L1 and L2, as insertions to L3 are executed locally by
  the PIM modules. Also, L1 insertions with height $h < H_{L1}$ only need and
  affect $h$ nodes at the bottom of its search trace, therefore we only
  record these nodes, which is $O(1)$ nodes in expectation.
  This optimization
  ensures that each insertion records $O(1)$ nodes in their search path on
  average, because L1 insertions need only $O(1)$ expected nodes, and
  only a $1/\log_B{P}$-fraction of the insertions reaching L2
  require a search trace with $\log_B{P}$ nodes.

  \myparagraph{Discussion: Load Balance in \INSERT.}
  There is a load balance issue in our shadow subtree update algorithm:
  To keep shadow subtrees up to date, an L2 node
  may need updates of size $O(P/\log_B{P})$.
  First, new subtrees can have up to $O(P/\log_B{P})$
  (given $H'_{L2} = \log_B{P} - \log_B\log_B{P}$);
  second, a single shadow subtree may receive up to $O(P/\log_B{P})$ updates.
  (We derive this bound in a similar way to the concurrent insertions
  to a single node. An insertion splits the shadow subtree if it reaches L3.
  Each insertions reaching L2 splits the shadow subtree with probability $\log_B{P}/P$,
  so the number of actual insertions to a shadow subtree is $O(P/\log_B{P})$).
  This contention factor $O(P/\log_B{P})$ will grow faster than the factor
  $K = B\cdot\log_B{P}$ of \PREDECESSOR as $P$ grows.
  This contention has minor effect at present,
  but it may cause issues when $P$ gets larger in the future.
  Here we propose an algorithm to address this problem
  (not implemented at present).




  A solution to avoiding load imbalance
  caused by contention, even as the number of PIM modules scales,
  is to delay the update process. To
  identify contented shadow subtrees, we keep track of the number of
  shadow-subtree updates of any existing node, as well as the unbuilt
  shadow-subtree sizes for new nodes.  When the unfinished work on any
  node exceeds a threshold, applying the pending updates will cause
  contention, which leads to load imbalance.
  Therefore, instead of fully updating this node in a single round, we
  mark the node as \textit{unfinished}, and prevent future queries from calling
  \Push on it to avoid using its unfinished shadow subtree.  Whenever
  the overall number of unfinished nodes in the entire tree
  reaches a threshold $K_i$, we start an update phase of
  multiple update rounds, where we send constant information to each
  contention point, until the number of unfinished nodes drops below $K_i$.
  According to \Cref{lemma:balls_into_bins}, $K_i = P\log{P}$
  suffices to ensure the communication in each update round is balanced.

  Unfinished nodes bring only one modification to the \PREDECESSOR algorithm:
  In the multi-round \PullOnly phase, we have to \Pull any unfinished nodes
  on the search path of any query.
  \begin{theorem}
      \Cref{thm:pred} still holds with unfinished nodes.
  \end{theorem}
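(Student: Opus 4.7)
The plan is to reduce the modified algorithm to the original one analyzed in Theorem~\ref{thm:pred} by showing that the extra Pulls triggered by unfinished nodes fit inside the existing complexity envelope. Concretely, I would argue three things in turn: (i) the set of unfinished nodes is small enough that pulling all of them does not inflate per-PIM communication, (ii) the load-balance argument of Lemma~\ref{lemma:pull_only_proof} still applies to these extra Pulls, and (iii) no additional rounds are introduced asymptotically.

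For (i), recall that the delayed-update discipline keeps the global number of unfinished nodes below the threshold $K_i = P\log{P}$ at the start of any Predecessor batch. In the multi-round Pull-Only phase, we additionally Pull every unfinished node that lies on any query's search path. In the worst case this adds at most $K_i = P\log P$ extra node-pulls to the stage. Since each node has geometrically distributed size with expectation $B$, the total extra communication is $O(B \cdot P\log P)$, which distributes to $O(B\log P)$ per PIM under \Cref{lemma:weightbalanced}; under the batch-size assumption $S = \Omega(P\log P \cdot B \cdot \log_B P)$ this is $O(S/P)$ and therefore absorbed into the per-PIM $O(S/P)$ budget of the Pull-Only stage.

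For (ii), I would apply \Cref{lemma:weightbalanced} to the unfinished nodes directly: because the unfinished set is chosen obliviously to the query keys (it is a function only of prior update history), and PIM placement of L2 nodes is uniformly random, the weighted balls-into-bins argument used in Lemma~\ref{lemma:pull_only_proof} transfers verbatim to the union of (a) heavy nodes with $\ge K = B\cdot H'_{L2}$ queries and (b) unfinished nodes on any query path. Summing the balanced cost of (a) from Lemma~\ref{lemma:pull_only_proof} and the balanced cost of (b) from step (i) gives $O(1)$ communication per operation w.h.p.\ overall.

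For (iii), the main obstacle, and the only delicate point, is to ensure that the extra Pulls do not increase the round count beyond $O(\log_B P)$. Here I would note that all unfinished nodes on a query's trace can be Pulled in the same round as the contention-induced Pulls because Pulls are independent per node; thus the stopping criterion of the Pull-Only loop (the ``$3\times$ average load'' test) can be evaluated treating unfinished-node Pulls as part of the current round's traffic. Since each round still descends the L2 structure by at least one level for every query whose target is either heavy or unfinished, the loop terminates in at most $H'_{L2} = \log_B P - \log_B\log_B P$ rounds, matching Lemma~\ref{lemma:pull_only_proof}. The Push-Only L2 stage (Lemma~\ref{lemma:push_only_proof}) and the L1 Push-Pull stage (Lemma~\ref{lemma:push_pull_proof}) are unaffected because queries bypass the shadow subtrees of any remaining unfinished nodes by virtue of those nodes having been Pulled in the preceding Pull-Only phase. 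Combining with the unchanged analyses of stages~\ref{pred:one} and~\ref{pred:pushpull} yields the same $O(\log_B\log_B P)$ per-operation communication, $O(\log_B P)$ rounds, load balance, and $O(S)$ CPU-side footprint claimed in Theorem~\ref{thm:pred}.
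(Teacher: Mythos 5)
Your proposal is correct and follows essentially the same route as the paper: the paper's proof likewise reduces everything to the Pull-Only stage (Lemma~\ref{lemma:pull_only_proof}), observing that pulling at most $K_i = P\log P$ unfinished nodes of expected size $B$ adds only $O(B\log P)$ communication per PIM module, which is within the $O\bigl(\frac{S}{H'_{L2}\cdot P}\bigr)$ per-round budget under the stated batch-size assumption, while the other stages are unchanged. One small nit: for balls with geometrically distributed weights you should invoke \Cref{lemma:geometric_balanced} rather than \Cref{lemma:weightbalanced} (whose per-ball weight cap need not hold), but this does not affect the argument.
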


  \begin{proof}
      We only need to prove for \Cref{lemma:pull_only_proof} because
      we only change the \PullOnly stage. The lemma still holds
      because it takes only $O(B \log{P})$
      additional communication to \Pull
      at most $K_i = P\log{P}$ unfinished nodes in each round, and
      $O(B \log{P}) \le O(\frac{S}{H'_{L2} \cdot P})$.
  \end{proof}

  \begin{lemma}
    \label{lemma:constant_modify}
    An \INSERT only modify $O(1)$ expected nodes.
  \end{lemma}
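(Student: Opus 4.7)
The approach is a case analysis on the random height $h$ of the insertion, which by construction follows a geometric distribution with parameter $1/B$, so $\mathbb{E}[h] = B/(B-1) = O(1)$. I would partition the nodes an \INSERT modifies into three categories: (a) direct physical-node modifications at each level reached in L1 and L2, (b) broadcast modifications to the $P$ replicas of L3 when the height reaches L3, and (c) shadow-subtree maintenance at L2 ancestors of the inserted key (both updates to existing shadow subtrees and the construction of shadow subtrees for newly created L2 nodes).

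For (a), the number of nodes touched is at most $h$, contributing $\mathbb{E}[h] = O(1)$. For (b), the probability of reaching L3 is at most $B^{-(H'_{L1}+H'_{L2})} = 1/P$, and when it happens exactly one L3 replica on each of the $P$ PIM modules is modified, contributing $(1/P)\cdot P = O(1)$ in expectation. These two categories are routine.

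The real work lies in (c), which I would further split into two sub-pieces. First, inserting the new L2 node into each existing L2 ancestor's shadow subtree: the insertion reaches L2 with probability $B^{-H'_{L1}} = 1/\log_B P$ (using $H'_{L1} = \log_B\log_B P$), and in the worst case there are $H'_{L2} = O(\log_B P)$ L2 ancestors, each receiving one shadow update, so the expected contribution is $(1/\log_B P) \cdot O(\log_B P) = O(1)$. Second, building a brand-new shadow subtree for a freshly created L2 node: a new L2 node at L2-level $\ell$ (measured from the bottom of L2) requires a shadow copy of its search subtree, whose expected size under the geometric height distribution is $O(B^{\ell})$; such a node is created with probability $B^{-(H'_{L1}+\ell)}$. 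Summing over $\ell = 1,\dots,H'_{L2}$ gives $\sum_{\ell=1}^{H'_{L2}} B^{-(H'_{L1}+\ell)}\cdot O(B^{\ell}) = H'_{L2}\cdot O(B^{-H'_{L1}}) = O(\log_B P)\cdot O(1/\log_B P) = O(1)$. Combining all three categories proves the bound.

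The step I expect to be the main obstacle is the shadow-subtree build cost for new high-level L2 nodes, since a node near the top of L2 has a search subtree of size up to $\Theta(P/\log_B P)$. The argument hinges on the exponential decay $B^{-\ell}$ in the probability of creating a node at L2-level $\ell$ exactly cancelling the exponential growth $B^{\ell}$ in its expected search-subtree size, leaving only the $B^{-H'_{L1}}$ factor; care is needed because the exact expected search-subtree size depends on the geometric branching of the chunked skip list, and one must verify that subtree sizes at level $\ell$ truly concentrate at $\Theta(B^{\ell})$ in expectation rather than carrying hidden $\log$ factors that would break the telescoping sum.
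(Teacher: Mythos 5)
Your proposal is correct and follows essentially the same route as the paper's proof: a case analysis on the pre-generated geometric height, with the $1/\log_B P$ probability of reaching L2 cancelling the $O(\log_B P)$ ancestor shadow-subtree updates and the $1/P$ probability of reaching L3 cancelling the $O(P)$ replica updates. The only difference is that you additionally charge each insert for building the shadow subtrees of newly created L2 nodes (via the $B^{-\ell}$ versus $B^{\ell}$ telescoping), a cost the paper's lemma proof omits and instead accounts for in amortized, batch-level form inside the proof of Theorem~\ref{thm:insert}; your per-insert treatment of it is valid and, if anything, slightly more complete.
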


  \begin{proof}
    An \INSERT affects the PIM-tree differently according to which (L1, L2, L3)
    layer it reaches. This is decided by the pregenerated height.

    All \INSERT{}s affects L1.
    An L1-only \INSERT affects $O(1)$ expected nodes because its height is $O(1)$ expected.
    $O(1/\log_B{P})$ \INSERT{}s are L2 \INSERT{}s.
    An L2 \INSERT affects $H_{L1} + H_{L2} = \log_B{P}$ nodes, because
    it needs to update all shadow subtrees of its L2 ancestors.
    $O(1/P)$ \INSERT{}s are L3 \INSERT{}s.
    An L3 \INSERT affects $O(P)$ nodes because we need to update all replicas
    of the modified L3 nodes.
  \end{proof}
}

\confversiononly{\vspace{-0.1in}}
  \begin{theorem}\label{thm:insert}
    A batch of \INSERT operations can be executed in $O(\log_B{P})$
    IO rounds, incurring $O(\log_B \log_B{P})$ communication for each operation.
    The execution is load balanced if the batch
    size $\batch = \Omega(P\log{P}\cdot B\cdot H_{\text{L2}}') =
    \Omega(P\log{P}\cdot B\cdot \log_B{P})$.
    The CPU-side memory footprint is $O(S)$.
  \end{theorem}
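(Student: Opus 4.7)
The plan is to decompose an INSERT batch into three phases and bound each separately: (i) search and trace recording, (ii) physical skip list modification, and (iii) shadow subtree maintenance. Phase (i) is handled by invoking Theorem~\ref{thm:pred} directly on the keys: the \PREDECESSOR batch costs $O(\log_B P)$ rounds, $O(\log_B \log_B P)$ communication per operation, is load balanced under the stated batch size, and uses $O(S)$ CPU-side memory. Augmenting each query with its search trace adds at most the trace length per insertion to the CPU footprint, but by the trace-recording optimization mentioned in \S\ref{sec:insert} (only the bottom $h$ L1 nodes are recorded for the $(1-1/\log_B P)$-fraction of L1-only inserts, while only the $1/\log_B P$-fraction reaching L2 record an $O(\log_B P)$ trace), the expected trace length per insertion is $O(1)$, preserving the $O(S)$ memory bound.

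For phase (ii), I would exploit the pre-generated geometric heights. In expectation, a $1/\log_B P$ fraction of insertions reach L2 and a $1/P$ fraction reach L3. Allocating new empty nodes to random PIMs, fetching right-side parts of split nodes, and applying insertions to existing nodes each send $O(1)$ expected messages per insertion. Since each existing node receives at most $O(B)$ applied insertions (by the authors' splitting argument: only insertions with keys below the minimum-keyed split actually land on a given node), I would apply the weighted balls-into-bins Lemma~\ref{lemma:weightbalanced} with per-ball weight $O(B)$ and total weight $O(S)$ to get $O(S/P)$ communication per PIM whp, under the batch-size assumption $S = \Omega(P\log P \cdot B \cdot \log_B P)$. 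Broadcasting the expected $O(S/P)$ L3-bound insertions to all PIMs contributes one round and $O(S/P)$ communication per PIM.

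Phase (iii), shadow subtree maintenance, is the main obstacle. A new L2 node must build a shadow subtree of expected size $O(H_{\text{L2}}') = O(\log_B P)$, and insertion/trim affects the shadow subtrees of every L2 ancestor on the recorded trace, so the total shadow-subtree work per L2 insertion is $O(\log_B P)$. Because only a $1/\log_B P$ fraction of insertions reach L2, the amortized shadow-subtree communication per insertion is $O(1)$, matching the target bound. Load balance, however, is delicate: a heavy L2 ancestor can receive up to $O(P/\log_B P)$ pending updates in one round, which cannot be absorbed by a single application of Lemma~\ref{lemma:weightbalanced}. I would handle this with the unfinished-node deferral scheme suggested in \S\ref{sec:insert}: mark a node as unfinished when its pending shadow-subtree work exceeds the balance threshold, disallow \Push through an unfinished shadow subtree, and flush via additional update rounds once the number of unfinished nodes reaches $K_i = P \log P$. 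Lemma~\ref{lemma:balls_into_bins} then makes each flush round balanced, and pulling up to $P\log P$ unfinished nodes during each \PullOnly round of \PREDECESSOR adds only $O(B \log P)$ per-PIM communication, comfortably inside the $O(S/(H_{\text{L2}}' P))$ budget already allocated in Lemma~\ref{lemma:pull_only_proof}, so Theorem~\ref{thm:pred} still applies.

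Combining the three phases: the round count is $O(\log_B P)$, dominated by \PREDECESSOR, with only $O(1)$ additional rounds from allocation, application, broadcast, and shadow updates; the per-operation communication is $O(\log_B \log_B P)$ from the search plus $O(1)$ amortized from everything else; load balance in every round follows from Lemmas~\ref{lemma:weightbalanced} and~\ref{lemma:balls_into_bins} at the stated batch size; and the CPU-side footprint stays $O(S)$ since the trace array, the per-insertion new-node addresses, and all auxiliary buffers are linear in $S$.
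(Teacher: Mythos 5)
Your overall route is the same as the paper's: treat the \INSERT{} batch as a \PREDECESSOR{} batch (Theorem~\ref{thm:pred}) plus $O(1)$ extra rounds, keep the CPU footprint at $O(S)$ via the $O(1)$-expected recorded trace per insertion (the paper packages this as Lemma~\ref{lemma:constant_modify}), bound the physical-list updates with the balls-into-bins lemmas, bound shadow-subtree maintenance only in aggregate ($O(S)$ total updates), and delegate its load balance to the unfinished-node deferral scheme, using the $O(B\log P)\le O(S/(H_{\text{L2}}'\,P))$ slack in Lemma~\ref{lemma:pull_only_proof} to argue that Theorem~\ref{thm:pred} still holds. Your round count, per-operation communication, and batch-size condition all line up with the paper's proof.

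Two steps, however, are stated more loosely than they can be justified. First, in phase (ii) you assert that each existing node receives at most $O(B)$ applied insertions and feed weight $O(B)$ into Lemma~\ref{lemma:weightbalanced}. In fact the number of insertions actually applied at a contended leaf is geometrically distributed with mean $B$ (each colliding insertion splits the node with probability $1/B$), so individual nodes exceed any fixed $O(B)$ cap with non-negligible probability and the weight bound required by Lemma~\ref{lemma:weightbalanced} is not directly satisfied. The paper repairs exactly this point by splitting leaf targets into those with at most $B$ insertions (handled by Lemma~\ref{lemma:weightbalanced}) and the at most $S/B$ nodes with more, whose geometric excess is handled by Lemmas~\ref{lemma:balls_into_bins} and~\ref{lemma:geometric_balanced}; your argument needs the same case split (or an equivalent use of the geometric lemma). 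Second, a new L2 node does not build a shadow subtree of expected size $O(\log_B P)$: a node created at level $i$ of L2 builds one of size $O(B^{i+1})$ \whp{} (up to $O(P/\log_B P)$ for an L2 root, as you yourself note when discussing contention). The correct way to get the $O(1)$ amortized cost is the paper's level-by-level count --- roughly $O\bigl(S/(\log_B P \cdot B^{i+1})\bigr)$ new nodes at level $i$, each of size $O(B^{i+1})$, summing to $O(S)$ over all levels --- not a per-node $O(\log_B P)$ bound. Your final amortized and load-balance conclusions are correct, but these two statements should be replaced by the paper's more careful accounting.
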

\confversiononly{\vspace{-0.13in}}

\fullversiononly{
  \begin{proof}
    We treat an \INSERT execution as a \PREDECESSOR and additional steps. As the bounds
    for \INSERT is the same as that of \PREDECESSOR in IO rounds, communication and
    batch size requirements, here we prove that the additional steps also follows this bound.

    We first prove for the $O(S)$ CPU-side memory footprint. As we've already
    proved $O(S)$ memory footprint for \PREDECESSOR in \Cref{sec:proof_for_tree_search},
    we only prove that the additional memory space required by \INSERT is $O(S)$.
    These memory is mainly comprised of search paths and new nodes.
    (1) A batch of $S$ \INSERT{}s
    needs $O(S)$ \whp addresses for search paths, because we only need the addresses
    of the affected nodes, which is $O(1)$ for each \INSERT according
    to \Cref{lemma:constant_modify}. (2) For new nodes,
    $O(S/B)$ \whp new nodes are generated, and the size of each
    follows a geometric distribution with expectation $B$, given a
    $O(S)$ \whp total size.

    We then prove for the $O(\log_B\log_B{P})$ communication and $O(\log_B{P})$
    IO rounds. The additional steps take $O(1)$ IO rounds. For communication, each
    operation causes $O(1)$ communication because the costs of both inserting into
    existing nodes and generating new nodes are $O(1)$ expected for each node.
    Amortizing over a batch gives $O(\log_B\log_B{P})$ \whp communication.

    We then prove for the $O(\log_B\log_B{P})$ communication, $O(\log_B{P})$
    IO rounds, and load balance for the additinoal steps.
    step 1 and 5 causes no communication; step 7 is a broadcast of $O(S/P)$ data,
    hence causes load-balanced $O(S)$ communication; step 3 distributes
    $O(S/B) = O(P\log{P}\log_B{P})$ unit-size empty nodes each to a random PIM module.

    In step 4, we pull the right-side part of
    $O(S/B) = O(P\log{P}\log_B{P})$ nodes \whp, each node's size
    following geometric distribution with expected value $B$.
    The proof for $O(S/P)$ IO time is similar to that in \Cref{lemma:pull}.

    In step 6, we prove $O(S/P)$ IO time seperately for two type of insertions:
    building new nodes and insert into existing nodes.
    The proof of $O(S/P)$ IO time for new nodes is similar to that of step 4,
    as we build $O(S/B)$ new nodes with the same size distribution.
    Bounds for insertion into non-leaf existing nodes can be proved similarly,
    as there's $O(S/B)$ such nodes. For insertion to leaf existing nodes,
    there is (1) up to $S$ such insert; (2) up to $S$ target leaf existing nodes
    for uniform random workload;
    (3) up to $B$ \whp insert to a single node in case of data skew
    (geometric distribution with expectation $B$).
    We divide target leaf nodes into two types by whether this node has more than
    $B$ insertions. For nodes with less than $B$ insertion,
    we apply \Cref{lemma:weightbalanced}, where the weight limit is $B$ and the
    total weight is the total number of insert (no more than $S$).
    For nodes with more than $B$ insertion, there is no more than $(S/B)$ such node,
    and the number of additional inserts (other than the first $B$ inserts)
    for each node follows geometric distribution with possibility $1/B$.
    so we apply \Cref{lemma:balls_into_bins} and \Cref{lemma:geometric_balanced}
    respectively for the first $B$ inserts and additional inserts, to prove $O(S/P)$
    IO time.

    For shadow subtrees, we proof the sum of updates to all shadow subtrees is $O(S)$.
    We don't prove for for load balance, becaused they're executed by the
    update rounds that ensure load balance. We prove seperately for the two types of updates:
    updates to existing shadow subtrees, and building new shadow subtrees.
    There's $O(S)$ \whp updates to existing shadow subtrees, because $O(S/log_B{P})$ \whp insertions
    reach L2, and each insertion does $O(1)$ update to less than $\log_B{P}$ shadow subtrees
    (of its L2 ancestors).
    Building new shadow subtrees also take $O(S)$ updates: we denote L2 leaves as level 0,
    the L2 includes $\log_B{P} - \log_B\log_B{P} < \log_B{P}$ levels. Since the probability
    of PIM-tree is $1/B$,
    there's $O(S/\log{P}/B^{i + 1})$ \whp new nodes generated on level $i$, and we need
    to build a shadow subtree with size $O(B^{i + 1})$ \whp for each node. Multiplying the number
    of new nodes on each level and the size of these nodes gives $O(S/\log{P})$ updates on each
    level, and $O(S)$ updates for all levels.
  \end{proof}
}


\conffulldifferent{
  \myparagraph{Implementing \DELETE}.
  We handle deletions in similarly to insertions: first obtain the
  search trace, then delete keys from nodes on the trace, finally
  apply updates to shadow subtrees. While insertion causes node split,
  deletion causes nodes to merge when removing the pivot key from a
  node. See the full paper~\cite{pimtree2022full} for details.
}{
\subsection{\DELETE}
We perform deletions in a similar way to insertions:
first get the search trace, then delete keys from nodes on the search trace, and finally apply updates
to shadow subtrees. While insertion causes node split, deletion causes node merge when the pivot key is deleted from a node.


We need the height of each key to modify the tree. While these heights are pre-generated when doing inserts,
we need to get them before doing deletions. We store the height of each key in the data node.
Before deletions, we do a batch \GET to get these heights and filter out invalid deletions.
With these heights, we do deletions just like insertions: modify the physical tree by removing keys and
merging nodes, then update shadow subtrees for nodes on the search path.

A batch of \DELETE operations can be executed in $O(\log_B{P})$ communication rounds,
incurring $O(\log_B \log_B{P})$ communication for each operation.
Algorithm~\ref{alg:delete} summarizes the steps.

\begin{tboxalg}{\DELETE($Q$: batch of query keys)}\label{alg:delete}
	\begin{enumerate}[ref={\arabic*}, topsep=0pt,itemsep=0ex,partopsep=0ex,parsep=1ex, leftmargin=*]
    \item Preprocess: get the height for each valid \DELETE, and remove any invalid \DELETE{}s.
    \item Batch search to obtain search traces for each key.
    \item Remove the deleted nodes from the search traces.
    \item Fetch the remaining part if a node's pivot key is deleted.
    \item Merge horizontal consecutive remaining parts in the CPU.
    \item Insert the remaining parts to the left-side node of the removed nodes.
    \item Build horizontal pointers.
    \item Broadcast deletions that reach L3 to all PIM modules.
    \item Update shadow subtrees in existing L2 nodes.
\end{enumerate}
\end{tboxalg}
Note that steps 3--4 and 6--8 are both done using a single communication round.
}
}

\subsection{\SCAN}

The \SCAN{(LKey,Rkey)} operation (a.k.a.~range query) returns all the
(key, value) pairs whose keys fall into the range of [Lkey, Rkey].
\conffulldifferent{
  \revision{Its algorithm is similar to \PREDECESSOR.}
}{
  Algorithm \ref{alg:scan} summarizes the \SCAN process.
}

When running a batch of \SCAN queries,
we first on the CPU side merge all overlapping ranges in the batch into
groups of disjoint larger \conffulldifferent{ranges.}{
  ranges, by sorting the Lkey of the ranges and
  then carrying out a prefix sum on the Rkey, with the binary associative
  operator set to be $max()$.
  The leftmost bound of these merged disjoint ranges are the Lkey$[i]$ where
  $\mbox{Lkey}[i]>\mbox{Rkey}[i-1]$.
  PIM-tree can use user-defined thresholds to split too large merged ranges
  into several small disjoint ranges to avoid overflows on hardware as well
  as achieve load-balance.
  Then the PIM-tree carries out batched scans on these disjoint ranges and
  eventually rearranges corresponding results from the fetched (key, value) pairs.
}
PIM-tree then carries out \SCAN throughout the L3 traversal, by evenly distributing the batched range quries to all PIM modules and maintaining two boundary nodes (the predecessors of Lkey and Rkey on the current level) for each range query through the level-by-level L3 search.
Within a single range, only the two boundary nodes and their intermediates will be involved later.
The two boundary nodes are marked with \SearchRequired labels, while
the intermediates are marked with \FetchAll labels.

\FetchAll nodes are required to return all their leaf data nodes.
Note that the range queries sent to the PIM modules are disjoint, so \FetchAll nodes do not generate contention points throughout the L2 searching.
Thus, \FetchAll queries can be simply pushed to PIM modules and take advantage of shadow subtrees.

\conffulldifferent{
The \SearchRequired nodes are processed similar to \PREDECESSOR,
using \PushPullSearch.
Contention points are pulled to the CPU, while the others are pushed to the PIM modules.
}{
The \SearchRequired nodes are processed similar to \PREDECESSOR.
Since \SearchRequired nodes might overlap on some highest levels in L2 and thus cause load-imbalance, similar \PushPullSearch is taken.
\SearchRequired nodes with large contentions are pulled to the CPU side, while the others are pushed to the PIM modules.
}
\pimtree{} maintains the two boundary nodes in each range query by a \PREDECESSOR-like searching throughout L2, while the newly-generated intermediate nodes in lower levels are labeled with \FetchAll.

\fullversiononly{
\begin{tboxalg}{\SCAN($R$: batch of range queries)}\label{alg:scan}
	\begin{enumerate}[ref={\arabic*}, topsep=0pt,itemsep=0ex,partopsep=0ex,parsep=1ex, leftmargin=*]

    \item Preprocessing: On the CPU side, merge the overlapping range queries in $R$; then split too large ranges in the merged results.\label{scan:preprocess}
    \begin{enumerate}[ref={\arabic*}, topsep=0pt,itemsep=0ex,partopsep=0ex,parsep=1ex, leftmargin=*]
      \item Sort $R$ with the $LKeys$ in ascending order. Store the sorted ranges in $R'$.
      \item Construct a new array $RRkeys$ with $RRkeys[i]$ representing the maximum value in $R'.RKeys[0:i]$, by using a parallel prefix sum on $R'.RKeys$ with the binary associative operator set to be $max()$.
      \item Each $i$ with $R'.LKeys[i]>RRkeys[i-1]$ represents a starting of a merged range required. Construct merged range queries $R''$ based on a parallel comparison and packing.
      \item Split too large merged ranges in $R''$ based on user-defined thresholds.
    \end{enumerate}

    \item \Push range queries in $R''$ evenly to the PIM modules and traverse L3. For L3 nodes $N_1, N_2, ..., N_m$ that falls into a single range query, mark $N_1$ and $N_m$ with \SearchRequired labels, and the others with \FetchAll labels. \label{scan:L3}

    \item Return all leaf nodes for \FetchAll queries.\label{scan:fetch}
    \begin{enumerate}[ref={\arabic*}, topsep=0pt,itemsep=0ex,partopsep=0ex,parsep=1ex, leftmargin=*]
      \item Push each \FetchAll query to the PIM module holding its search node, and traverse L2 using the shadow subtrees.
      \item Traverse L1 and retrieve the data nodes.
    \end{enumerate}

    \item Use push-pull methods to process \SearchRequired nodes with different contentions.\label{scan:search}
    \begin{enumerate}[ref={\arabic*}, topsep=0pt,itemsep=0ex,partopsep=0ex,parsep=1ex, leftmargin=*]
      \item \Pull all nodes with more than $K = B\cdot H_{\text{L2}}'$
        queries back to the CPU. Use these nodes to process the
        pointer-chasing process of these queries by one step.
      \item \Push other nodes to the PIM module. Traverse L2 using the shadow subtrees. Maintain two boundary nodes in a range as \SearchRequired through all levels of the PIM-tree and perform \PREDECESSOR-like searches until retrieving the data nodes. Mark the intermediate nodes as \FetchAll and process them using Step \ref{scan:fetch}.
    \end{enumerate}

    \item Rearrange the result with $R$ and the returned key-value pairs on the CPU side.

	\end{enumerate}
\end{tboxalg}
}

\section{Implementation}

\myparagraph{CPU-PIM Pipelining.}\label{sec:pipelining}
Thus far, we have introduced algorithms where tasks on the CPU and PIM
run in a synchronized, tick-tock manner in each round as depicted in
Algorithm~\ref{alg:pattern}.  The total execution time of this
approach consists of three non-overlapping components: CPU-only time,
PIM-only time, and communication time.  Communication requires both
CPU and PIM, but the other two components only utilize one part of the
system, which presents an opportunity to reduce execution time by
pipelining the CPU-only and PIM-only components.

For pipelining, we consider executions that run multiple batches in
parallel in the PIM-tree. 
This is shown in \Cref{fig:cpu_pim_pipelining},
where ``CPU'' represents time spent in CPU-only execution,
while ``IO \& PIM'' represents time spent in CPU-PIM communication and the PIM program.
On our UPMEM system,
CPU-PIM communication requires exclusive control of the PIM side,
and any concurrent use of the PIM side will cause a hardware fault.
Hence, one batch needs to wait for the PIM side to finish the current execution tasks.
We only pipeline queries in our experiments, since update batches cannot be carried out concurrently. 
For mixed operations, we protect the PIM-tree by a read-write lock to
prevent update batches from running concurrently with other batches.

\begin{figure}[t]
    \centering
    \includegraphics[width=0.8\linewidth]{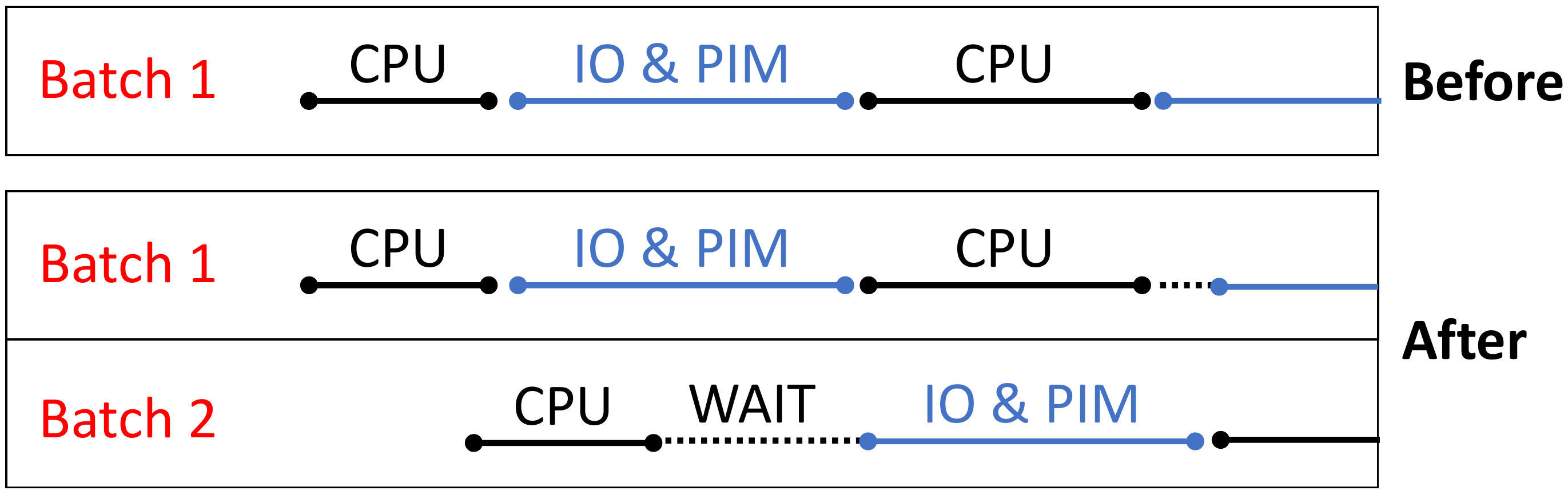}
    \caption{Program traces before/after CPU-PIM pipelining
    }
\label{fig:cpu_pim_pipelining}
\end{figure}

\myparagraph{PIM Program.}
\label{sec:pim_program}
PIM-tree's PIM program is a parallel executor of the tasks in the
buffer sent from the CPU.
It is designed to address two features of UPMEM's current PIM processors.
First, the PIM processor is a fine-grained multi-threaded computing
unit~\cite{gomez2021benchmarking}, and requires at least 11 threads to
fill the pipeline, so we write PIM programs in the form of 12 threads.
Second, UPMEM's system only supports PIM programs with fewer than 4K
instructions, but the implementation of PIM-tree exceeds this bound.
To bypass this restriction, we write the PIM program as multiple separate
modules, and load each module when needed. Only \INSERT and \DELETE
operations require swapping modules; program loading currently takes around
$25\%$ of the execution time. The remaining
operations on PIM-tree fit within the 4K instruction limit.



\section{Evaluation}


\revision{
In this section, we evaluate our new PIM-optimized indexes on a PIM-equipped
machine provided by UPMEM, and two traditional state-of-the-art indexes on a machine
with similar performance. We summarize our expeirmental results from
this section as follows:

\begin{enumerate}[leftmargin=*]
  \item The PIM-tree performs better than the range-partitioned skip list
        under skewed workloads in terms of throughput, memory-bus
        communication, and energy consumption.
  \item The PIM-tree causes lower communication on the memory bus compared
        with traditional indexes without PIM.
  \item All optimizations mentioned, including \PushPullSearch, shadow
  subtrees, chunked skip list and CPU-PIM pipelining, yield
  performance increases to (some) PIM-tree operations.
\end{enumerate}
}

\subsection{Experiment Setup}

\myparagraph{UPMEM's PIM Platform}.
We evaluate PIM-tree on a PIM-equipped server provided by UPMEM(R).
The server has two Intel(R) Xeon(R) Silver 4126 CPUs, each CPU with 16
cores at 2.10 GHz and $22$ MB cache.
Each socket has 6 memory channels: $4$ DIMMs of conventional DRAM are
implemented on $2$ channels, while $8$ UPMEM DIMMs are on the other
$4$ channels.
Each of the 16 UPMEM DIMMs has 2 ranks, each rank has 8 chips, and
each chip has 8 PIM modules.  \revision{There are \numDPUsAll PIM modules in
total.}

\revision{
\myparagraph{Traditional Machine w/o PIM}.
We evaluate traditional indexes on a machine with
two Intel(R) Xeon(R) CPU E5-2630 v4 CPUs, each CPU with 10 cores at
2.20 GHz and $25$ MB cache. Each socket has $4$ memory channels. There
are no PIM-equipped DIMMs.
We cannot evaluate traditional indexes on the server of UPMEM because
$2/3$ of its memory channels are used by PIM-equipped DIMMs, which cannot
be used as the main memory. Directly running traditional indexes on the
server would cause unfairness in main memory bandwidth for the
traditional indexes.
In our experiments we choose the state-of-the-art
\bst \cite{DBLP:conf/ppopp/BronsonCCO10}
and \abtree \cite{brown2017thesis} as competitors. Both implementations
are obtained from
\conffulldifferent{SetBench~\cite{DBLP:conf/usenix/Arbel-Raviv0018}.}
{the SetBench benchmarking suite~\cite{DBLP:conf/usenix/Arbel-Raviv0018}.}
}



\confversiononly{\vspace{-0.3cm}}
\myparagraph{Range-Partitioned and Jump-Push Baselines}.
We implement a range-partitioned-based ordered PIM index as our primary baseline,
where both data nodes and index nodes are distributed to PIM modules based on the ranges of the key~\cite{liu2017concurrent, choe2019ndp}.
We record the range splits in the CPU side, and use these splits to find the targeted PIM module
of each operation. Point operations are sent to and executed on the corresponding PIM module.
Running a batch of \SCAN operations is similar, except that it runs an additional
splitting in queries according to the range splits before tasks are sent to the
PIMs.
We also build a local hash table on all PIM modules for \GET.

\revision{
We also implement the PIM-balanced skip list~\cite{kang2021processing}
described in \S\ref{sec:pim_balanced_index} as another baseline.  We
experimentally evaluate this approach when discussing the impact of
the optimizations proposed in this paper,
in \Cref{fig:pim_tree_optimizations}, where the algorithm is called ``Jump-Push based''.
}

\revision{
\confversiononly{\vspace{-0.3cm}}
\myparagraph{Test Framework.}
We run multiple types of operations on the PIM-tree, range-partitioning
skip list we implemented, and the state-of-the-art traditional indexes.
In all experiments, we first warm up the index by running the
\textbf{initialize set} that \INSERT key-value pairs,
then evaluate the index by the \textbf{evaluation set} of multiple operations.
All operations are loaded from pre-generated test files. PIM algorithms
(the PIM-tree and range-partitioning) run operations in batches, and traditional
indexes run them directly with multi-threaded parallelism.
In all experiments, the sizes of both keys and values are set to 8 bytes.

To study the algorithms, we measure both the time spent, and
the memory bus traffic. Memory bus traffic is measured by adding CPU-PIM and
CPU-DRAM communication, the prior one measured by a counter increased
whenever a PIM function(e.g., PIM\_Broadcast) is called,
and the later one measured as cache misses by PAPI.
We bind the program to a single NUMA node and disable the CPU-PIM pipeline
when measuring cache misses for an accurate traffic measurement.
As each CPU of the PIM-equipped machine has two NUMA nodes,
the effective cache of the PIM algorithms is reduced to $11$ MB,
half of the full cache, under this setup.
Time is measured with full interleave over all NUMA nodes.

Instability in performance exists in the current generation of PIM hardwares.
We oberserved an approximately $\pm15\%$ fluctuation in the measured metrics mentioned above in our experiments.
}

\subsection{Microbenchmarks}
\label{sec:overall_performance}

\myparagraph{Workload Setup}.
\label{sec:micro_setup} Each test first warms up the index by inserting
\dataSize uniform random key-value pairs;\footnote{This is a favorable setting
for the range-partitioned baseline, because
the range boundaries are stable. The performance of the PIM-tree
is not impacted by the distribution of key-value pairs over time.}
then for testing it executes (i) \testSize point operations or (ii)
\scanTestSize \SCAN operations that each retrieve $100$ elements in expectation.
Point operations use batch size $S=\mbox{\batchSize}$, and \SCAN operations
use batch size $S=\mbox{\scanBatchSize}$.

We generate skewed workloads with Zipfian
distribution~\cite{zipf2016human}.  However, workloads generated by
Zipf-skew over elements is not ideal for evaluating batch-parallel
ordered indexes, because this skew can be easily handled by a
deduplication in preprocessing on the CPU side, by merging operations
of the same key into one.
\revisionweak{To better represent the spatial bias, where
keys in some ranges are more likely to be accessed in the same batch,
we slightly modify the way to generate our Zipfian workload, as follows:
(i) we divide the key space evenly into $P = \numDPUs$ parts;
(ii) for each operation, we first choose a part according to the
Zipfian distribution, then choose a uniformly random element in that
part.}  For operations for existing keys (\GET, \DELETE), we divide and choose among
the keys currently in the index; for operations on
arbitrary keys (\PREDECESSOR, \INSERT, \SCAN), the key space consists
of all 64-bit integers.
We periodically shuffle the probability of
each part in Zipfian distribution.  This helps alleviate, but not
eliminate the PIM memory overflow problem of the range-partitioned
baseline caused by \INSERT{}s accumulating in high-probability parts.
PIM-tree gains no benefit from this shuffle.

To show results on different amounts of skew, we evaluate
the algorithms on different $\mathbf{\alpha}$ values in the
Zipfian distribution, ranging from $0$ (uniformly random) to $1.2$.
\revision{With this skewness generation approach,
less than $10\%$ operations are eliminated because of duplicated keys,
under the most skewed case ($\mathbf{\alpha} = 1.2$).}

%
%

\begin{figure*}
  \vspace{-5.5em}
  \centering
  \subcaptionbox{Get (by hash)\label{fig:get_performance}}
                {\vspace{-0.8em}\includegraphics[width=0.3\linewidth]{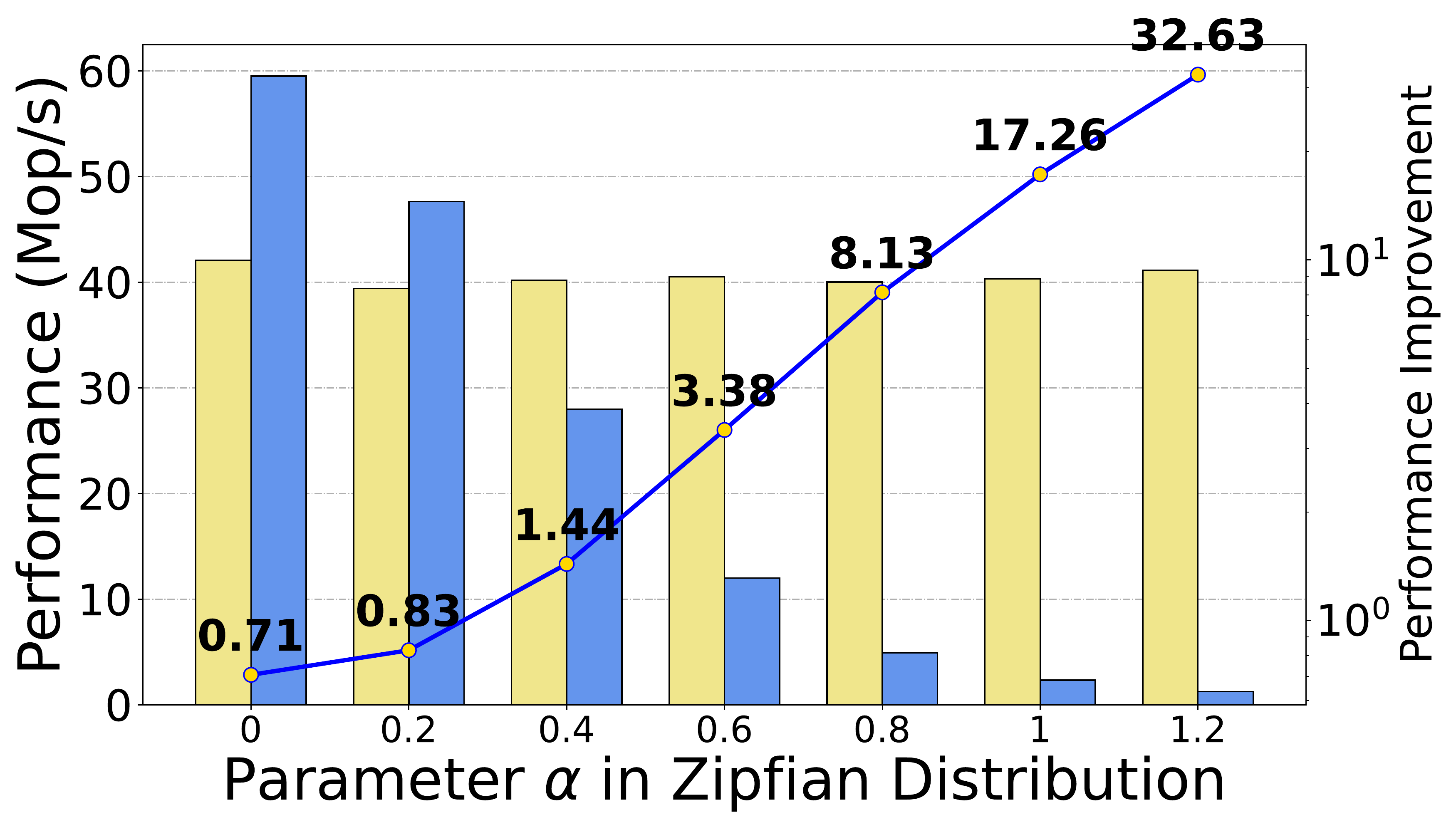}}
                \hspace{0.05in}
  \subcaptionbox{Predecessor\label{fig:search_performance}}
    {\vspace{-0.8em}\includegraphics[width=0.3\linewidth]{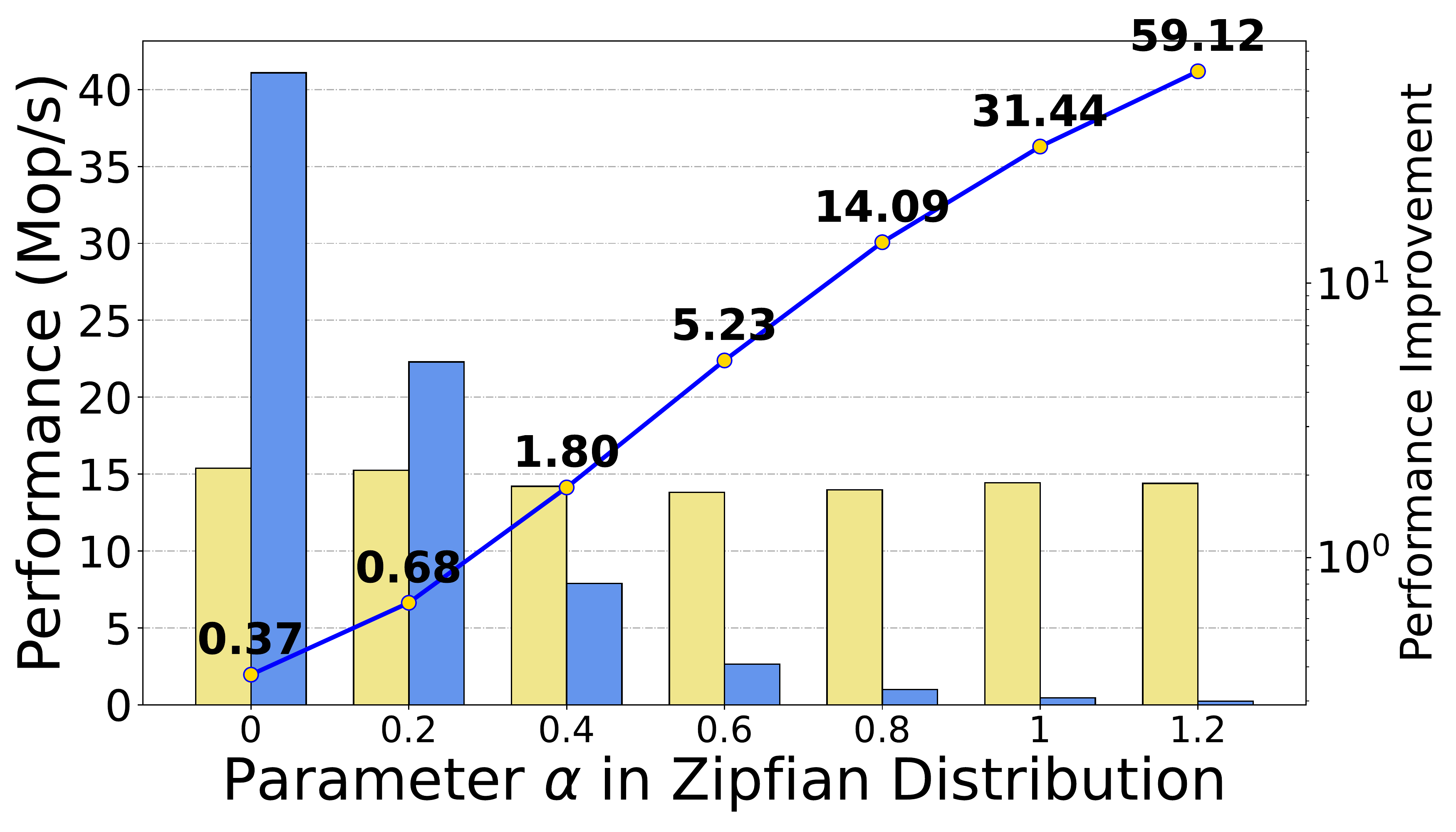}}
                \hspace{0.05in}
  \subcaptionbox{Insert\label{fig:insert_performance}}
    {\vspace{-0.8em}\includegraphics[width=0.3\linewidth]{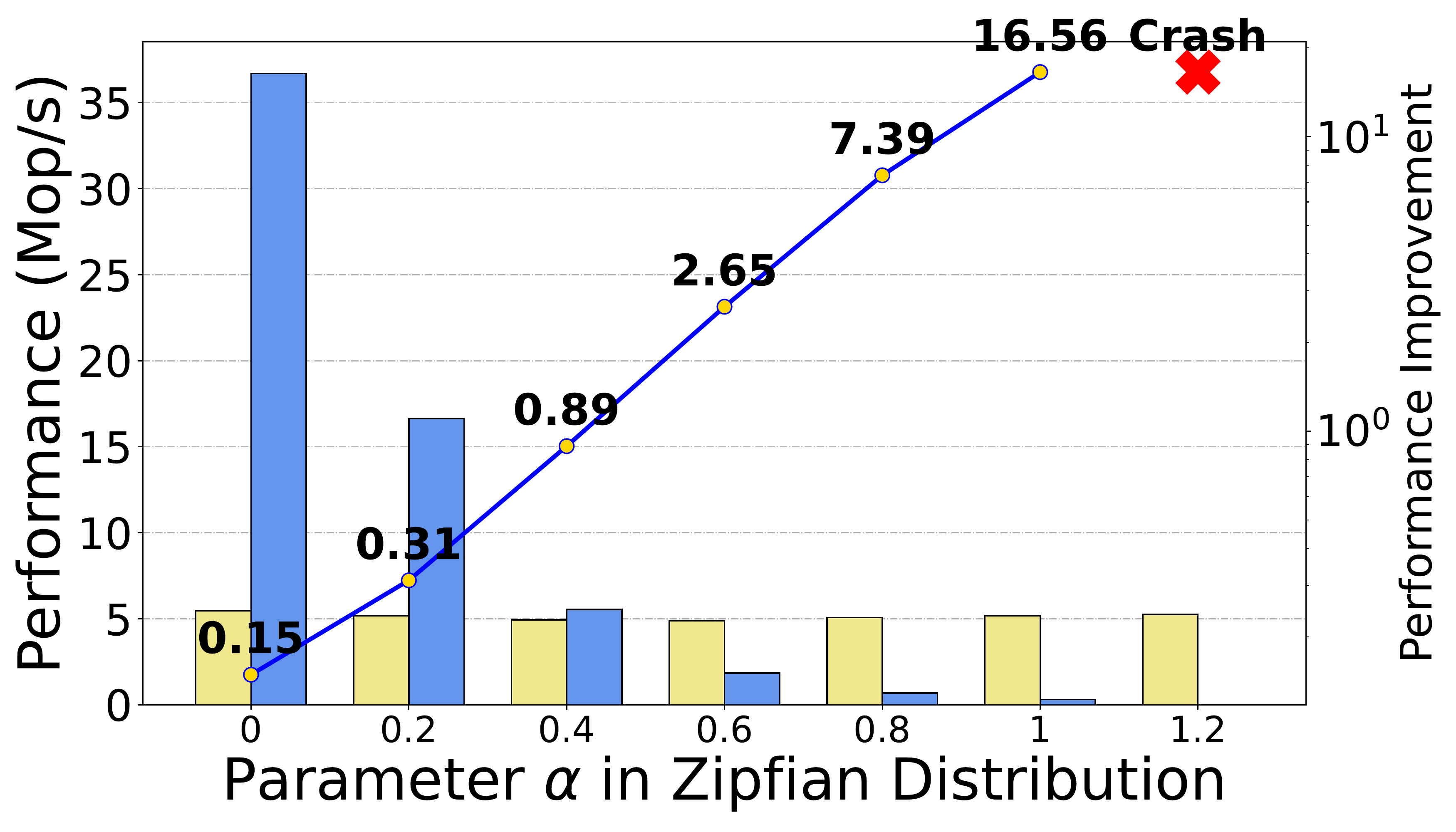}}
  \subcaptionbox{Delete\label{fig:delete_performance}}
    {\vspace{-0.8em}\includegraphics[width=0.3\linewidth]{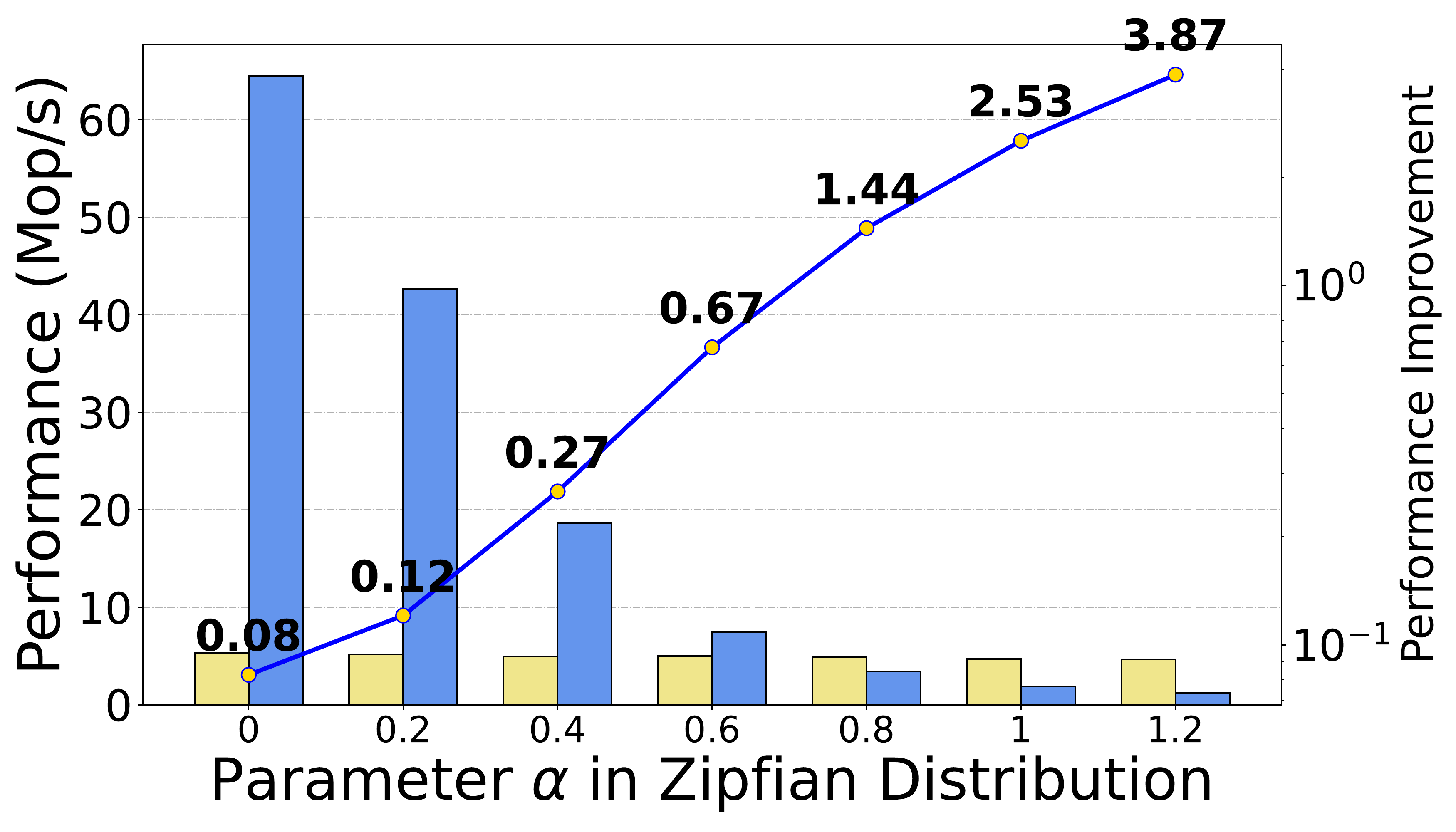}}
                \hspace{0.05in}
  \subcaptionbox{Scan\label{fig:scan_performance}}
    {\vspace{-0.8em}\includegraphics[width=0.3\linewidth]{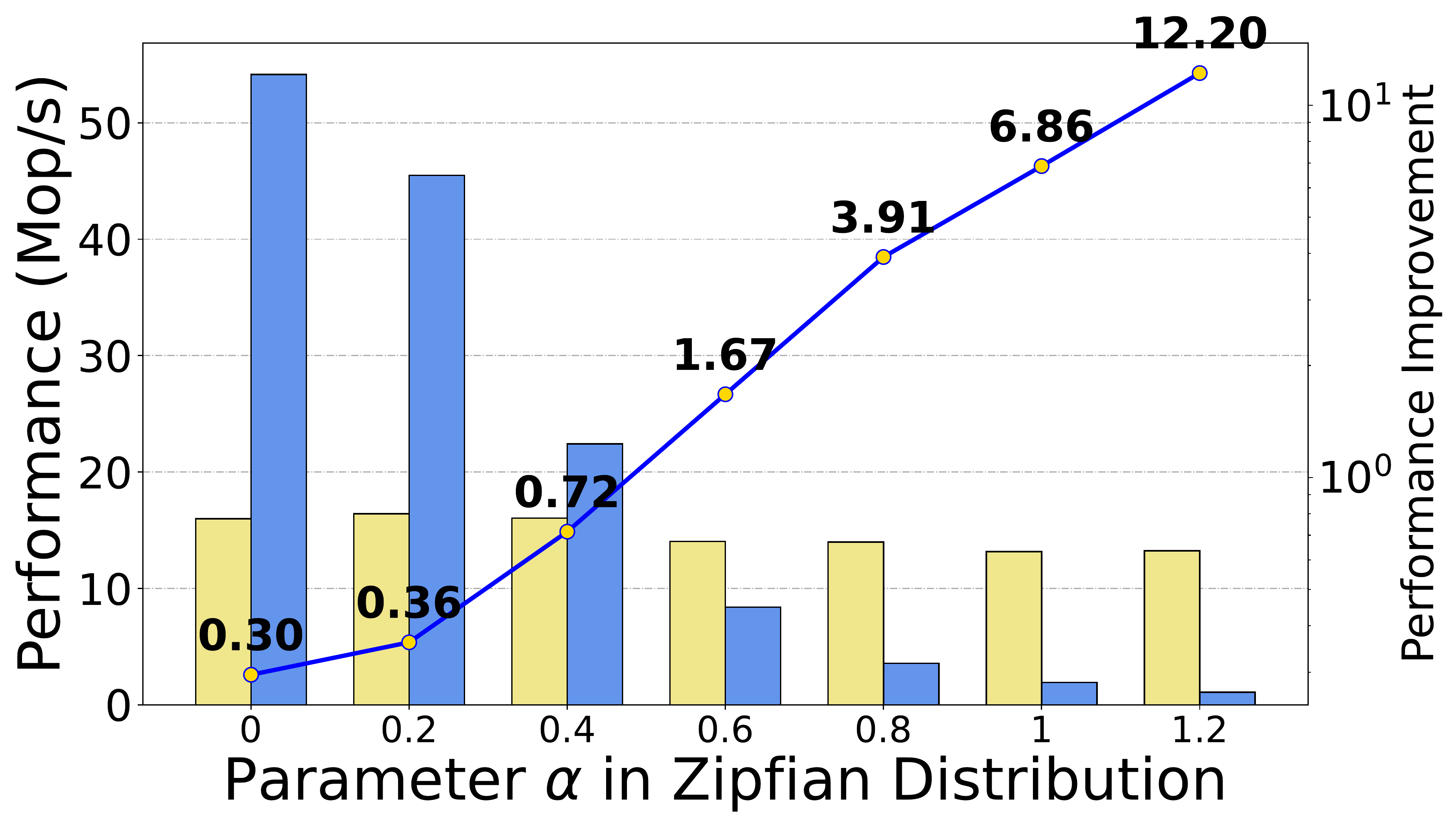}}
                \hspace{0.05in}
  \subcaptionbox*{}
    {\makebox[0.3\linewidth]{\includegraphics[width=0.2\linewidth]{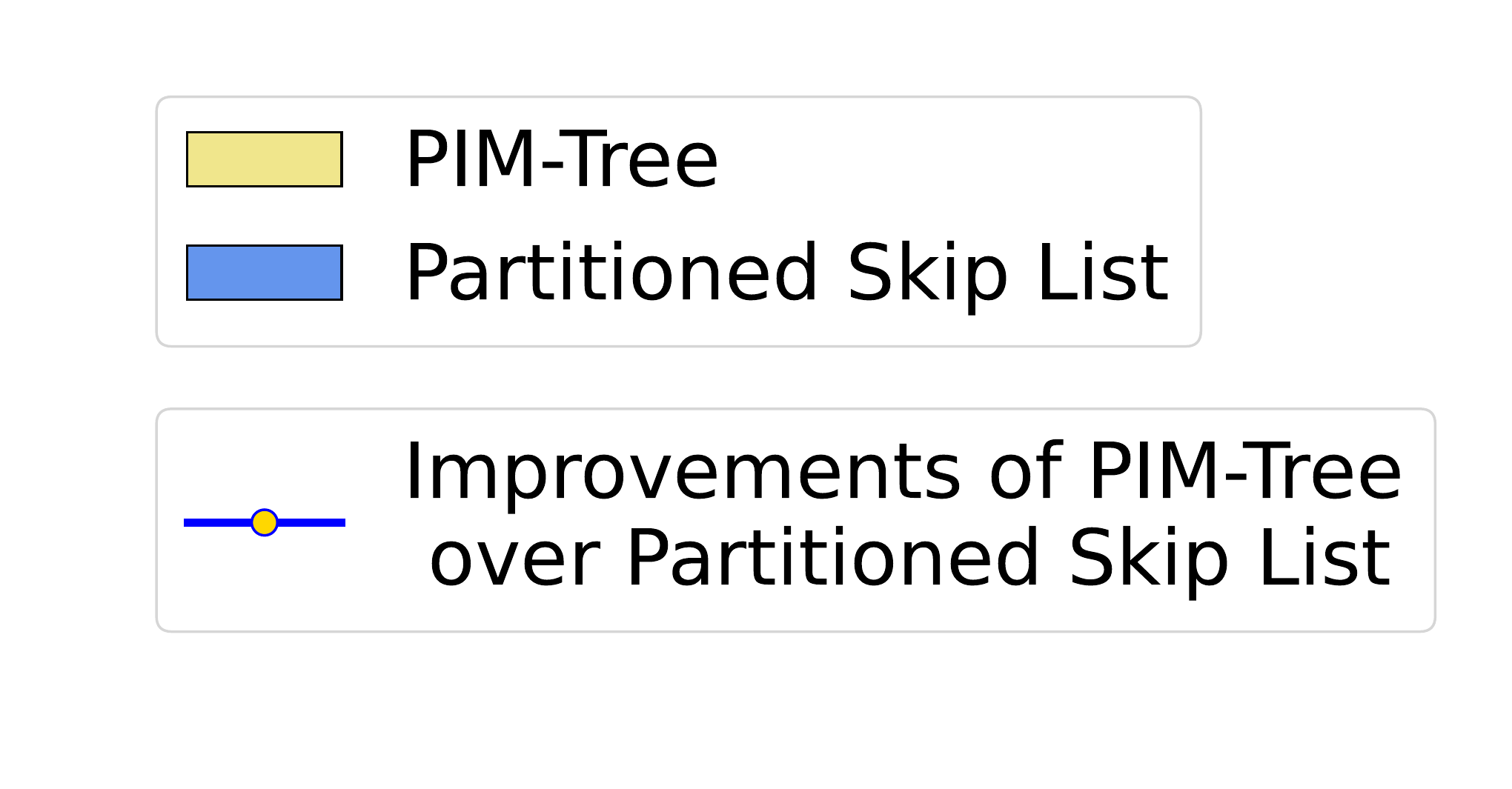}}}
  \caption{Throughput performance of ordered index operations. All
  operations other than \SCAN are run using a batch size of $10^6$;
  \SCAN uses a batch size
  of $10^4$, with 100 elements retrieved by each \SCAN operation in
  expectation.}
  \label{fig:time_performance_balance}
\end{figure*}

\begin{figure}[t]
  \centering
  \includegraphics[width=0.9\linewidth]{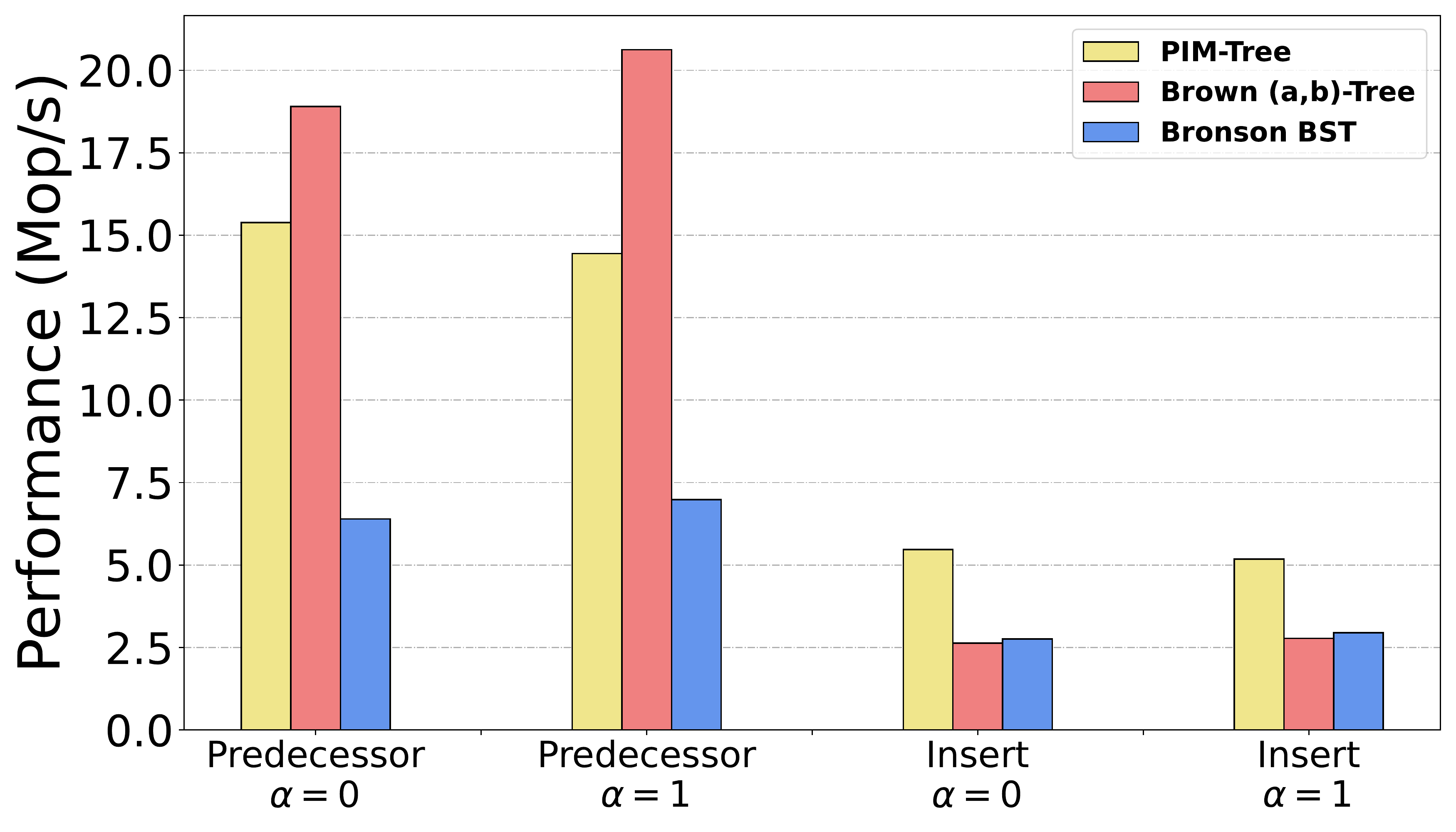}
  \caption{Throughput of PIM-tree versus SOTA traditional indexes.}
\label{fig:throughput_traditional}
\end{figure}


\myparagraph{Performance.}
\Cref{fig:time_performance_balance} illustrates the throughput of the
range-partitioned skip list and the PIM-tree on microbenchmarks.
The performance of the range-partitioned baseline drops drastically as the query skew increases,
while the PIM-tree shows robust resistance to query skew. In fact, across all
operations, it is observed that PIM-tree is essentially unaffected by data
skew, obtaining similar running times for $\alpha=0$ and $\alpha=1.2$.
For $\alpha = 1.2$, PIM-tree outperforms the range-partitioned
baseline by 3.87--59.1$\times$.

It is observed that \GET operations are significantly simpler and achieve
higher throughput since a hash table is used as a shortcut
(the same holds for \UPDATE operations), whereas
\PREDECESSOR and \SCAN operations must go through the entire ordered index.
In \Cref{fig:time_performance_balance}(c), \INSERT on the
range-partitioned baseline crashes when $\alpha = 1.2$, because
skewed \INSERT causes imbalanced data placement over PIM modules, then
causes overflow of local memory on some PIM modules.
Although this problem could be solved by a rebalancing scheme, the rebalancing
process itself will cause load imbalance as it requires sending data from
the overflowing PIM modules to other less-loaded PIM modules.
It is observed that even if this improvement to the baseline (with which
the existing range-partitioning solutions in the literature are not
equipped) were to be made, the throughput of PIM-tree would still be
significantly larger than range-partitioning, as this would be extra
work that the baseline must perform during the execution.

\revision{
\Cref{fig:throughput_traditional} shows the performance of PIM-tree
compared with \stoa \bst ~\cite{DBLP:conf/ppopp/BronsonCCO10} and \abtree ~\cite{brown2017thesis} under our workload.
PIM-tree outperforms
traditional indexes in all test cases, except
throughput of \PREDECESSOR compared with \abtree.
}

\myparagraph{Execution breakdown}.
\label{sec:breakdown}
\begin{figure}[t]
    \centering
    \vspace{-0.6em}
    \includegraphics[width=\linewidth]{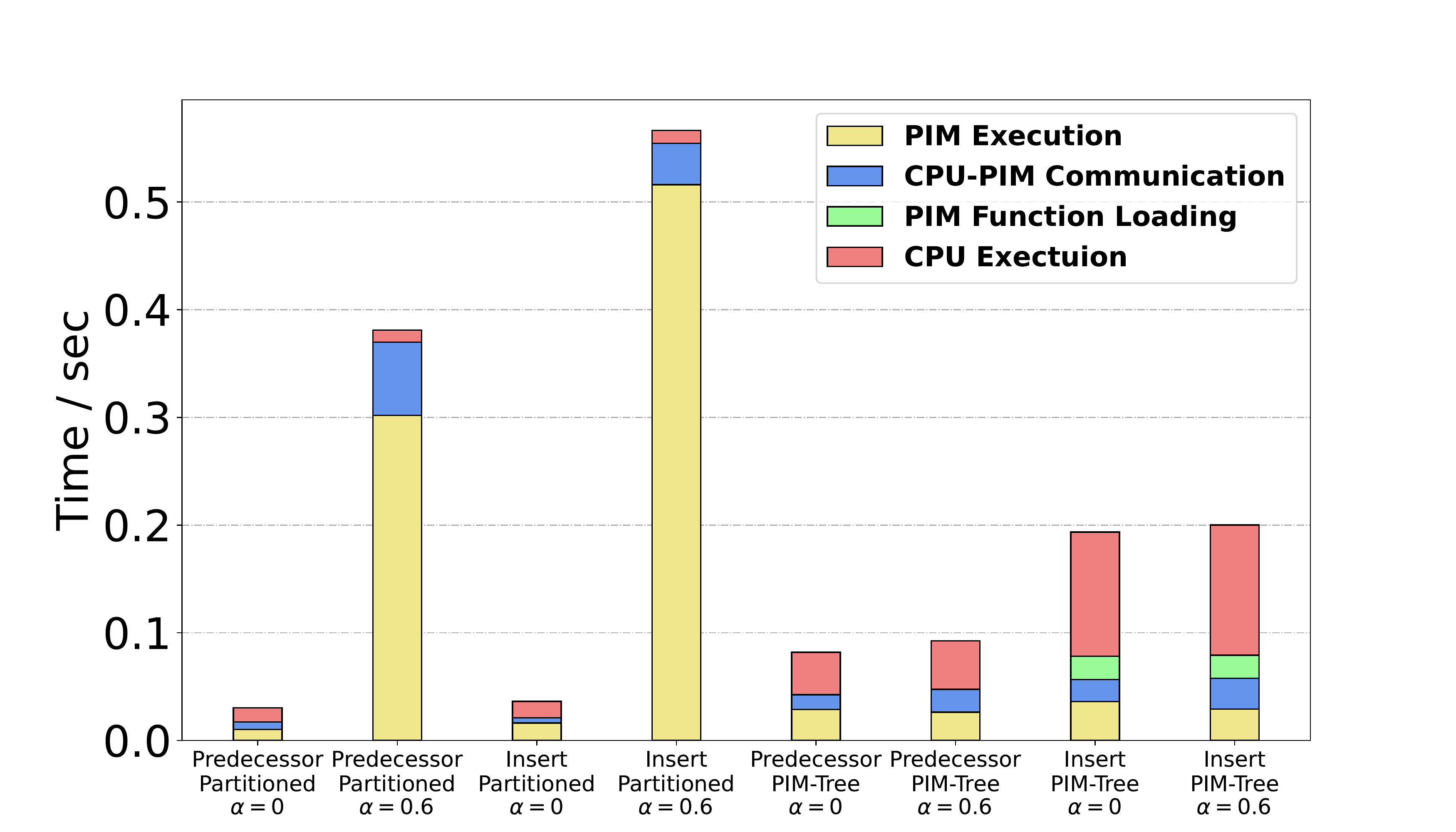}
    \caption{Time spent on each component of the program (without pipelining).}
\label{fig:time_breakdown}
\end{figure}
\Cref{fig:time_breakdown} shows the percentage of time spent in each component mentioned in
\Cref{sec:pipelining}. These results are derived with our pipelining optimization turned off, because pipelining would cause an overlapping of different components.
We select as typical examples the throughput of \PREDECESSOR and \INSERT, on range-partitioned skip list and the PIM-tree,
for uniform random workload and Zipfian-skewed workload with $\alpha = 0.6$.
Similar results also exist in the cases of other $\alpha$ values.


For range-partitioned skip list,
\textit{PIM Execution} and \textit{CPU-PIM Communication}
dominates the time cost of skewed workloads,
mainly because the bottleneck of the entire execution---the busiest PIM modules---
are receiving a growing number of tasks.

For uniform random workloads,
\textit{PIM Execution} only takes a small proportion of the total time cost,
though almost all comparisons are executed in PIM modules.
It is inferred that parallelism is fully exploited when a large number of PIM modules are involved in this case.
We believe that this implies that PIM-based systems are an ideal platform for parallel index structures.

PIM-tree \INSERT spends time loading PIM program modules during
execution, as the full program size exceeds the current size of
instruction memory on PIMs intended to store the PIM program.  This
limit is discussed in \Cref{sec:pim_program}.

\myparagraph{Effect of Optimizations.}
\begin{figure}[t]
  \centering
  \includegraphics[width=0.9\linewidth]{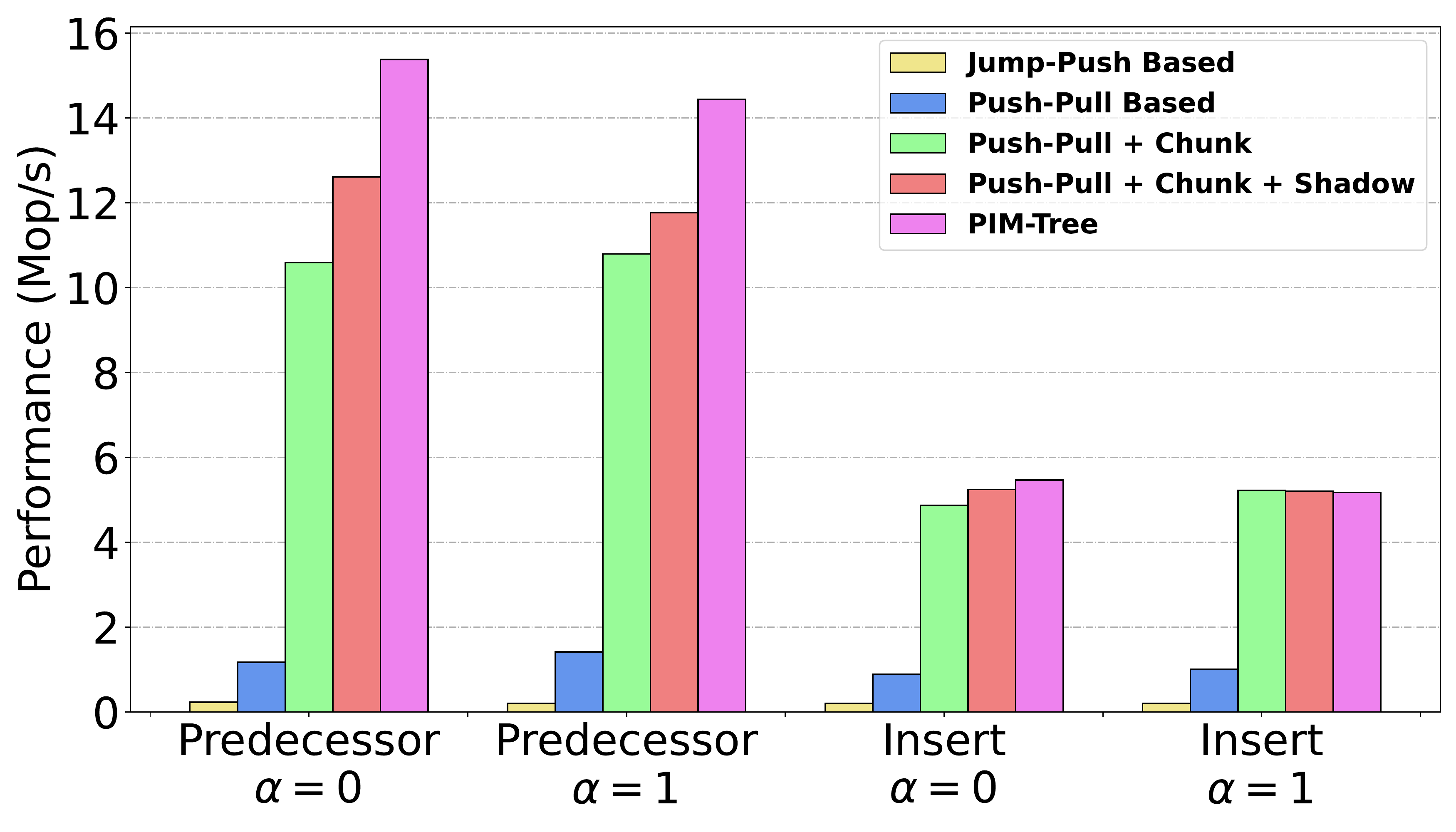}
  \caption{Impact of three optimizations on final performance.}
\label{fig:pim_tree_optimizations}
\end{figure}
\Cref{fig:pim_tree_optimizations} shows the impact of different
optimizations on our ordered index.  Here, we start with our Jump-Push
baseline (the PIM-balance skip list in~\cite{kang2021processing}).
Replacing Jump-Push with \PushPull provides up to 6.8$\times$ throughput
improvements across all test cases.  Adding Chunking provides the
biggest improvement jump, up to 9.0$\times$, across all test cases, while adding shadow
subtrees mostly benefits \PREDECESSOR under no skew. (\INSERT get minor
benefits because it needs to maintain this supplementary data structure.)
Finally, adding pipelining---thereby implementing the complete
PIM-tree algorithm---provides additional benefit for
\PREDECESSOR. (Pipelining is not implemented for \INSERT because it
would require interleaved \INSERT batches, which is not supported in
our implementation.)
Compared to the Jump-Push baseline, PIM-trees are up to 69.7$\times$ higher throughput
for the settings studied.

\begin{figure}[t]
  \centering
  \includegraphics[width=0.9\linewidth]{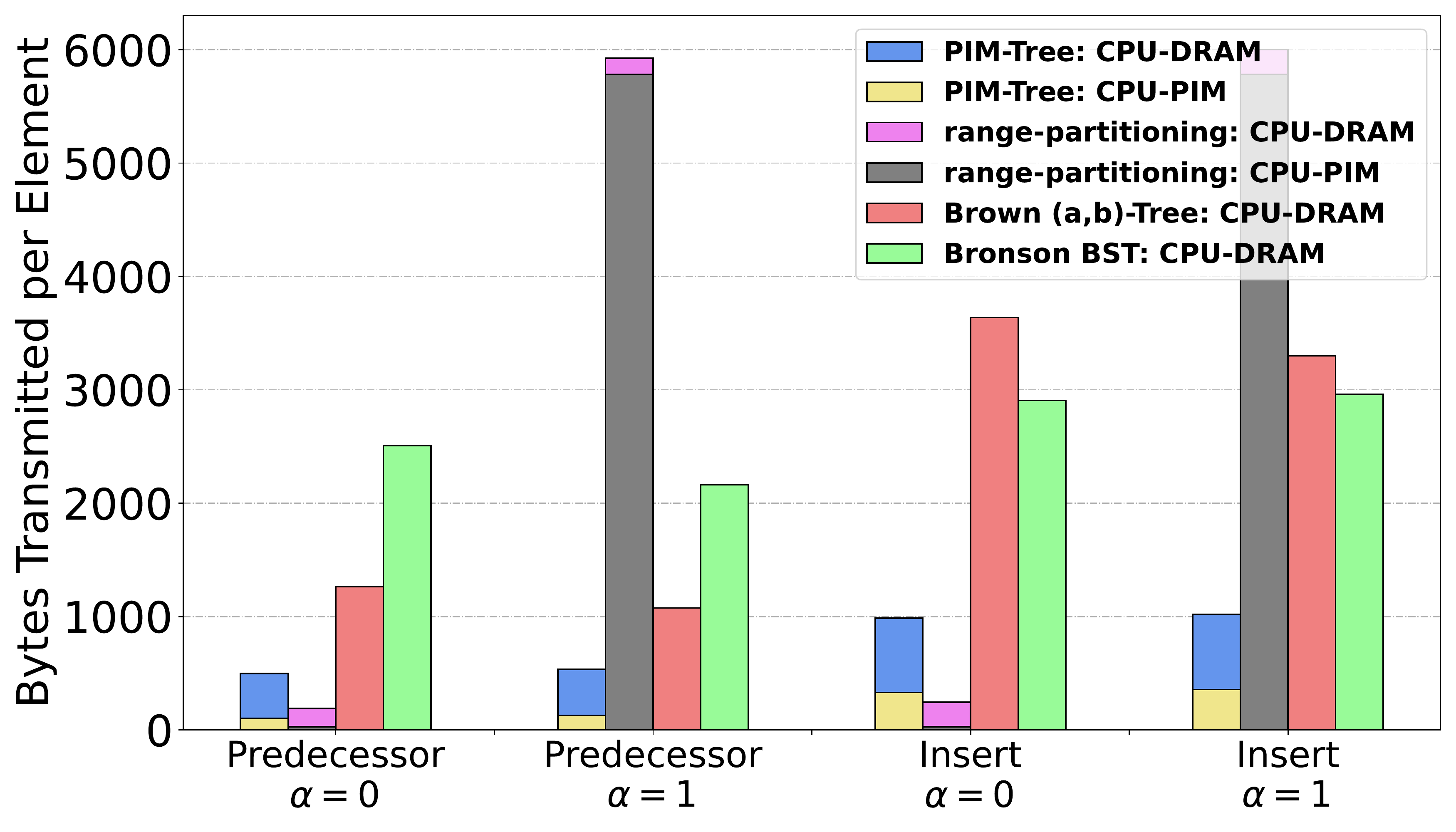}
  \caption{Average communication on the memory bus per operation in bytes.}
\label{fig:communication}
\end{figure}

\begin{figure}[t]
\centering
\includegraphics[width=0.9\linewidth]{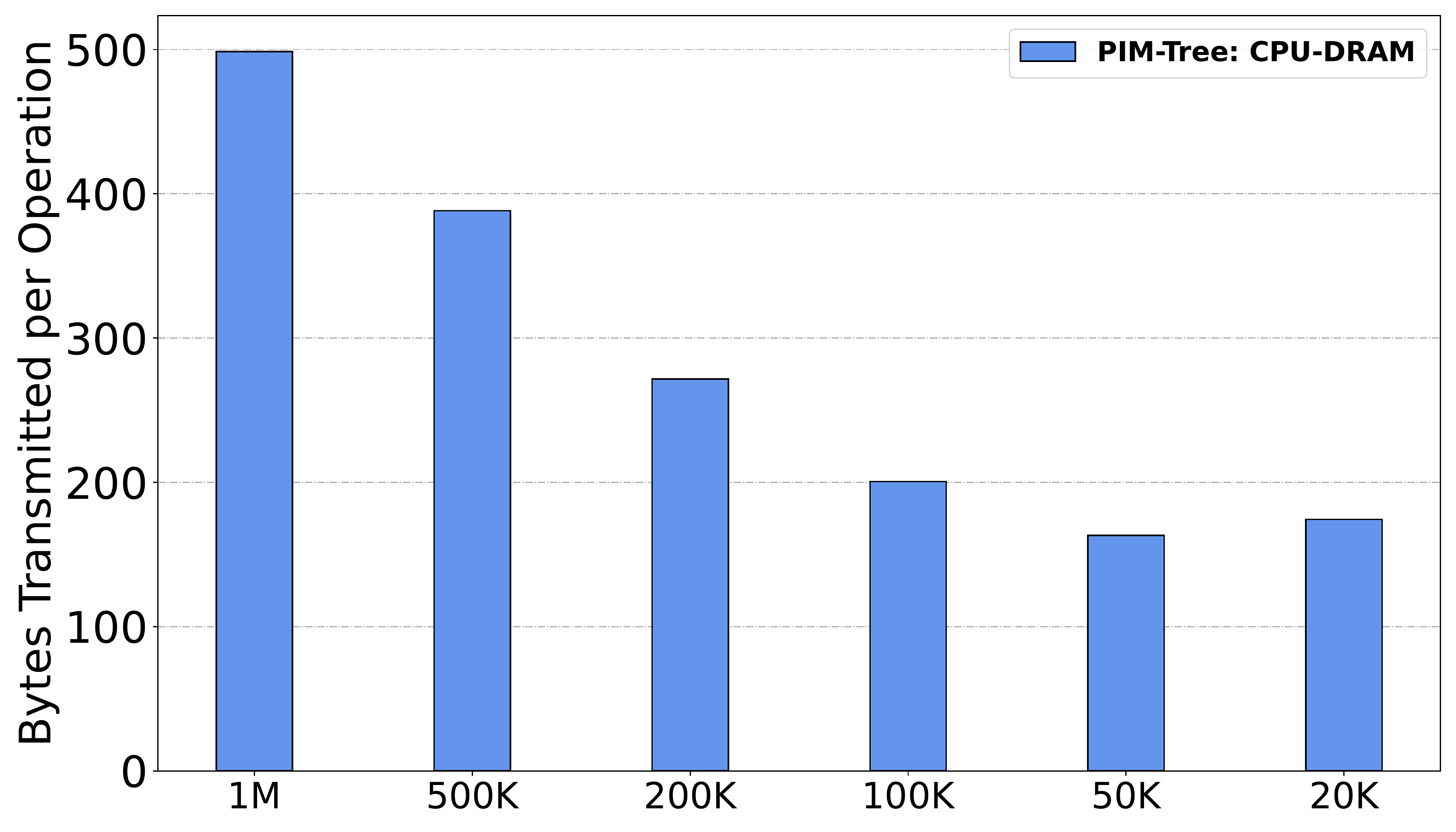}
\caption{Average CPU-DRAM communication per predecessor operation for PIM-Tree with different batch size in bytes.}
\label{fig:communication_of_different_batch_size}
\end{figure}

\revision{
\myparagraph{Memory bus communication.}
\Cref{fig:communication} shows the average amount of communication
for PIM-tree, range-partitioned skip list, and traditional non-PIM indexes.
PIM-tree needs less communication than all traditional indexes.
Range partitioned skip lists outperform all competitors by much under
uniform random workload, but perform much worse in skewed workloads.

Another observation is that, while the PIM-tree stores all the data and
does most comparisons in PIM modules, most memory bus traffic
is between CPU and the DRAM.
This is because though PIM-tree algorithms requires $O(S)$
CPU-side memory for a batch of $S$ operations, the available setup with $11$ MB cache
is too small for batches of one million operations. As the result,
CPU side data overflow to DRAM and cause significant CPU-DRAM communication.
To show the effect of this overflow, in
\Cref{fig:communication_of_different_batch_size} we study the CPU side
communication as we run the 100 million uniform random
predecessor operations with different batch sizes.
Results show that the CPU-DRAM
communication is reduced by $67\%$ as we reduce batch size from 1M to 50K. We cannot
directly use smaller batch size because of the load balance requirements,
but this result hints that we can get much less CPU-DRAM communication
when running the PIM-tree on a machine with larger cache size.
}

\conffulldifferent{
  \revision{
  \myparagraph{\PushPull threshold choice}
  We study the PIM-tree \PREDECESSOR performance under different \PushPull
  threshold.
  In our microbenchmark with $\mathbf{\alpha} = 1$, we find that
  choosing a lower threshold leads to about $10\%$ throughput drop and up to $28\%$ more CPU-PIM
  communication. A higher threshold brings minor performance increase. Please refer to the full paper~\cite{pimtree2022full} for
  more details.
  }
}{
  \begin{figure}[t]
    \centering
    \includegraphics[width=0.8\linewidth]{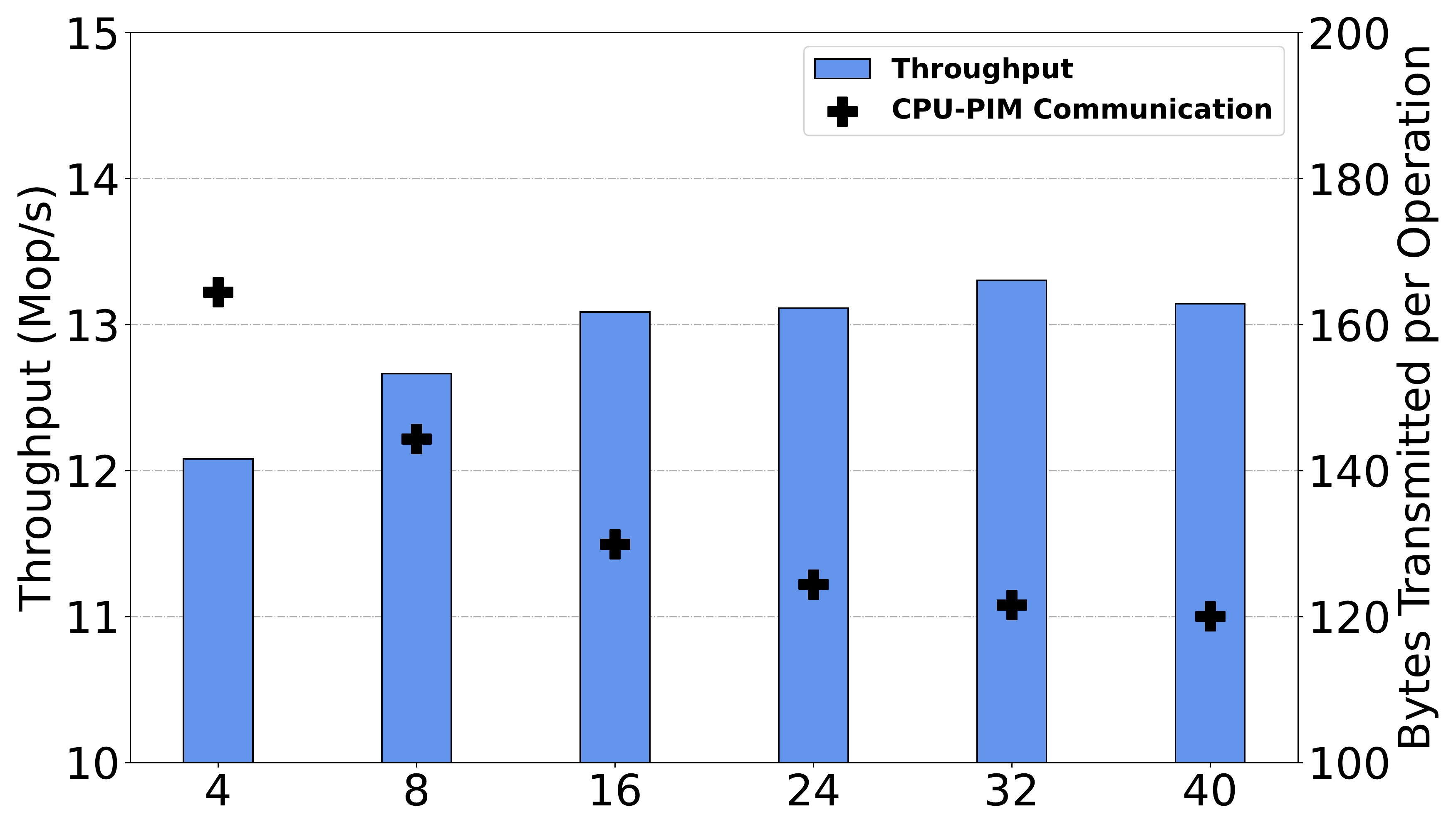}
    \caption{PIM-tree \PREDECESSOR performance under different \PushPull threshold factor $k$ with $\mathbf{\alpha} = 1$.}
  \label{fig:threshold_analysis}
  \end{figure}
  \myparagraph{\PushPull threshold choice} \Cref{fig:threshold_analysis} study the PIM-tree \PREDECESSOR performance
  under different \PushPull threshold factor $k$.
  Recall that we use different threshold in L1 and L2:
  the threshold of the \PushPull rounds in L1 is $K = k$, and the threshold of the \PullOnly round in L2 is $K = H'_{L2} * k$.
  We run this experiment on the \PREDECESSOR workload of our microbenchmark with $\mathbf{\alpha} = 1$.
  Results show that choosing a lower threshold leads to about $10\%$ throughput drop and up to $28\%$ more CPU-PIM
  communication. A threshold factor higher than $B$ brings minor performance increase.

  We believe that this improvement comes from the chunked skip list structure. If the size of each node is strictly
  $B = 16$, choosing threshold factor $k = 16$ will be optimal. However, the chunked skip list nodes could have larger sizes.
  According to the geometric distribution, there will be large nodes with more than $200$ keys when the PIM-tree has millions of nodes.
  Larger nodes have both higher \Pull cost and probability: they covers a larger key range, which collects more queries.
  Meanwhile, smaller nodes have both lower \Pull cost and probability. Increasing the threshold factor can help alleviate
  this effect.
}

\conffulldifferent{
  \myparagraph{Energy Evaluation}.
  \revision{
  \pimtree{} costs roughly
  5$\times$--10$\times$ less energy on skewed cases,
  compared to the range-partitioned baseline on PIM.
  Please refer to the full version of this paper~\cite{pimtree2022full} for details.
  }
}{
  \begin{figure}
    \centering
    \subcaptionbox{100M \INSERT operations\label{fig:insert_energy}}
      {\includegraphics[width=1\linewidth]{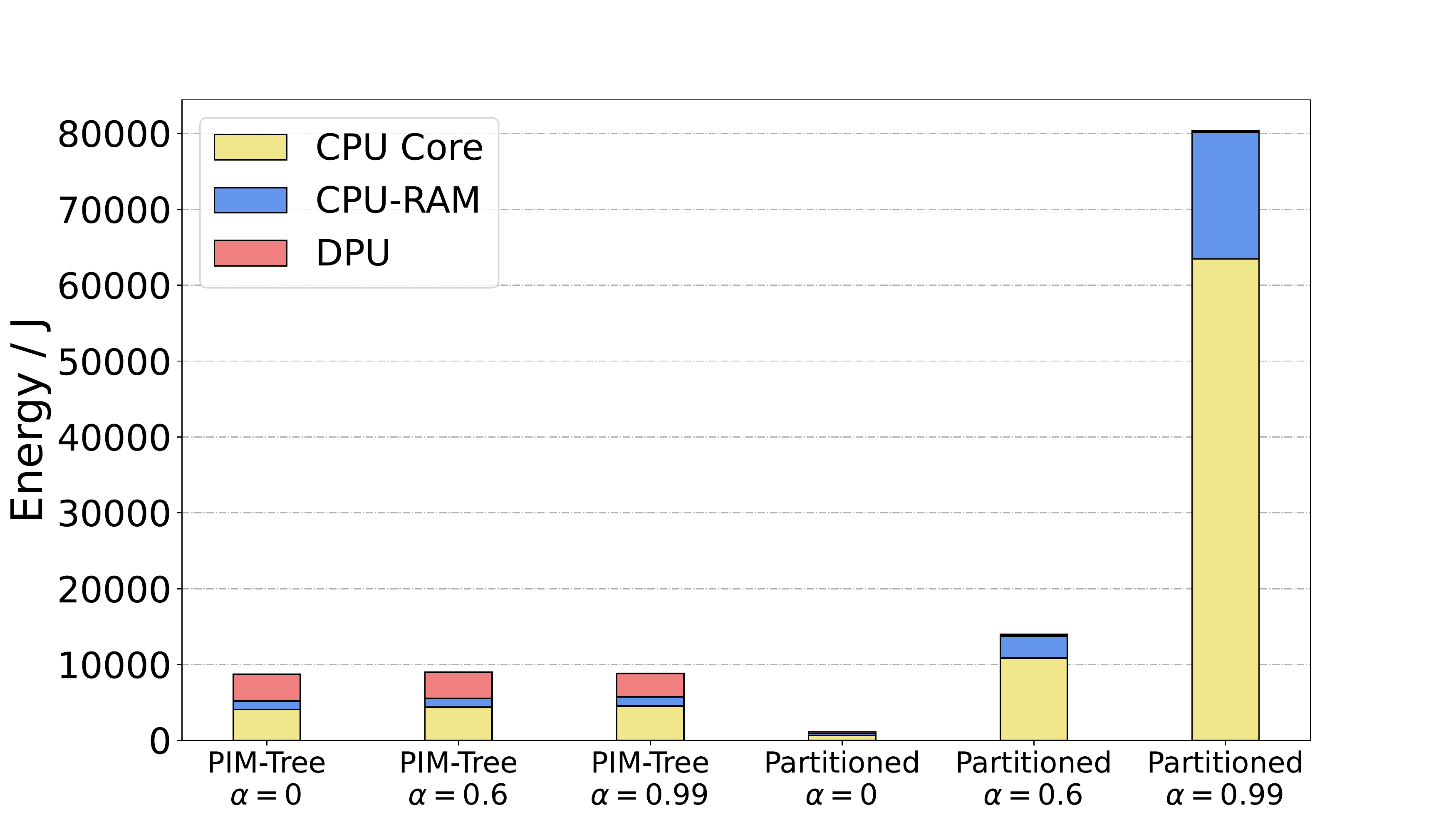}}
    \subcaptionbox{100M \PREDECESSOR operations\label{fig:search_energy}}
     {\includegraphics[width=1\linewidth]{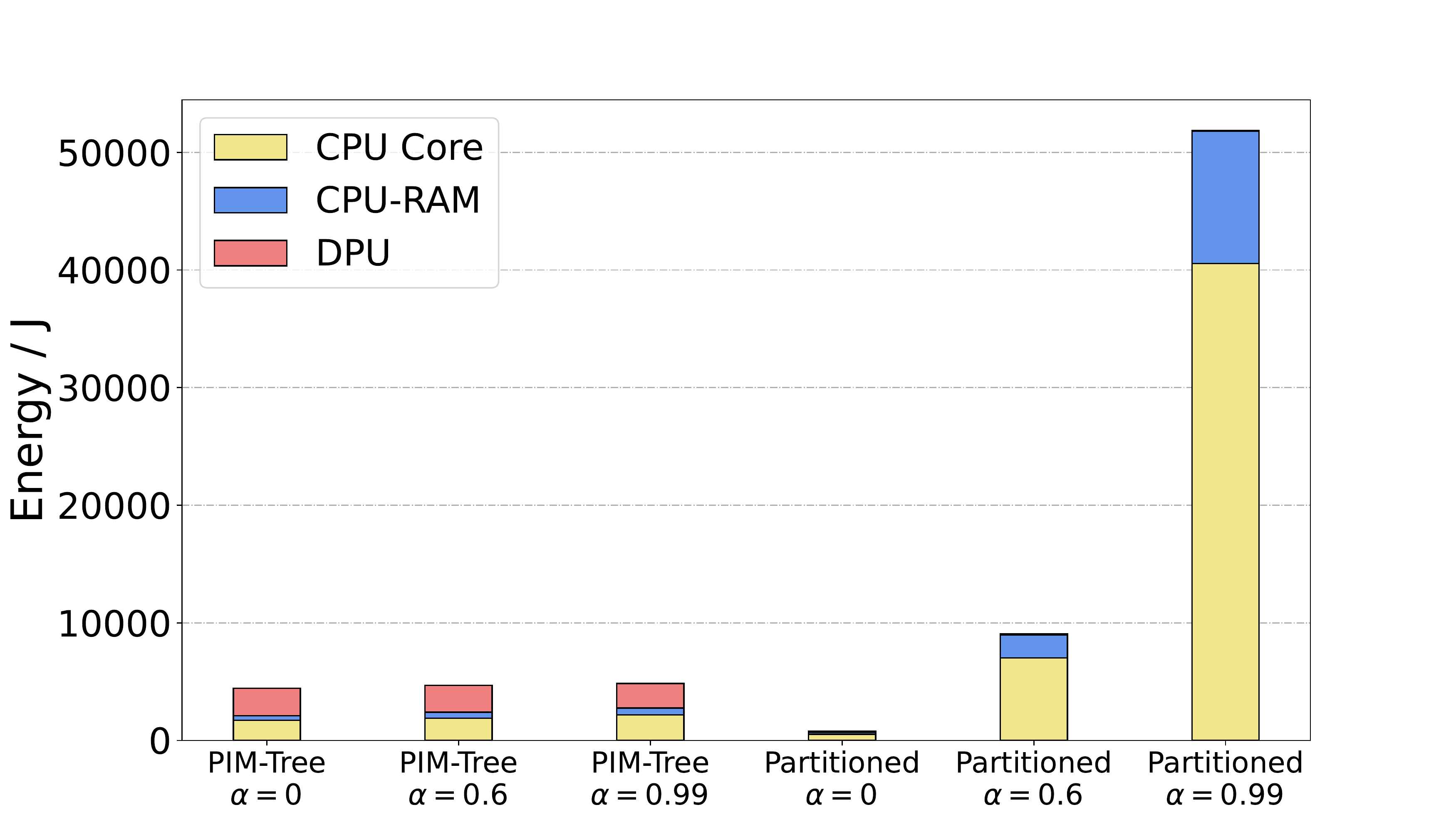}}
    \subcaptionbox{1M \SCAN operations with size 100\label{fig:scan_energy}}
      {\includegraphics[width=1\linewidth]{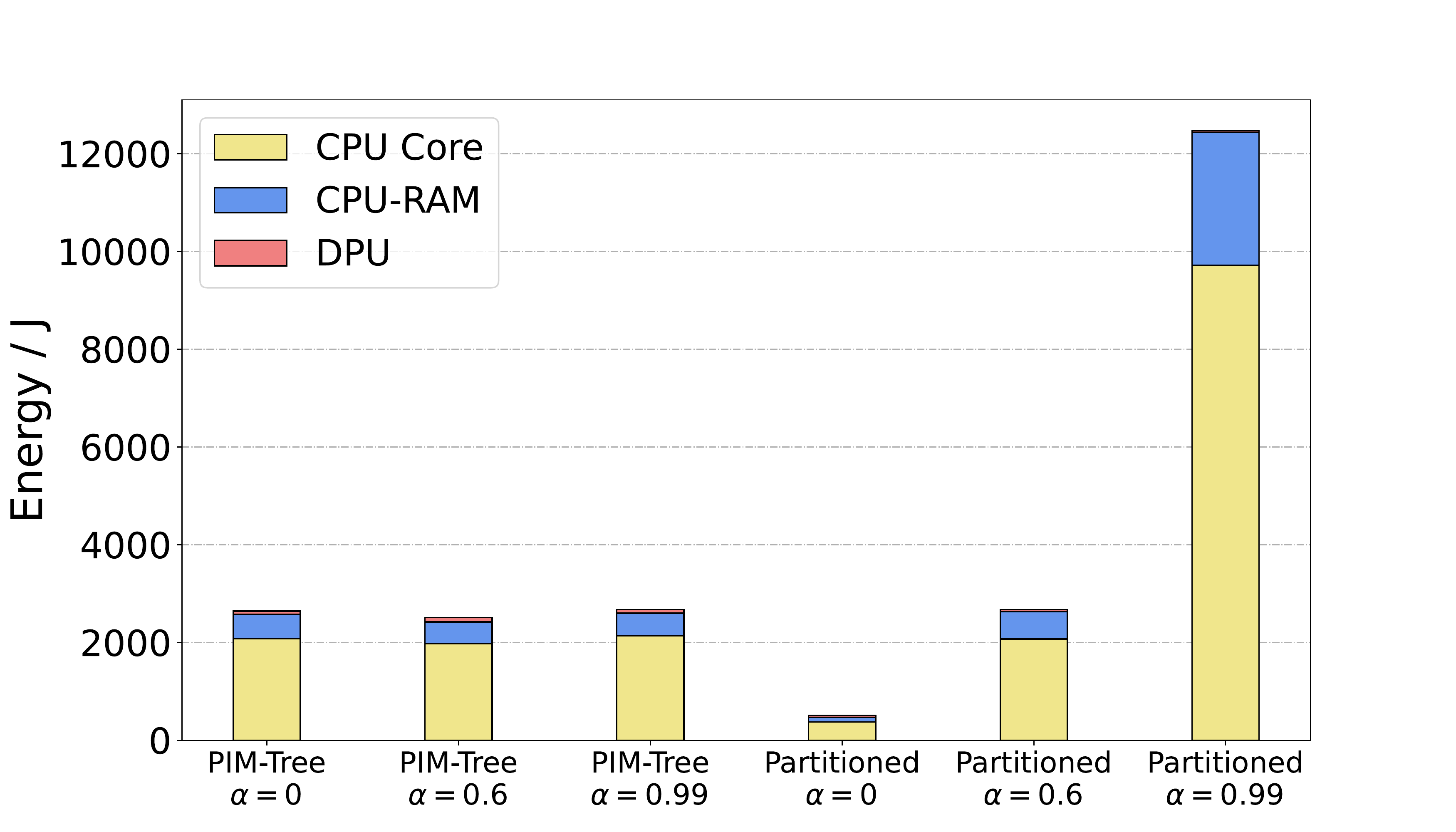}}
    \caption{Energy Cost of operations on PIM-tree and range-partitioned skip list.}
    \label{fig:energy}
  \end{figure}

  \myparagraph{Energy Evaluation}.  We also evaluate the energy
  consumption of PIM-tree versus the range-partitioned baseline.  The
  energy evaluation is carried out separately for the CPU host and the
  PIM modules.  Intel RAPL~\cite{khan2018rapl} is used to collect CPU
  energy consumption statistics.
  Meanwhile, the number of instructions, cycles,
  WRAM accesses, and MRAM accesses are collected on the UPMEM PIM
  modules (called \textit{DPUs}).  WRAM and MRAM are the two types of
  memories on each DPU.  We estimate the DPU energy consumption by
  multiplying these values by the hardware-related energy weights
  provided by UPMEM~\cite{upmem}, and summing the result.

  \Cref{fig:energy} illustrates the energy consumption of PIM-tree and
  range-partitioned skip list, on the critical operations of
  \INSERT, \PREDECESSOR and \SCAN with the same element size per batch on a same dataset.
  The figure shows an energy break down into three components:
  \textit{CPU Core}, \textit{CPU-RAM} (i.e., CPU-PIM communication and CPU-DRAM communication), and
  \textit{DPU} (i.e., executing PIM programs).
  Overall, \pimtree{} sacrifices roughly 1.5$\times$--2$\times$
  energy consumption on unskewed cases, in return for roughly
  5$\times$--10$\times$ energy reduction on skewed cases.

  Two additional findings can be drawn from this evaluation. First, DPU
  energy consumption is relatively stable against skew in all designs.
  This is because, under the assumption of DPU energy evaluation provided
  by UPMEM, an energy-efficient DPU can be turned on only
  when it is called for a task and turned off as soon as it returns the
  results to the CPU side.  Therefore, the DPU energy consumption is
  positively correlated only to the number of executed operations.
  Meanwhile, even in the skewed cases, the total number of operations
  required to execute is approximately the work of these parallel
  operation batches and remains relatively constant.

  Second, the energy consumption of the CPU and CPU-PIM communication
  is highly sensitive to skew in the range-partitioned baseline, while robust
  to skew in PIM-tree.  We found on microbenchmarks that CPU energy
  consumption is strictly linearly proportional to the operating time,
  regardless of PIM data structure designs and operation batch types.
  One possible explanation is that all these designs assure that the CPU
  runs on its maximum capacity of processing and data communication, so
  both required time and energy consumption are proportional to the
  number of required CPU operations.  Under such explanation, we suggest
  that the energy inefficiency on CPU of the range-partitioned baseline
  under skewed cases is equivalent to the time inefficiency.

  We argue from the above analysis that the PIM-tree is a highly
  energy-efficient design in skewed cases, with the cost of only a
  little more energy consumption in unskewed cases (used for more
  complicated data structure construction).
}

\conffulldifferent{
  \myparagraph{YCSB workload}.
  \revision{
    \pimtree{} achieves roughly
    9.5$\times$--32$\times$ higher throughput on skewed cases,
    compared to the range-partitioned baseline on PIM.
    Please see the full version of this paper~\cite{pimtree2022full} for more details.
  }
}{
  \subsection{YCSB Workload}
  \begin{figure}[t]
    \centering
    \includegraphics[width=0.8\linewidth]{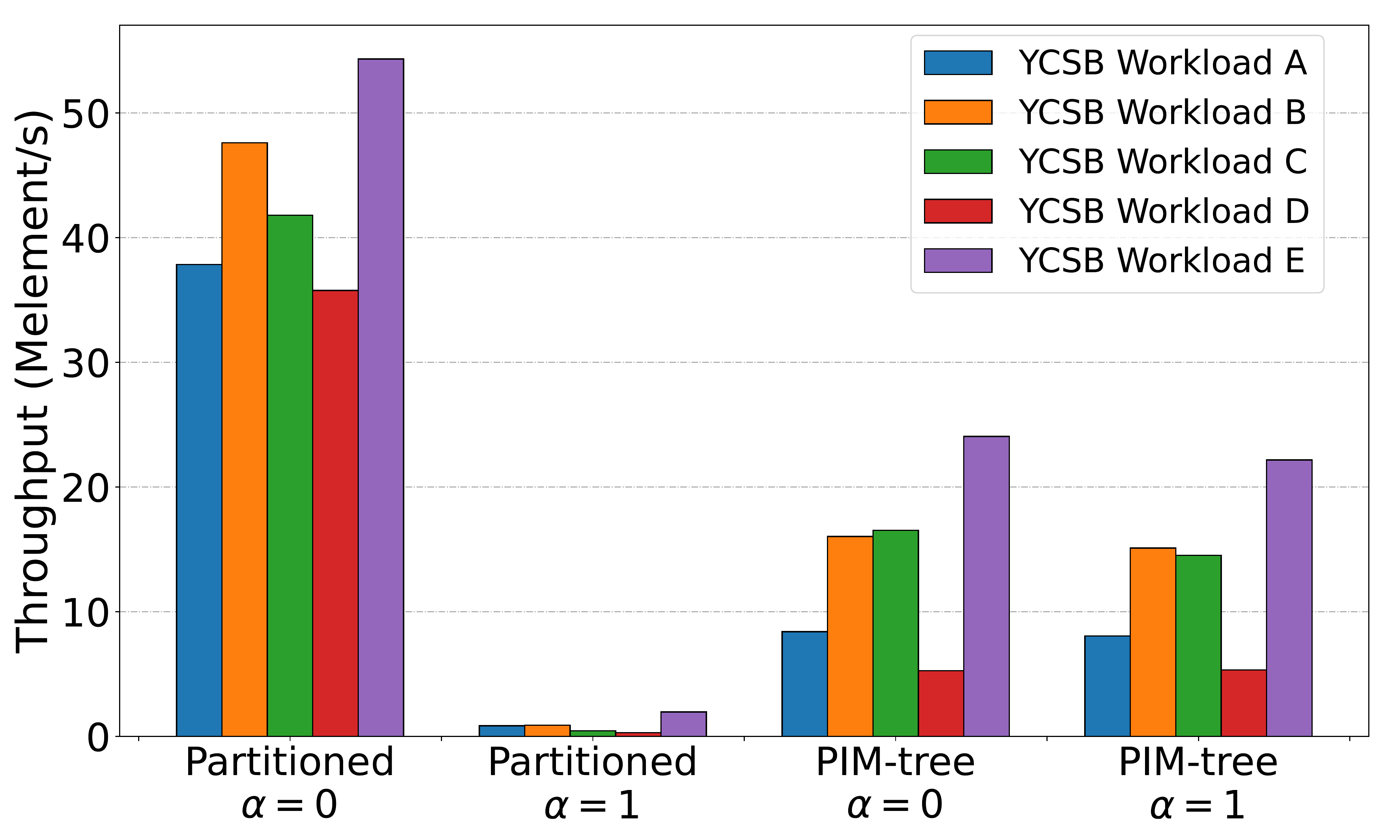}
    \caption{YCSB workload throughput.}
  \label{fig:ycsb_throughput}
  \end{figure}

  \revision{
  Finally, we test our indexes using five YCSB-like \cite{cooper2010benchmarking} workloads:
  \begin{itemize}
    \item[\textbf{A}] write-intensive (50\% \PREDECESSOR and 50\% \INSERT)
    \item[\textbf{B}] read-intensive (95\% \PREDECESSOR and 5\% \INSERT)
    \item[\textbf{C}] \PREDECESSOR-only
    \item[\textbf{D}] \INSERT-only
    \item[\textbf{E}] short-ranges (95\% \SCAN and 5\% \INSERT)
  \end{itemize}
  }
  We use the method in \S\ref{sec:micro_setup} to generate zipfian-skewed
  operations with $\alpha = 0, 1$.  All workloads are warmed up with
  \dataSize elements. Workloads A--D are tested with \testSize
  operations. Workload E is tested with 20 million operations, because
  each \SCAN range is expected to return 100 elements, so the total
  number of tested elements is enough for the analysis.
  For workloads with mixed-type operations (A, B, E), we load operations into separate same-type
  batches, and run a batch atomically when its expected returned size
  exceeds \batchSize \cite{sun2019supporting}.

  The results are shown in \Cref{fig:ycsb_throughput}, and again show
  the fragility of the range-partitioned skip list and the robustness of
  PIM-tree under skewed workloads.
}

\subsection{\revision{Workload of Real-world Skewness}}

\begin{figure}[t]
  \centering
  \includegraphics[width=0.7\linewidth]{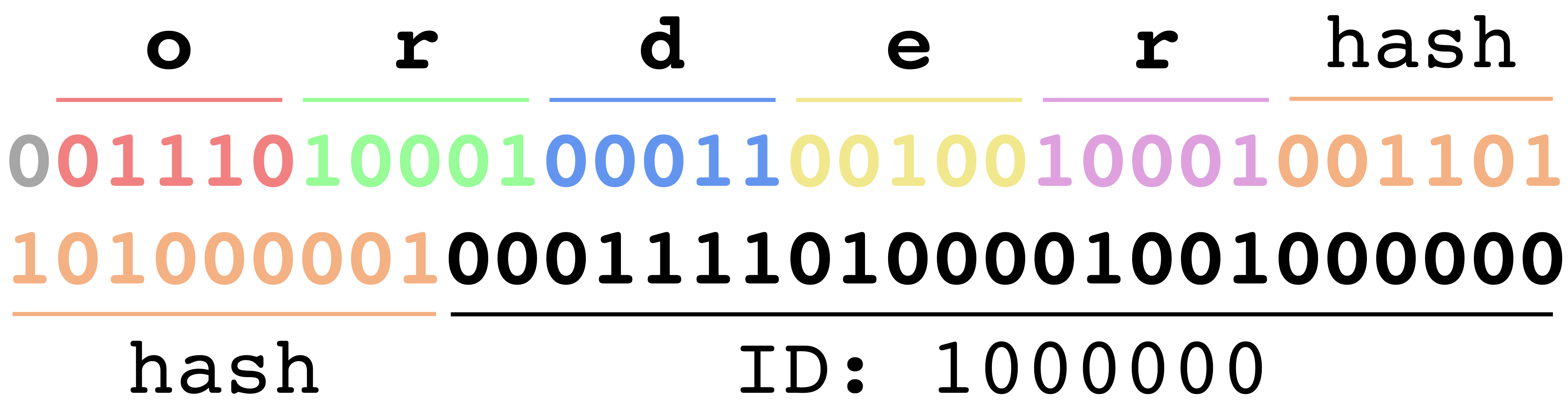}
  \caption{An example that convert word ``ordered'' in document id ``1000000''
  to a 64-bit integer.}
\label{fig:wikipedia_translate}
\end{figure}

\begin{figure}[t]
  \centering
  \includegraphics[width=0.8\linewidth]{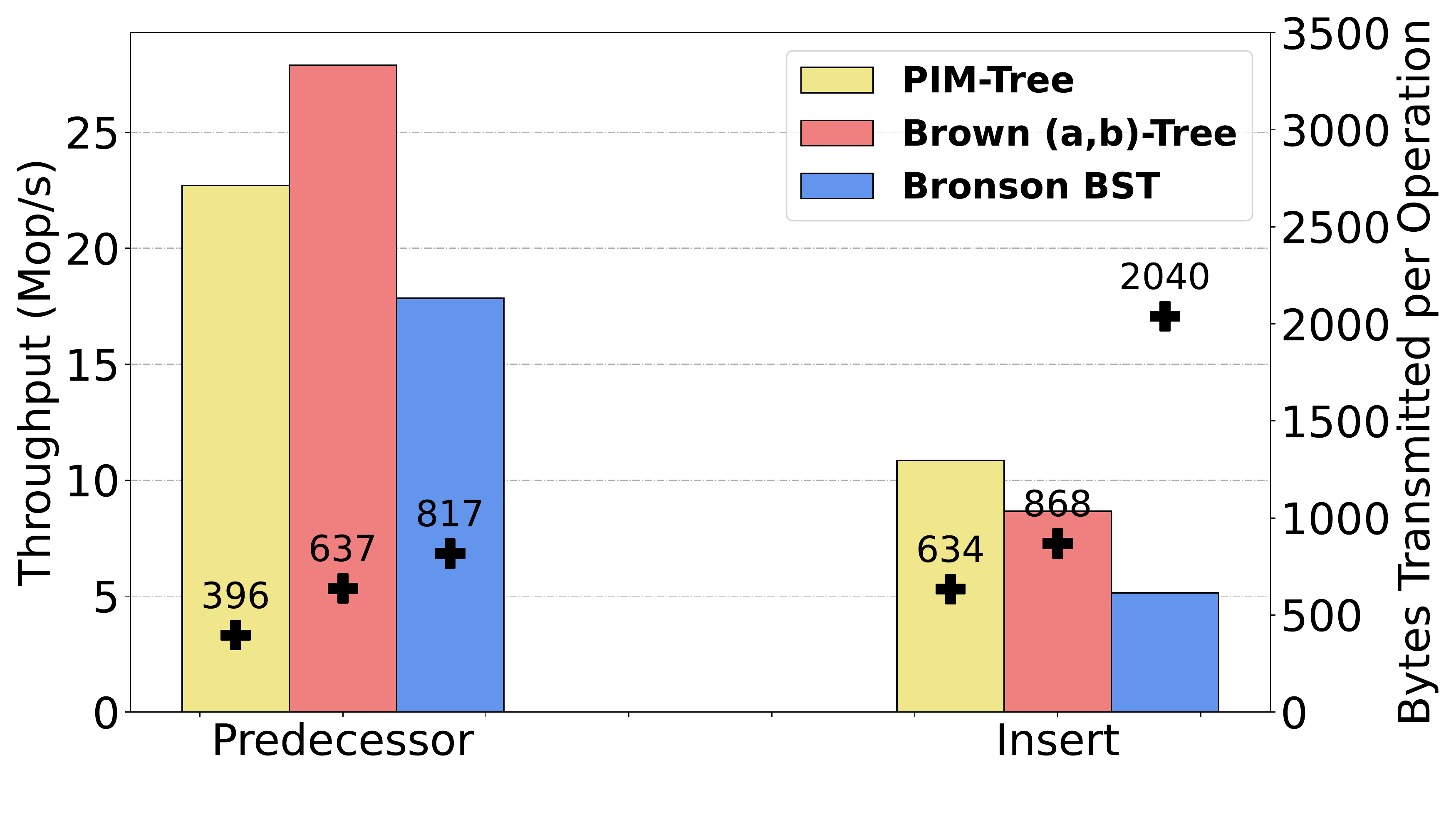}
  \caption{Throughput on the wikipedia workload.}
\label{fig:wikipedia_throughput}
\end{figure}

\revision{
In this section, we test the PIM-tree over a workload with real-world skewness
using the publicly available wikipedia dataset \cite{wikipedia_database}, which
is a collection of documents from wikipedia.
To use this dataset in our test framework, we need to transform it
into a collection of 8 byte key-value pairs, then run operations over them.
To be specific, we first extract words from each document,
lowercase them, then use
(word, document id) pairs as keys, and a random 8 byte integer as values.
Because our indexes only support 8 byte integer keys,
we need to transform the (word, document id) pairs into 8 byte integers.
The transformation is shown in \Cref{fig:wikipedia_translate}.
We use 40 bits to represent the word, and 23 bits to store the document id
(as there's less than $2^{23}$ documents). In the word part, we use 5 bits
for each of the first 5 letters, then store the hash value of the whole
word in the following 15 bits to avoid collision.
After this transformation, the generated integers preserve the two skewness
of english words: (i) word frequency skewness
(some words are used more than others)
and (ii) word distribution skewness in the
dictionary order space (words with some prefix are used more).

In this test, we pick the first 1.2 billion keys:
the first 1 billion words used for initialization, and the following
200 million used for evaluation.
These keys covers the first 5.1 million documents,
which is $63\%$ of the whole dataset.
There are 3.9 million
unique words, and pairing the word and document id generates 529 million unique keys.
We get duplicated keys only for the same word in the same document.
Because the duplication rate of keys is about $2X$, we also double the batch size
of the PIM-tree to two million.

The result is shown in \Cref{fig:wikipedia_throughput}, where the
throughputs are shown as the bars, and communication (bytes
transmitted per-operation) as labeled points.
All indexes experience higher throughput and lower communication in
this workload than in microbenchmarks
because of the replicated keys. \laxman{higher and lower than
what?}
Comparing different indexes gives results similar to that of microbenchmarks:
PIM-tree has lower predecessor throughput than the (a,b)-tree, but
outperforms traditional indexes in all other metrics.
}

\revision{
\section{Discussion}
\pimtree{} outperforms conventional indexes in throughput in most cases, but very occasionally cannot win, e.g. only in \PREDECESSOR compared with \abtree in our paper.
We address here three hardware limits of the current PIM system by way
of explanation, and to describe future changes to the hardware that
would result in even better performance for PIM-optimized data
structures.

The first factor is the limited CPU-PIM bandwidth on UPMEM's newly developed hardware. \hongbo{todo: update numbers. we can keep the story}
When carrying out a $50\%$ read - $50\%$ write task,
the bandwidth obtained on UPMEM machine is
16GB/s, $1.9\times$ slower than the shared-memory machine we use with a bandwidth of 31GB/s on the same workload.
Even under such significant bandwidth limitations, \pimtree{} still
achieves better or comparable performance to DRAM-only
indexes,
primarily because it greatly reduces inter-module communication.
Designing hardware to improve CPU-PIM bandwidth is thus an important
direction, and one that we expect improvements for in the future.
Therefore, we believe that \pimtree{} will outperform conventional
indexes in all cases in terms of throughput in the future.

Another issue is that the limited size of PIM program prevent us from
more complicated designs. Current workaround, the dynamic program loading
is too costly. We believe this problem will be solved in future hardware
by a larger instruction memory, in other ways.

The last limit is that of inadequate CPU cache, as mentioned in Section \ref{sec:micro_setup}.
CPU-DRAM communication caused by cache overflow makes most
of memory bus communication, and this can be alleviated by a larger
cache. We believe an adequate cache will be important in future
PIM systems.
}

\section{Conclusion}

This paper presented \textit{PIM-tree}, the first ordered index for
PIM systems that achieves both low communication and high load balance
in the presence of data and query skew.  We presented the first
experimental evaluation of ordered indexes on a real PIM system,
demonstrating up to $69.7\times$ and $59.1\times$ higher throughput
than the two best prior PIM-based methods and down to $0.4\times$ less communication than two \stoa conventional indexes.
Key ideas include \textit{push-pull
  search} and \textit{shadow subtrees}---techniques likely to be
useful for other applications on PIM systems due to their
effectiveness in reducing communication costs and managing skew.  Our
future work will explore such applications (e.g., radix-based indexes, graph analytics).

\begin{acks}
  We thank R\'{e}my Cimadomo, Julien Legriel, Damien Lagneux,
  \revision{
  \conffulldifferent{etc.}{and all the other folks}
  }at UPMEM for providing extensive access to
  \revision{\conffulldifferent{their}{the UPMEM}}
  system and help
  whenever needed. This work would not have been possible without
  \revision{\conffulldifferent{their support.}{this level of support.}}
  This research was supported by NSF grants
  CCF-1910030, CCF-1919223, CCF-2028949, and CCF-2103483, VMware
  University Research Fund Award, Parallel Data Lab \conffulldifferent{}{(PDL)} Consortium
  \ifx\confversion\undefined
  (Alibaba, Amazon, Datrium, Facebook, Google,
  Hewlett-Packard Enterprise, Hitachi, IBM, Intel, Microsoft, NetApp,
  Oracle, Salesforce, Samsung, Seagate, and TwoSigma)
  \fi
  and National Key
  \revision{\conffulldifferent{R\&D}{Research \& Development}}
  Program of China (2020YFC1522702).
\end{acks}

\bibliographystyle{ACM-Reference-Format}
\bibliography{ref}

\end{document}